\documentclass[12pt]{article}
\usepackage{amsmath, amsthm}
\usepackage{graphicx}
\usepackage{enumerate}
\usepackage{natbib}
\usepackage{url} 
\usepackage{amssymb,amsfonts}
\usepackage{subcaption}
\usepackage{xcolor}
\usepackage{mathrsfs}
\usepackage{float}
\usepackage{mathtools}
\usepackage{booktabs}
\newcommand{\blind}{0}
\usepackage[hidelinks]{hyperref}
\usepackage[nameinlink]{cleveref}
\creflabelformat{equation}{#1#2#3}
\crefname{equation}{eq.}{eqs.}
\Crefname{equation}{Eq.}{Eqs.}
\crefname{figure}{fig.}{figs.}
\Crefname{figure}{Fig.}{Figs.}
\Crefname{table}{Tab.}{Tabs.}
\crefname{table}{tab.}{tabs.}
\crefname{appendix}{app.}{apps.}
\Crefname{appendix}{App.}{Apps.}
\Crefname{sec}{Section}{Sections}
\Crefname{subsec}{Subsection}{Subsections}
\Crefname{chapt}{Chap.}{Chapters}
\crefname{thm}{thm.}{thms.}
\Crefname{thm}{Thm.}{Thms.}
\Crefname{lem}{Lem.}{Lems.}
\Crefname{cor}{Cor.}{Cors.}
\Crefname{prop}{Prop.}{Props.}
\Crefname{ass}{Assumption}{Assumptions}

\newcommand{\supp}{\mbox{supp}}

\newcommand{\be}{\begin{equation}}
\newcommand{\ee}{\end{equation}}
\newcommand{\ben}{\begin{equation*}}
\newcommand{\een}{\end{equation*}}
\newcommand{\bea}{\begin{eqnarray}}
\newcommand{\eea}{\end{eqnarray}}
\newcommand{\bean}{\begin{eqnarray*}}
\newcommand{\eean}{\end{eqnarray*}}
\newcommand{\bi}{\begin{itemize}}
\newcommand{\ei}{\end{itemize}}
\newcommand{\bc}{\begin{center}}
\newcommand{\ec}{\end{center}}
\newcommand{\bcf}{\begin{figure}\begin{center}}
\newcommand{\ecf}{\end{center}\end{figure}}

\newcommand{\E}{\mathbb{E}}

\newcommand{\R}{\mathbb{R}}

\newcommand{\bfbb}{\mbox{\bf I\kern-.45em B}}

\newcommand{\bfbeta}{\boldsymbol{\beta}}

\newcommand{\calb}{{\mathcal B}}

\newcommand{\calx}{{\mathcal X}}
\newcommand{\caly}{{\mathcal Y}}

\newcommand{\calp}{{\mathcal P}}
\newcommand{\calf}{{\mathcal F}}

\newcommand{\calg}{{\mathcal G}}

\newcommand{\calm}{{\mathcal M}}

\newcommand{\calt}{{\mathcal T}}

\newcommand{\argmin}{\mbox{argmin}}

\newcommand{\leftarray}{\left\{\begin{array}{lll}}
\newcommand{\rightarray}{\end{array}\right.}

\numberwithin{equation}{section}

\newtheorem{prop}{Proposition}[section]
\newtheorem{theorem}{Theorem}[section]
\newtheorem*{theorem*}{Theorem}

\newtheorem{lemma}[theorem]{Lemma}
\newtheorem{corollary}[theorem]{Corollary}
\newtheorem*{corollary*}{Corollary}
\newtheorem{ass}{Assumption}[section]

\theoremstyle{remark}
\newtheoremstyle{boldexample}
  {.5\topsep}{.5\topsep}
  {\normalfont}{0pt}
  {\bfseries}{.}{5pt plus 1pt minus 1pt}
  {\thmname{#1}\thmnumber{ #2}\thmnote{ (#3)}}
\theoremstyle{boldexample}
\newtheorem{definition}[theorem]{Definition}
\newtheorem*{example}{Example}
\newtheorem*{remark}{Remark}
\theoremstyle{remark}

\graphicspath{{Figs/}}

\usepackage{libertine}
\usepackage{cancel}

\usepackage[all]{nowidow}

\begin{document}

\def\spacingset#1{\renewcommand{\baselinestretch}%
{#1}\small\normalsize} \spacingset{1}

\if1\blind
{
  \title{\bf Title}
  \author{Yidan Xu\\
    Department of Statistics, University of Michigan\\
    yidanxu@umich.edu\\
    and\\
    Yixin Wang\\
    Department of Statistics, University of Michigan\\
    yixinw@umich.edu\\
    and\\
    XuanLong Nguyen\\
    Department of Statistics, University of Michigan\\
    xuanlong@umich.edu}
  \maketitle
} \fi

\if0\blind
{
  \title{Stable Causal Estimation with Optimal Transport}
  \author{
    Yidan Xu, Yixin Wang, and XuanLong Nguyen\\
    {\normalsize Department of Statistics, University of Michigan}\\
    {\normalsize\texttt{\{yidanxu, yixinw, xuanlong\}@umich.edu}}
  }
  \date{}
  \maketitle
} \fi

\begin{abstract}
\noindent
In observational studies, the overlap assumption, also called positivity, is frequently strained in practice. Limited overlap destabilizes widely used estimators such as inverse propensity weighting (IPW), whose variance depends on density ratios that explode as propensity scores approach zero or one. To address this challenge, we propose the \emph{Stable Transport Estimator (STE)}, an optimal transport-based approach for estimating average causal effects when overlap holds but strict overlap fails. 
Unlike IPW, the variance of STE contains no density-ratio factor; it is governed by the transport map itself and stays finite under bounded moments, even when density ratios diverge.
By replacing reweighting with a geometry-aware transport map, STE constructs stable counterfactual comparisons without requiring density estimation or propensity score modeling. STE further extends to settings where the overlap condition is violated: when covariate supports overlap only partially, a linear interpolation transport map enables principled extrapolation with explicit bias-variance trade-offs. We establish consistency, asymptotic normality, and a variance characterization that makes explicit the contribution of the estimated transport map to the asymptotic variance. Across synthetic and real datasets, STE demonstrates improved robustness, reduced bias, and markedly more stable variance compared to existing estimators.
\end{abstract}

\noindent%
{\it Keywords:}  
Observational Studies, Limited Overlap, Positivity, Transport Map, Outcome Regression, Displacement Interpolation

\section{Introduction}
\label{sec:introduction}

Causal inference from observational data is central to policy evaluation in
economics, public health, and social science \citep{rubin1974estimating,holland1986statistics,pearl2009causality,
rosenbaum2002observational,imbens2015causal}.  When randomized experiments are
infeasible, identification typically relies on three assumptions: SUTVA (Stable Unit Treatment Value Assumption), unconfoundedness (also called ignorability), and overlap (also called positivity) \citep{rubin1974estimating,rosenbaum1983central,imbens2015causal}.  Among these, overlap is the assumption most frequently strained in practice.  Treated and control units may share the same support in principle, yet their covariate distributions can be so far apart that standard estimators become numerically unstable and exhibit exploding variance \citep{crump2009dealing, khan2010irregular, petersen2012diagnosing, damour2021overlap, rothe2017robust}. 
Two examples below illustrate the two regimes of overlap failure that this paper addresses.

\textbf{Two canonical failure cases.} 
\Cref{fig:meps} illustrates a near-structural failure in the Medical
Expenditure Panel Survey (MEPS) \citep{meps}. After excluding disability-based enrollees,
virtually every person aged 65 and older is covered by Medicare, leaving only a
small and highly atypical group without coverage. The covariate distributions of the two groups are therefore nearly disjoint: there are essentially no non-Medicare analogs for a typical 65-and-older beneficiary. Although overlap may hold formally, propensity scores are so close to $0$ or $1$ for most treated units that the density ratios of the distributions associated with the two groups are effectively unbounded. The causal estimand of primary interest is the Average Treatment Effect on the Treated (ATT), which measures how outcomes for Medicare beneficiaries would differ absent coverage. As a result, it cannot be credibly identified by inverse propensity weighting (IPW) \citep{horvitz1952generalization,hirano2003efficient} or by any other estimator that relies on finite pointwise density ratios.

A second, qualitatively different failure mode arises even when the two groups technically share the same covariate support.  In the LaLonde-PSID comparison \citep{lalonde1986evaluating,dehejia1999causal}, NSW trainees are predominantly young, low-income individuals with sparse prior labor-force histories, while PSID respondents are older, more educated, and have substantially higher and more stable earnings histories.  Although neither group is deterministically excluded from any region of covariate space, their empirical densities differ sharply. Some covariate regions are populated almost exclusively by treated units, others almost exclusively by controls, so propensity scores degenerate near $0$ or $1$ and IPW estimators become highly unstable. 

\begin{figure}
    \centering
    \includegraphics[width=\linewidth]{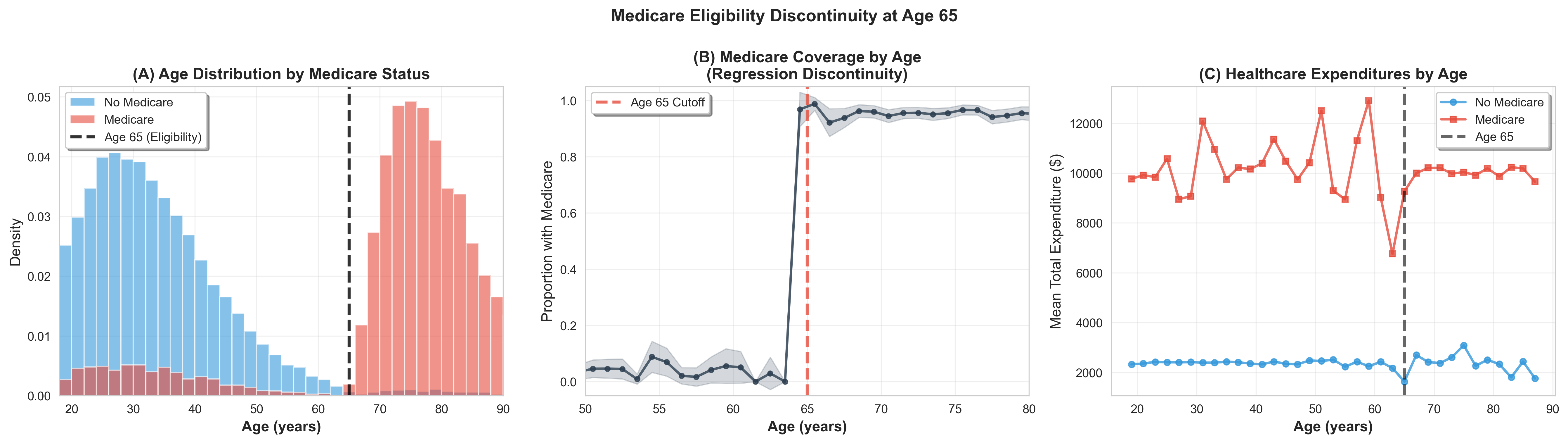}
    \caption{Medicare eligibility discontinuity at age 65 in MEPS dataset. The peak in Medicare-covered distribution below age 65 reflects a separate
    sub-population enrolled on the basis of disability or ESRD.}
    \label{fig:meps}
\end{figure}

\textbf{Why standard weighting breaks down.}
Both pathologies share a common root: the variance of IPW estimators scales with
the pointwise density ratio. Even when the overlap
assumption holds formally, strict overlap (\Cref{ass:strong_positivity})
frequently fails. Whenever the treated and control covariate distributions
diverge, because their supports barely overlap or because their tails vanish
at different rates, this ratio becomes arbitrarily large. When overlap is violated outright, as in MEPS, the ratio ceases to exist on part of the treated support. The resulting weights explode or become undefined, inflating variance and rendering inference sensitive to tail behavior, modeling choices, and small perturbations in the data \citep{busso2014new, damour2021overlap, rothe2017robust}.

\textbf{Stable causal estimation via optimal transport.}
Rather than reweighting control units to resemble treated units, we propose aligning covariate distributions through the geometry of the covariate space itself.  Optimal transport (OT) \citep{mccann1995existence, kantorovich1942translocation, rachev1998mass, villani2003topics, villani2009optimal} provides a natural set of tools for such alignment: instead of comparing treatment and control covariate distributions pointwise via density ratios, OT seeks the least costly way to move mass from one distribution to the other. When the transport cost is a power $p \geq 1$ of the Euclidean distance, this discrepancy is the Wasserstein-$p$ distance \citep{villani2009optimal}. It remains finite whenever $p$-th moments are bounded, even when density ratios explode. This is the key stability advantage we exploit. When strict overlap fails, the transport cost can remain well behaved even though IPW weights do not.

Motivated by this observation, we propose the \emph{Stable Transport Estimator (STE)}. The STE transports the control covariate distribution toward the treated distribution, and then imputes untreated potential outcomes via the transported control regression:
\begin{equation}
\label{eq:att_definition}
\begin{aligned}
\tau_{\text{ATT}} &= \mathbb{E}[Y(1)|A=1] - \mathbb{E}[Y(0)\mid A=1]\\
&= \mathbb{E}[Y|A=1] - \mathbb{E}_{X\sim P(X\mid A=0)}\bigl[\mathbb{E}[Y\mid {T}(X),A=0]\bigr],
\end{aligned}
\end{equation}
where we denote $A$ as the treatment assignment, $X$ as the observed covariates, $Y$ as the outcome, and $T$ as the transport map from controls to treated units satisfying $T_{\sharp} P_X^0 = P_X^1$. 
Crucially, STE involves neither propensity scores nor density ratios; it evaluates the outcome regression at transported control covariates, assigning each control unit a geometry-aware counterfactual position in the treated covariate space.  For $T$ to exist and be well defined, we require only finite transport cost, a bounded-moment condition, rather than bounded density ratios.  When the outcome regression is Lipschitz, the composition with $T$ varies smoothly, so the estimator avoids the numerical instability of multiplying outcomes by large weights. STE is therefore valid throughout the strict-overlap-failure regime, as in the LaLonde example, where overlap holds but the density ratio is unbounded.

\textbf{Extrapolation under overlap violation.} The transport approach also suggests an immediate extension to the harder regime where the overlap condition is violated (\Cref{def:lack_overlap_population}).
When the treated and control supports overlap only partially, transporting all the way to the target population forces severe extrapolation of the outcome regression into unseen covariate regions. We therefore introduce an interpolating proxy distribution, $P_X^{\eta} = (\eta\,\mathrm{Id} + (1-\eta) T)_{\sharp} P_X^0$, a linear interpolation along constant-speed geodesics in the Wasserstein space \citep{ambrosioGradientFlowsMetric2008}. 
The proxy coincides with the target at $\eta = 0$ and stays on the source support at $\eta = 1$, retaining overlap with the training distribution. Indexing the interpolation by the hyperparameter $\eta$ creates an explicit bias--variance trade-off: moving toward $\eta=0$ better approximates the target population at the cost of extrapolation, while moving toward $\eta=1$ reduces extrapolation error at the cost of residual covariate imbalance. The resulting algorithm provides a coherent way to reason about causal estimation even when standard overlap-based methods cease to apply.

\textbf{Contributions.}
This paper advances causal inference methodology along four dimensions.
\emph{(i)~Identification:} We develop a transport-based identification strategy that replaces density-ratio adjustment with geometry-aware change-of-variables under SUTVA, unconfoundedness, and overlap.
\emph{(ii)~Estimation with theoretical guarantees:} We propose the Stable Transport Estimator (STE) for ATT, ATC, and ATE estimation, and show that its variance contains no density-ratio factor. We prove consistency whenever the plug-in map and the outcome regression are $L_2$ consistent. We establish a central limit theorem under explicit conditions that characterize the contribution of the estimated transport map to the asymptotic variance. For parametric exact-map plug-ins, this contribution admits an exact variance identity.
\emph{(iii)~Principled extrapolation:} We extend STE to settings where overlap is violated through partial support overlap by introducing a linear interpolation along the transport path. The interpolated estimator is consistent under the same conditions as the STE, and satisfies a central limit theorem in which the interpolation bias is explicit to first order, estimable, and removable, with the asymptotic variance unchanged.
\emph{(iv)~Empirical evidence:} We demonstrate through extensive synthetic and real data empirical studies that STE delivers markedly more stable variance and improved robustness relative to standard alternatives under failure of strict overlap and violation of overlap.

\subsection{Related Work}

From unbounded density ratios to outright support mismatch, failures of overlap pose a fundamental challenge in observational study.

\textbf{Trimming and estimand redefinition.}
One natural response to overlap failures is to restrict the inferential target. Trimming and truncation methods discard units outside the region of common support \citep{crump2009dealing, lee2011weight}, while alternative estimand approaches redefine the causal quantity to reduce or eliminate dependence on the overlap assumption \citep{li2019addressing, kennedy2019nonparametric}. Although practically useful, both strategies alter the target population and thereby limit the ability to draw conclusions about the original population of interest.

\textbf{Balancing weights and population mixing.}
A separate line of work stabilizes propensity score weighting through balancing weight methods that constrain extreme weights arising from model misspecification \citep{ben2021augmented, hainmueller2012entropy, imai2014covariate}. While empirically effective in moderate settings, these methods primarily correct for propensity score model error rather than genuine overlap failures, and their constrained optimization procedures can become computationally infeasible when overlap is severely violated. More recently, \cite{jang2024mixing} proposed constructing a synthetic population by mixing the treated and control distributions through convex combinations, improving overlap and enabling more efficient inverse propensity weighting. Although conceptually related to our work in that both approaches construct alternative distributions with better overlap properties, the two approaches differ fundamentally: \cite{jang2024mixing} rely on population mixing, whereas we employ optimal transport maps that preserve the metric structure of the covariate space. This distinction enables us to provide formal bias-variance characterizations under failure of strict overlap and to extend to settings where no mixing proportion yields adequate overlap.

\textbf{Optimal transport in causal inference.}
A growing literature applies optimal transport to causal inference, though with objectives distinct from ours. \cite{gunsilius2021matching} developed a probabilistic matching estimator using unbalanced optimal transport plans to improve covariate balance in matched samples. \cite{torous2024optimal} employed the Brenier map to construct a changes-in-changes estimator for panel data under parallel trends assumptions. \cite{lin2023causal} proposed Wasserstein barycenter methods for distributional treatment effect estimation, where the estimand is itself a probability distribution. These contributions address matching, panel data, and distributional estimands, respectively, rather than the overlap problem in cross-sectional observational studies.

\textbf{Our work.}
In contrast to all of the above, our work uses transport maps directly to define identification strategies and construct bias-corrected estimators under failure of strict overlap and, in its extension, under violation of overlap. By leveraging the metric structure of the covariate space, our approach improves estimation stability, reduces bias and variance, and broadens the applicability of causal inference in settings where existing methods either fail or require abandoning the original estimand.

\subsection{Organization}

The remainder of the paper is organized as follows. \Cref{sec:identification} introduces the formal setup and assumptions, distinguishes overlap, strict overlap, and violation of overlap, and revisits why standard weighting estimators become unstable as overlap deteriorates. \Cref{sec:monge_OR} develops the transport-based identification strategy and the Stable Transport Estimator. \Cref{sec:theory} establishes its theoretical properties: an exact bias and variance decomposition, consistency, and asymptotic normality with an explicit contribution of the estimated transport map to the variance. \Cref{sec:interpolation} extends STE to the overlap-violation setting through interpolation along the transport path. \Cref{sec:sim} presents simulations and empirical illustrations, and \Cref{sec:conclusion} concludes.

\section{From Density Ratios to Transport Maps: Covariate Adjustment under Limited Overlap}
\label{sec:identification}
\subsection{The Overlap Condition in Causal Inference}
 
We consider the standard observational causal inference setup with binary treatment. Given i.i.d.\ draws
$$
(Y_i(1),Y_i(0),A_i,X_i)\sim \mathcal P,
$$
where $A_i\in\{0,1\}$ is the treatment indicator, $X_i\in\mathcal{X}$ is the covariate vector, and $Y_i(1),Y_i(0)$ are the potential outcomes. We take $\mathcal{Y}\subset\mathbb{R}$ and $\mathcal{X}$ to be a $d$-dimensional complete separable metric space (i.e., Polish space). We suppress the index $i$ when no confusion arises. 
The realized outcome is linked to the potential outcomes through SUTVA: $Y = AY(1)+(1-A)Y(0)$.

Identifying average causal effects from observational data often requires three assumptions: Stable Unit Treatment Value Assumption (SUTVA), unconfoundedness, and overlap \citep{rosenbaum1983central, imbens2015causal}.

\begin{ass}[Stable Unit Treatment Value Assumption (SUTVA)]
\label{ass:SUTVA}
SUTVA assumes no interference between units and that the treatment applied is consistent across all units.
\end{ass}
 
\begin{ass}[Unconfoundedness (a.k.a.\ Ignorability)]
\label{ass:confoundedness}
The potential outcomes are independent of treatment assignment conditional on covariates:
$$
Y(1),Y(0)\perp A\mid X.
$$
\end{ass}
 
\begin{ass}[Overlap (a.k.a.\ Positivity)]
\label{ass:positivity}
Every unit has a strictly positive probability of receiving either treatment level, conditional on its covariates:
$$
0<e(x):= \mathcal P(A=1\mid X=x)<1,\qquad x\in\mathcal{X},
$$
where $e(x)$ denotes the propensity score.
\end{ass}

Overlap must be distinguished from the stronger condition of \emph{strict overlap}, sometimes called \emph{strong overlap}, which requires the propensity score to be uniformly bounded away from~$0$~and~$1$.

\begin{ass}[Strict Overlap]
\label{ass:strong_positivity}
There exist constants $0<\alpha<\beta<1$ such that
$$
\alpha<e(x)<\beta,\qquad x\in\mathcal X.
$$
\end{ass}

Overlap is sufficient for identification, but strict overlap is what many weighting arguments effectively rely on for stable estimation. In practice, overlap often comes close to failing, and this typically manifests as failure of strict overlap: propensity scores approach $0$ or $1$, and hence propensity-score weights can become arbitrarily large, even though overlap may still hold formally. This is precisely the regime where classical inverse-probability weighting becomes unstable \citep{busso2014new,damour2021overlap}.

To see this, we first introduce the treated and control covariate distributions
$$
P_X^1:= \mathcal P(X\mid A=1),\qquad P_X^0:=\mathcal P(X\mid A=0),
$$
and the conditional mean functions
$$
h_a(x):=\mathbb E[Y\mid X=x,A=a],\qquad a\in\{0,1\}.
$$
Under \Cref{ass:SUTVA}-\Cref{ass:positivity}, the average treatment effect on the treated (ATT) can be written as
$$
\begin{aligned}
\tau_{\mathrm{ATT}}
&=
\mathbb E[Y(1)\mid A=1]-\mathbb E[Y(0)\mid A=1]\\
&=
\mathbb E_{P_X^1}[\E[Y \mid X=x, A=1]]-\mathbb E_{P_X^1}[\E[Y \mid X=x, A=0]]\\
&=
\mathbb E_{P_X^1}[h_1(X)]-\mathbb E_{P_X^1}[h_0(X)].
\end{aligned}
$$
The second term in the last line poses the key difficulty: the counterfactual mean $h_0(x) := \E[Y \mid X=x, A=0]$ must be evaluated under the treated covariate distribution $P_X^1$, not the control distribution $P_X^0$ under which the control outcomes are naturally observed. This requires a change of measure from $P_X^0$ to $P_X^1$.

\subsection{IPW as a Change of Measure via Density Ratios}

A standard solution to estimate the second term is inverse-probability weighting (IPW), which performs this change of measure through a pointwise density ratio, reweighting the control distribution $P_X^0$, under which the outcomes are observed, toward the treated distribution $P_X^1$, under which the expectation is required. Assuming both $P_X^1$ and $P_X^0$ admit densities $p_X^1$ and $p_X^0$ with respect to a common dominating measure, the second term can be rewritten as
$$
\mathbb E_{P_X^1}[h_0(X)]
=
\mathbb E_{P_X^0}\!\left[h_0(X)\frac{p_X^1(X)}{p_X^0(X)}\right].
$$
By Bayes' rule, the density ratio can be expressed in terms of the propensity score:
$$
\frac{p_X^1(x)}{p_X^0(x)}
=
\frac{e(x)(1-r)}{(1-e(x))r},
\quad \text{where } r:=\mathcal P(A=1).
$$
so that IPW upweights precisely the control units whose covariates are more representative of the treated population.

The validity of IPW depends on the density ratio being pointwise well-defined and finite. When the density ratio $p_X^1/p_X^0$ is unbounded, or, equivalently, when the propensity score $e/(1-e)$ approaches $1$, IPW weights become unstable and the resulting estimator suffers from large or infinite variance. Crucially, this instability can arise even when the formal overlap assumption (\Cref{ass:positivity}) holds. For instance, two distributions may have common support over all of $\R$, so that $e(x) \in (0,1)$ everywhere, yet the density ratio $p_X^1/p_X^0$ can still grow without bound in the tails. A small number of observations then receive enormous weights, variance inflates, and estimation becomes highly sensitive to tail behavior, model misspecification, and finite-sample noise \citep{busso2014new,damour2021overlap}. This is precisely the failure of strict overlap (\Cref{ass:strong_positivity}), a regime in which IPW-based estimators become unstable in practice.

The density ratio also makes the assumption hierarchy transparent: by the Bayes relation above, $p_X^1/p_X^0$ is finite a.e. exactly when overlap holds (\Cref{ass:positivity}), and is in addition uniformly bounded exactly when strict overlap holds (\Cref{ass:strong_positivity}). The two ways this degrades give the two modes of overlap failure below: the ratio may be finite almost everywhere but unbounded, so that overlap holds while strict overlap fails, or it may be undefined, so that overlap itself fails.

\begin{definition}[Violation of overlap]\label{def:lack_overlap_population}
Overlap (\Cref{ass:positivity}) is \emph{violated} if, for some $a \in \{0,1\}$, the ratio
$\dfrac{\mathcal P(\mathrm{d}x \mid A=a)}{\mathcal P(\mathrm{d}x \mid A \neq a)}$
is not well-defined; that is, $\mathcal P(\cdot \mid A=a) \not\ll \mathcal P(\cdot \mid A \neq a)$,
so there is a set $B$ with $\mathcal P(B \mid A=a)>0$ but $\mathcal P(B \mid A\neq a)=0$,
for example when the supports are disjoint or only partially overlapping.
\end{definition}

\Cref{def:lack_overlap_population} captures the case where overlap fails, for example when discrete covariate distributions have disjoint support.

\begin{definition}[Near-violation of overlap / failure of strict overlap]\label{def:lack_overlap_practical}
Suppose overlap (\Cref{ass:positivity}) holds, so the ratio
$\dfrac{\mathcal P(\mathrm{d}x \mid A=a)}{\mathcal P(\mathrm{d}x \mid A \neq a)}$
is well-defined and finite a.e. Fix a constant $C>0$. Overlap is \emph{$C$-nearly violated} if, for
some $a \in \{0,1\}$,
$$
\operatorname*{ess\,sup}_{x \in \calx} \frac{\mathcal P(\mathrm{d}x \mid A=a)}{\mathcal P(\mathrm{d}x \mid A \neq a)} > C,
$$
including the limiting case where this supremum is $+\infty$.
\end{definition}

\Cref{def:lack_overlap_practical} quantifies how close the design is to violating strict overlap: the larger the constant $C$ for which the ratio is $C$-nearly violated, the more severe the near-violation. Strict overlap
(\Cref{ass:strong_positivity}) requires the density ratio to admit a finite uniform bound.
It fails precisely in the limiting case of \Cref{def:lack_overlap_practical}, in which
the ratio exceeds every threshold $C$ on a set of positive probability, so that
this essential supremum equals $+\infty$.

\subsection{Transport Maps as an Alternative Change of Measure}
\label{subsec:transport-maps}

Taking a step back, IPW is just one way to perform the change of measure from $P_X^0$ to $P_X^1$: it reweighs each control unit at $x$ by the multiplicative factor $p_X^1(x)/p_X^0(x)$, measuring the discrepancy between the two distributions pointwise through the density ratio function.

But there are other ways to change variables and measure distances between distributions. Rather than comparing $P_X^1$ and $P_X^0$ pointwise, one can move mass from one distribution to the other using a transport map, and measure the discrepancy by the cost of this transport. To this end, we formally define the measurable space $(\calx,\calb(\calx))$, and let $\calp_{p}(\calx)$ be the space of probability measures with finite $p$-th moments, $\calp_{\mathrm{ac}}$ the space of probability measures that are absolutely continuous w.r.t. the Lebesgue measure.

Specifically, a \emph{Monge map} (or optimal transport map) $T : \calx \to \calx$ \citep{brenier1991polar, gangbo1996geometry, villani2009optimal} is a deterministic map such that if $X \sim P_X^0$, then $T(X) \sim P_X^1$; that is, $T$ pushes forward $P_X^0$ to $P_X^1$, written $T_{\sharp} P_X^0 = P_X^1$, and minimizes a cost of transportation. By Brenier's theorem when the transport cost for an unit of mass given by $c(x,y) = \|x - y\|^p$, for $p> 1$, and provided that $P_X^0$ is absolutely continuous w.r.t. the Lebesgue measure, the Monge map exists uniquely and minimizes
$$
\inf_{T:\, T_{\sharp} P_X^0 = P_X^1} \Big(\int \|x - T(x)\|^p \, \mathrm{d}P_X^0(x)\Big)^{1/p},
$$
which defines the Wasserstein-$p$ distance $W_p(P_X^0, P_X^1)$ \citep{villani2009optimal}.

The crucial observation is that the Wasserstein distance and the transport map that achieves it do not require the existence or boundedness of the density ratio $p_X^1/p_X^0$. It requires only that the distributions have finite $p$-th moments. Moreover, the transport cost remains bounded even when the density ratio explodes, making transport-based methods naturally robust to violations and near-violations of overlap.

\begin{remark}[IPW as a degenerate local transport]
\label{rem:ipw-special-case}
We can put the two methods into the same perspective of correcting covariate shift by a density ratio. IPW compares $p_X^1$ and $p_X^0$ at the same point $x$ and reweights control units in place. Under conditions where the Monge map and its inverse exist, by Monge--Amp\`ere theorem \citep{villani2009optimal}, the Monge map $T$ with inverse $S$ with $S_{\sharp} P_X^1 = P_X^0$ instead compares $p_X^1(x)$ against $p_X^0$ at the preimage $S(x)$, moving mass rather than rescaling it.
At the population level the two estimands coincide. We clarify the connection in \Cref{app:ipw-connection}, \Cref{prop:monge-ipw}. The per-unit weights, however, coincide only when the map is the identity. This happens only in the degenerate case of no covariate shift. Recovering IPW under genuine shift requires two ingredients: a localizing cost that suppresses transport, and a relaxation of mass conservation on the control side. In that limit the transport weight reduces to the density ratio exactly (\Cref{prop:entropic-ipw}\eqref{it:recover}).
Therefore IPW is a special case of the transport formulation. Where overlap holds, the localized coupling stays on the diagonal and the two procedures agree. Where overlap fails, there is no nearby control mass, so the diagonal coupling is infeasible. Transport remains well defined exactly where the density ratio $p_X^1/p_X^0$ diverges, and in this sense it strictly generalizes IPW.
\end{remark}

Before developing how transport maps can be used for causal identification and estimation in subsequent sections, we illustrate this contrast through concrete examples.

\subsection{Examples: Density ratios versus transport maps}
\label{sec:examples}

The examples below illustrate two distinct regimes. In the first, overlap still holds but is nearly violated, so strict overlap fails and density ratios can be arbitrarily large. In the second, overlap itself fails, so density-ratio adjustment is not even well defined. In both regimes, transport-based discrepancies remain well defined and stable.

\begin{figure}[t]
    \centering
    \includegraphics[width=\linewidth]{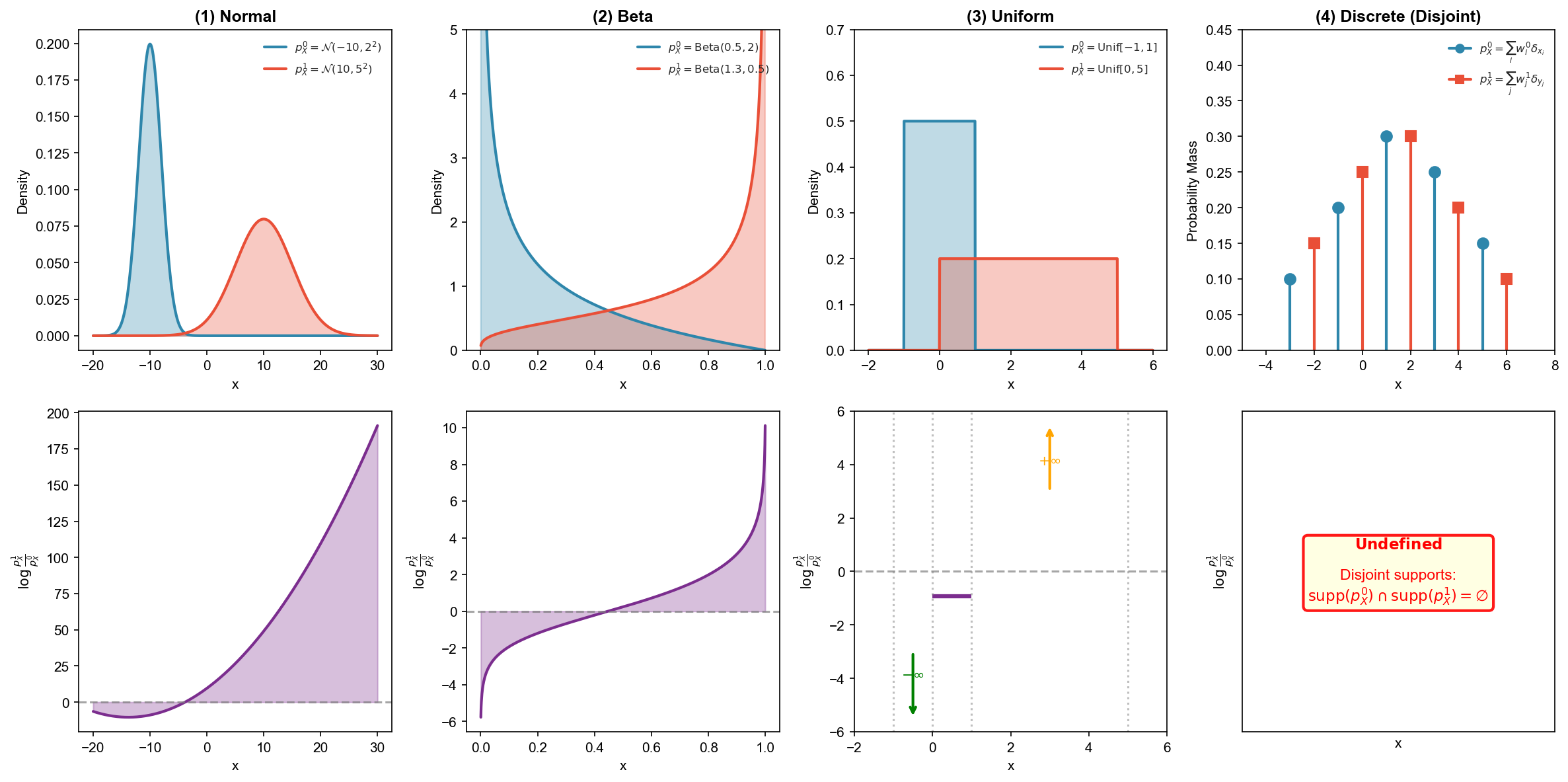}
    \caption{\footnotesize Examples illustrating violations of overlap. The first two columns satisfy overlap but nearly violate it in the sense of Definition \ref{def:lack_overlap_practical}; they violate strict overlap. The last two columns violate overlap entirely. Top row: covariate distributions. Bottom row: log density ratio $dP_X^1/dP_X^0$. The density ratio can become extremely large or undefined, while transport-based discrepancies remain well defined.}
    \label{fig:examples}
\end{figure}

\begin{example}[Near violation of overlap: Normal distributions]
\label{exp:vio_overlap}
Let $P_X^0=N(-10,2^2)$ and $P_X^1=N(10,5^2)$. Both distributions have full support on $\mathbb{R}$, so the formal overlap condition holds. However, the density ratio grows without bound in the tails: $\sup_{x\in\mathbb{R}}\, p_X^1(x)/p_X^0(x) = \infty$. Even at moderate values such as $x=30$, the ratio reaches ${\sim}10^{83}$. This is a near-violation of overlap: strict overlap fails, and IPW weights become extremely unstable for units in the tails even though overlap holds.

By contrast, because both distributions are Gaussian, the Monge map from $P_X^0$ to $P_X^1$ is simply an affine transformation, and the Wasserstein distance $W_2(P_X^0, P_X^1)$ is a closed-form function of the means and variances, finite and well-behaved regardless of the tail behavior of the density ratio. Thus the transport discrepancy remains controlled exactly where density-ratio adjustment becomes unstable.
\end{example}

 \begin{example}[Near violation of overlap: Beta Distributions]
\label{exp:unbounded}
Consider $P_X^0 = \mathrm{Beta}(0.5, 2)$ and $P_X^1 = \mathrm{Beta}(1.3, 0.5)$. As shown in \Cref{fig:examples}, the density ratio $p_X^1(x)/p_X^0(x) \to \infty$ as $x \to 1^-$. Near the right boundary, the ratio takes extreme values, causing IPW weights to blow up. This is a near-violation of overlap: strict overlap fails despite the density ratio being formally finite on the interval $(0,1)$.
 
On the other hand, because both distributions are supported on the compact set $[0,1]$, the transport cost $c(x,y) = |x-y|^2$ is uniformly bounded, and the Wasserstein distance remains finite. Transport-based adjustment thus remains well-defined and stable even where the density ratio~diverges.
\end{example}

\begin{example}[Violation of overlap: Uniform distributions]
Denote the following:
$$
P_X^0=\mathrm{Unif}[-1,1],\qquad P_X^1=\mathrm{Unif}[0,5].
$$
The supports overlap only on $[0,1]$; outside this interval the density ratio is undefined. Hence overlap fails and IPW cannot identify the ATT through density-ratio reweighting on the full treated population.

Nevertheless, both distributions have finite moments on compact supports, so the Wasserstein distances are well-defined and the corresponding Monge map exists. This illustrates that transport-based constructions continue to make sense even when the standard density-ratio adjustment breaks down completely.
\end{example}

\begin{example}[Violation of overlap: Discrete measures]
Consider
$
P_X^1=\sum_{i=1}^5 p_i\delta_{x_i},
\qquad
P_X^0=\sum_{i=1}^5 q_i\delta_{x_i+1}.
$
These two discrete distributions have disjoint supports. The density ratio is therefore undefined everywhere on the support of either distribution, so IPW is unavailable.

Transport-based discrepancies, however, remain well-defined: one may still transport mass from each support point of $P_X^0$ to points in the support of $P_X^1$, and the resulting transport cost is finite. This makes the contrast with density-ratio methods particularly stark in discrete settings.
\end{example}

These examples illustrate clearly that the density ratio can become unbounded or undefined even in simple settings, rendering IPW-based estimators unstable or infeasible. In contrast, Wasserstein distances and the associated transport mappings remain well-defined and bounded whenever the distributions have finite moments, without requiring any form of strict overlap. This motivates the transport-based approach to covariate adjustment that we develop in the remainder of the paper.

\section{Stable Causal Estimation via Optimal Transport}
\label{sec:monge_OR}

Having established that transport maps provide a well-defined change of measure even when density ratios become unbounded or undefined, we now turn to their use for causal identification and estimation. We focus on the deterministic transport map, whose plug-in estimator based on empirical distributions is stable and converges asymptotically to the population-level map as the sample size tends to infinity \citep{deb2021rates, manole2021plugin, goldfeld2024limit, barrio2023improved}. In this section, we establish the identifiability conditions required by the transport map approach and introduce the Stable Transport Estimator (STE) for the ATT, ATC and ATE. Its theoretical properties are developed in \Cref{sec:theory}.

\subsection{Identification via Transported Covariates}

We begin with the ATT, the Average Treatment Effect on the Treated. 
To correct the covariate shift between the control and treated populations, we use the Monge map of \Cref{subsec:transport-maps}, here taken in the direction that pushes the control covariate distribution forward to the treated one:
$$
T = \arg\inf_{T:  T_{\sharp} P_X^0 = P_X^1} \int c\bigl(x, T(x)\bigr)  dP_X^0(x).
$$
By Brenier's theorem, recalled in \Cref{subsec:transport-maps}, this map exists, is unique, and satisfies $T_{\sharp}  P_X^0 = P_X^1$ when $P_X^0 \in \calp_{p,\mathrm{ac}}(\calx)$ and $P_X^1 \in \calp_p(\calx)$.

\textbf{Existence of the Monge map.}
The existence requirement for the Monge map is mild. It suffices that both distributions have finite $p$-th moments and that $P_X^0$ admits a density. We take the cost $c(x,y) = \|x - y\|^p$ with the Euclidean norm and $p > 1$; the exponent $p>1$ makes the cost strictly convex, which is what Brenier's theorem requires. Under this cost, existence follows once the Wasserstein distance is finite,
$$
W_p(P_X^0, P_X^1)
 \leq 
\Big(\int \|x\|^p  \mathrm{d}P_X^0(x)\Big)^{1/p} + \Big(\int \|x\|^p  \mathrm{d}P_X^1(x)\Big)^{1/p}
 <  +\infty,
$$
or equivalently $P_X^0 \in \calp_{p,\mathrm{ac}}(\calx)$ and $P_X^1 \in \calp_p(\calx)$. Under this assumption the product coupling $P_X^0 \otimes P_X^1$ has finite cost, which guarantees existence. The general theory for Monge maps under other norms on $\R^d$ is discussed in \citet{champion2011monge}.

From now on we fix $p = 2$. Throughout the rest of the paper, $\calx \subseteq \R^d$ and the cost is squared Euclidean, $c(x,y) = \|x - y\|_2^2$. A brief recapitulation of optimal transport is provided in \Cref{sec:brief_recapitulation_to_optimal_transport}.

\textbf{Identification formula.}
Applying the change of variables $T_{\sharp}P_X^0=P_X^1$ to the ATT formula yields
$$
\begin{aligned}
    \tau_{ATT} &= \E_{P_X^1}[h_1(X)] - \E_{T_{\sharp}P_X^0}[h_0(X)]\\
    &= \E_{P_X^1}[h_1(X)] - \E_{P_X^0}[h_0(T(X))],\\
\end{aligned}
$$
where $h_a(x) := E[Y |X= x,A= a].$
The second line replaces integration under the treated covariate distribution with integration under the control covariate distribution, 
composed with the transport map. 

\textbf{Support overlap.}
For the transported identification formula to be valid, we need the conditional expectation $\E[Y \mid X = T(x),  A=0]$ 
to be well-defined. Since $T$ maps $\mathrm{supp}(P_X^0)$ into $\mathrm{supp}(P_X^1)$ for $P_X^0 \in \calp_{p,\mathrm{ac}}(\calx)$, 
this requires the outcome model under control to be evaluable at points in the treated support.
We formalize this requirement as follows.

\begin{ass}[Support Overlap]
\label{ass:supp_overlap}
For $P_X^0, P_X^1 \in \calp_{p,\mathrm{ac}}(\calx)$,
$$
\mathrm{supp}(P_X^0) = \mathrm{supp}(P_X^1).
$$
\end{ass}

For the ATT alone, only the inclusion $\mathrm{supp}(P_X^1) \subseteq \mathrm{supp}(P_X^0)$ is needed, so that $h_0$ is well-defined on the treated support. The ATC symmetrically needs only the reverse inclusion, and the ATE hence needs both inclusions. The equality in \Cref{ass:supp_overlap} covers all three estimands. This is exactly the overlap condition (\Cref{ass:positivity})
expressed in terms of supports.
Identification therefore requires no more than the overlap already assumed for IPW.
Stable IPW estimation, in contrast, relies on the strictly stronger condition of strict overlap (\Cref{ass:strong_positivity}),
which further demands a uniformly bounded density ratio.
Support overlap places no constraint on the relative magnitudes of the two densities, 
only that they share the same support. 
If the supports do not overlap, the regression model $h_0$ would need to extrapolate to covariate regions unobserved under control, 
and the validity of such extrapolation is not justifiable unless additional assumptions are made.
We defer to tackle the lack of overlap problem in \Cref{sec:interpolation}.

\textbf{Average Treatment Effect on the Control (ATC).} 
The same construction extends to other estimands like ATC. Let $S$ denote the inverse Monge map pushing $P_X^1$ back to $P_X^0$, $S_{\sharp} P_X^1=P_X^0$, which exists and is unique under the same conditions, since \Cref{ass:supp_overlap} requires $P_X^1$ to be absolutely continuous with finite $p$-th moments as well. Under \Cref{ass:SUTVA,ass:confoundedness,ass:supp_overlap}, the ATC is identified as
$$
\begin{aligned}
    \tau_{ATC} &=\E[Y(1) \mid A=0] - \E[Y(0) \mid A=0]\\
    &= \E_{S_{\sharp}P_X^1}[h_1(X)] - \E_{P_X^0}[h_0(X)]\\
    &= \E_{P_X^1}[h_1(S(X))] - \E_{P_X^0}[h_0(X)].
\end{aligned}
$$

\textbf{Average Treatment Effect (ATE).} The target population for the ATE is the entire population $P_X \in \mathcal{P}_{p,\mathrm{ac}}(\calx)$. Using both the transport map $T$ and its inverse $S$, and denoting $r = P(A=1)$, the ATE is identified under the assumptions for both ATT and ATC as
$$
\begin{aligned}
    \tau_{ATE} &=\E[Y(1)] - \E[Y(0)]\\
    &= \mathbb{E}[Y(1) \mid A=1]r + \mathbb{E}[Y(1) \mid A=0](1-r)\\
    &\quad - \mathbb{E}[Y(0) \mid A=1]r - \mathbb{E}[Y(0) \mid A=0](1-r) \\
    &= \mathbb{E}_{P_X^1}\mathbb{E}[Y\mid X, A=1]r + \mathbb{E}_{P_X^0}\mathbb{E}[Y \mid X,A=1](1-r)\\
    &\quad - \mathbb{E}_{P_X^1}\mathbb{E}[Y \mid X,A=0]r - \mathbb{E}_{P_X^0}\mathbb{E}[Y \mid X,A=0](1-r)\\
    &= \E_{P_X^1}[h_1(X)r + h_1(S(X))(1-r)] -
    \E_{P_X^0}[h_0(T(X))r + h_0(X)(1-r)]
\end{aligned}
$$

\subsection{Estimation with Potential Outcome Imputation}
We focus on the estimation of the ATT, as the same argument applies directly to both the ATC and the ATE.

From the identification formula derived above, we identify $\tau_{ATT}$ using
$$
\begin{aligned}
    \tau_{ATT} &= \mathbb{E}[Y(1) \mid A=1] - \mathbb{E}[Y(0) \mid A=1] \\
    &= \mathbb{E}[Y \mid A=1] - \E_{X\sim P_X^0}\mathbb{E}[Y \mid T(X), A=0]\\
    &:= \mathbb{E}_{P_Y^1}[Y] - \E_{P_X^0}[h_0 \circ T(X)],
\end{aligned}
$$
where we recall the notation $h_a(x) = \E[Y \mid X = x, A=a]$ for $a\in\{0,1\}$ and let $P^a_Y = P(Y \mid A=a)$. 

We call the resulting procedure the \emph{Stable Transport Estimator} (STE) because it replaces the density-ratio reweighting of IPW, which can be unstable when the propensity score is near zero, with the Monge map, whose plug-in estimator converges stably in $L_2$. We define two versions. The \emph{oracle} STE $\tau_{ATT}^{\ast}$, in which both $h_0$ and $T$ are known and only the statistical functionals are estimated from the sample, is
$$
\tau_{ATT}^{\ast} = \mathbb{E}_{P_{Y,n}^1}[Y] - \E_{P_{X,n}^0}[h_0 \circ T(X)],
$$
where we denote the empirical measure of covariates as $P_{X,n}^a:= \frac{1}{|\Omega_a|}\sum_{i\in\Omega_a}\delta_{X_i}$, and the empirical measure of observed outcome as $P_{Y,n}^a:= \frac{1}{|\Omega_a|}\sum_{i\in\Omega_a}\delta_{Y_i}$. 
The \emph{plug-in} STE $\hat{\tau}_{ATT}$, which substitutes estimated nuisance parameters $\hat{h}_0$ and $\hat{T}$, is
$$
\hat{\tau}_{ATT} = \mathbb{E}_{P_{Y,n}^1}[Y] - \E_{P_{X,n}^0}[\hat{h}_0\circ \hat{T}(X)].
$$
The plug-in STE accepts any consistent estimator $\hat T$ of the transport map. However, consistency and inference place different requirements on $\hat T$, which we discuss in \Cref{rem:plug-in-T}.

In simple settings, both nuisance parameters admit closed-form estimators. For instance, the following example assumes that the distributions are normal,
$P_X^1 = N(\mu_1,\Sigma_1)$ and $P_X^0= N(\mu_0, \Sigma_0)$. Between normal distributions, the optimal transport map can be characterized by a linear transformation.

\begin{example}[Gaussian Covariates and Linear Regression]
Consider the setting where $P_X^0 = N(\mu_0, \Sigma_0)$ and $P_X^1 = N(\mu_1, \Sigma_1)$. The transport map satisfying $T_{\sharp} P_X^0 = P_X^1$ has the closed form
$$
T(x) = \mu_1 + \Sigma_0^{-1 / 2}\left(\Sigma_0^{1 / 2} \Sigma_1 \Sigma_0^{1 / 2}\right)^{1 / 2} \Sigma_0^{-1 / 2}(x-\mu_0),
$$
which can be estimated via the plug-in estimators $\mu_a$ and $\Sigma_a$, i.e. $\hat{\mu}_a = \frac{1}{n_a}\sum_{i\in\Omega_a}X_i$, $\hat{\Sigma}_a = \frac{1}{n_a-1}\sum_{i\in\Omega_a}(X_i-\hat{\mu}_a)(X_i-\hat{\mu}_a)^T$:
$$
\hat{T}(x) = \hat{\mu}_1 + \hat{\Sigma}_0^{-1/2}\left(\hat\Sigma_0^{1 / 2} \hat\Sigma_1 \hat\Sigma_0^{1 / 2}\right)^{1 / 2}\hat{\Sigma}_{0}^{-1/2}(x-\hat{\mu}_0).
$$

When the outcome regression is linear in the covariates,
$$
h_0(x) = \mathbb{E}[Y \mid X=x, A=0] = \beta_0 + \beta_1^Tx:= \tilde{x}^T\beta,
$$
where $\tilde{x} = (1, x^T)^T \in \R^{d+1}$, and the observed outcome has additive independent Gaussian noise
$$
\quad Y_i = h_0(X_i) + \epsilon_i,\quad \epsilon_i \sim N(0,\sigma^2).
$$
Then $h_0$ admits the unbiased OLS estimator
$$
\hat{h}_0(x) = \tilde{x}^T \hat{\beta}, \qquad \hat{\beta} = (\tilde{X}_0^T \tilde{X}_0)^{-1} \tilde{X}_0^T Y_0,
$$
where $X_0\in\R^{n_0\times (d+1)}$ is the design matrix for the control group that stacks $\tilde{X}_i\in \Omega_0$ and $Y_0 \in \R^{n_0}$ the corresponding outcomes.
\end{example}
We revisit this example in \Cref{sec:role-of-map}. The Gaussian plug-in map satisfies the asymptotic-linearity conditions of \Cref{sec:theory}, and \Cref{cor:exact-map-cancellation} gives an exact identity for the contribution of the estimated map to the asymptotic variance.

\subsection{Cross-Fitting}\label{subsec:crossfit}

The STE requires first estimating the outcome regression $\hat{h}_a$ and the optimal map $\hat{T}$ before evaluating the statistical functional. Cross-fitting allows us to sidestep the bias that arises when the same data are used for both steps, by splitting the data into $K$ folds.

Partition the sample into $K$ equally sized folds $\Omega = \bigcup_{k=1}^K \Omega^{(k)}$ with $|\Omega^{(k)}| = n/K$ for all $k \in [K]$. For each fold $k$, the nuisance parameters $h_a$ and $T$ are estimated on the complement $\Omega^{(-k)} = \bigcup_{j \neq k} \Omega^{(j)}$, and the functional is evaluated on the held-out fold $\Omega^{(k)}$. As $K$ is fixed and finite with respect to the sample size, the cross-fitting procedure does not influence the asymptotics of the estimators.
 In the simplest case $K=2$, half of the sample is used for function estimation and the other half for functional evaluation.
 
The $k$-th fold ATT estimator is
$$
\hat{\tau}_{ATT}^{(k)} = \E_{P_{Y,n}^{1,(k)}}[Y] - \E_{P_{X,n}^{0,(k)}}\bigl[\hat{h}_0^{(-k)} \circ \hat{T}^{(-k)}\bigr],
$$
and the final estimator averages over all folds:
$$
\hat{\tau}_{ATT} = \frac{1}{K}\sum_{k=1}^K \hat{\tau}_{ATT}^{(k)},
$$
which reduces the variance relative to a single estimator computed on a smaller subsample. From this point on, $\hat{\tau}_{ATT}$ always denotes this cross-fitted estimator.

Cross-fitting serves a second purpose that is specific to the transport map. If the map or coupling is estimated on the same sample on which the functional is evaluated, the transport step has no effect on the estimator. Any feasible plan integrates the fitted regression to its treated-arm average, so the estimator reduces to the outcome-regression estimator on the treated arm. \Cref{prop:empirical-plan-cancellation} in \Cref{app:estimated-map-variance} makes this exact. The map must therefore be trained on held-out data for transport to contribute at all.

\section{Theoretical Properties of STE}
\label{sec:theory}

The identification formula of \Cref{sec:monge_OR} readily applies to any transport map: any $T$ with $T_{\sharp}P_X^0 = P_X^1$ yields
$\E_{P_X^0}[h_0(T(X))] = \E_{P_X^1}[h_0(X)]$. The estimator, however, is built from the plug-in composition $\hat{h}_0\circ \hat{T}$, and its bias and variance are governed by the scalar error of this composition integrated over fresh control covariates. In this section, we quantify how the estimated transport map enters that error, in four steps. \Cref{prop:honest-split-map-variance} gives an exact bias and variance decomposition under sample splitting. \Cref{thm:balance-error-regimes} derives an asymptotic expansion for the cross-fitted estimator and classifies its limiting regimes, from which consistency follows as \Cref{thm:att_consistency}. \Cref{sec:clt-conditions} states the conditions under which the map contribution is asymptotically linear and establishes the central limit theorem. \Cref{sec:role-of-map} examines the transport-map estimators considered in this paper and discusses which of these conditions each one satisfies. We focus on the ATT throughout; the properties of the ATC and ATE estimators follow by the same arguments under analogous assumptions on $h_1$ and $h_0$.

Throughout, the nuisance estimators $\hat h_0^{(-k)}$ and $\hat T^{(-k)}$ are trained on the held-out folds $\Omega^{(-k)}$. For each fold $k$, define
$$
g_k := \hat h_0^{(-k)} \circ \hat T^{(-k)},
\qquad
Q_k := \hat T^{(-k)}_{\sharp} P_X^0,
\qquad
B_k := P_X^0 g_k - P_X^1 h_0,
$$
where $P_X^a f := \E_{P_X^a}[f(X)]$ denotes the population functional. The conditional bias of the fold-$k$ estimator is $-B_k$, see \eqref{eq:conditional-bias}, so $B_k$ is the population-level error of the plug-in composition $g_k$. It admits the two exact decompositions
\begin{equation}
B_k
= \underbrace{(Q_k - P_X^1)h_0}_{=:B_{T,k}}
 + \underbrace{Q_k\bigl(\hat h_0^{(-k)} - h_0\bigr)}_{=:B_{h,k}}
= (Q_k - P_X^1)\hat h_0^{(-k)} + P_X^1\bigl(\hat h_0^{(-k)} - h_0\bigr),
\label{eq:fold-drift}
\end{equation}
which separates the contribution of the plug-in map, $B_{T,k}$, from that of the fitted regression, $B_{h,k}$. Write $\overline B_n := K^{-1}\sum_{k=1}^K B_k$.
Finally, for a candidate limit $g_{\star} \in L_2(P_X^0)$ of the compositions $g_k$, define the influence function with the nuisances held fixed at $g_{\star}$, evaluated at an observation $p = (a,x,y)$, as
\begin{equation}
\psi_{g_{\star}}(p)
:= \frac{a}{r}\bigl(y - \E_{P_Y^1}[Y]\bigr)
 - \frac{1-a}{1-r}\bigl(g_{\star}(x) - P_X^0 g_{\star}\bigr).
\label{eq:frozen-nuisance-if}
\end{equation}
When $g_{\star} = h_0 \circ T$ for an exact map $T$, this is precisely the influence function $\psi_{Y,h_0,T}$ derived in \Cref{app:proofs}, since $P_X^0(h_0\circ T) = \E_{P_X^0}[h_0(T(X))]$.

\subsection{Exact Decomposition under Sample Splitting}
\label{sec:estimated-map-variance}

We first quantify the impact of the plug-in map estimator on bias and variance in an idealized setting, where the nuisances are trained on an independent auxiliary sample.

\begin{prop}[Exact bias and variance under sample splitting]
\label{prop:honest-split-map-variance}
Let $\Omega_n$ be a training sample independent of an evaluation sample $\Omega^e_m$ drawn i.i.d.\ from $P$, and write $\calg_n := \sigma(\Omega_n)$ for the sigma algebra generated by the training sample. Let
$\hat h_0, \hat T$ be $\calg_n$-measurable with
$g_n := \hat h_0 \circ \hat T$, $Q_n := \hat T_{\sharp}P_X^0$, and
$B_n := P_X^0 g_n - P_X^1 h_0$. Consider
$$
\hat\tau_{ATT}
= \E_{P_{Y,m}^{1,e}}[Y] - \E_{P_{X,m}^{0,e}}[g_n],
$$
with $P^{1,e}_{Y,m}, P^{0,e}_{X,m}$ the arm-wise empirical measures on
$\Omega^e_m$.
Condition on $\calg_n$ and on the treatment indicators
$\{A_i\}_{i\in\Omega^e_m}$, with realized arm sizes $m_1, m_0 \ge 1$. Then
\begin{align}
\E\bigl(\hat\tau_{ATT} - \tau_{ATT} \mid \calg_n, \{A_i\}\bigr)
&= -B_n,
\label{eq:conditional-bias}\\
\operatorname{Var}\bigl(\hat\tau_{ATT} \mid \calg_n, \{A_i\}\bigr)
&= \frac{\operatorname{Var}(Y \mid A=1)}{m_1}
 + \frac{\operatorname{Var}_{P_X^0}\{g_n(X)\}}{m_0}\\
&= \frac{\operatorname{Var}(Y \mid A=1)}{m_1}
 + \frac{\operatorname{Var}_{Q_n}\{\hat h_0(X)\}}{m_0}.
\label{eq:conditional-variance}
\end{align}
Consequently, with expectations over both samples given $\{A_i\}$,
\begin{align}
\operatorname{Var}\bigl(\hat\tau_{ATT}\mid\{A_i\}\bigr)
&= \frac{\operatorname{Var}(Y \mid A=1)}{m_1}
 + \frac{\E\bigl[\operatorname{Var}_{Q_n}\{\hat h_0(X)\}\bigr]}{m_0}
 + \operatorname{Var}(B_n),
\label{eq:unconditional-variance}\\
\E\bigl[(\hat\tau_{ATT} - \tau_{ATT})^2\mid\{A_i\}\bigr]
&= \frac{\operatorname{Var}(Y \mid A=1)}{m_1}
 + \frac{\E\bigl[\operatorname{Var}_{Q_n}\{\hat h_0(X)\}\bigr]}{m_0}
 + \E(B_n^2).
\label{eq:unconditional-mse}
\end{align}
\end{prop}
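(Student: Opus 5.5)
The plan is to condition first and then integrate out the training sample with the laws of total expectation and total variance. Fix $\calg_n$ and the evaluation treatment indicators $\{A_i\}_{i\in\Omega^e_m}$. Since $\Omega^e_m$ is i.i.d.\ from $P$ and independent of $\Omega_n$, conditioning on the indicators leaves the treated units' $(X_i,Y_i)$ i.i.d.\ from the law of $(X,Y)\mid A=1$. It likewise leaves the control units' $X_i$ i.i.d.\ from $P_X^0$, and the two arms are mutually independent. Meanwhile $g_n=\hat h_0\circ\hat T$ is a fixed ($\calg_n$-measurable) function.

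\textbf{Step 1 (bias).} Write $\hat\tau_{ATT}=m_1^{-1}\sum_{A_i=1}Y_i - m_0^{-1}\sum_{A_i=0} g_n(X_i)$. Its conditional mean is $\E_{P_Y^1}[Y]-P_X^0 g_n$. By identification (\Cref{sec:monge_OR}) together with \Cref{ass:SUTVA,ass:confoundedness}, we have $\tau_{ATT}=\E_{P_Y^1}[Y]-P_X^1 h_0$. Subtracting gives $-(P_X^0 g_n - P_X^1 h_0)=-B_n$, which is \eqref{eq:conditional-bias}.

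\textbf{Step 2 (conditional variance).} The two arm averages are conditionally independent, and each is an average of i.i.d.\ terms. Hence the variance is $\operatorname{Var}(Y\mid A=1)/m_1+\operatorname{Var}_{P_X^0}\{g_n(X)\}/m_0$. For the second expression in \eqref{eq:conditional-variance}, apply the change of variables $\int f(\hat T(x))\,dP_X^0(x)=\int f\,dQ_n$ with $Q_n=\hat T_\sharp P_X^0$, to both $f=\hat h_0$ and $f=\hat h_0^2$. This gives $\operatorname{Var}_{P_X^0}\{\hat h_0\circ\hat T\}=\operatorname{Var}_{Q_n}\{\hat h_0\}$.

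\textbf{Step 3 (unconditional).} Condition on $\{A_i\}$ only and apply the law of total variance over $\calg_n$. Because $\Omega_n$ is independent of the evaluation indicators, conditioning on $\{A_i\}$ does not change the law of $\calg_n$-measurable quantities. The first term $\operatorname{Var}(Y\mid A=1)/m_1$ is deterministic given the $\{A_i\}$, so it passes through unchanged. The second becomes $\E[\operatorname{Var}_{Q_n}\{\hat h_0\}]/m_0$. The variance of the conditional mean $\E_{P_Y^1}[Y]-P_X^0g_n$ equals $\operatorname{Var}(B_n)$, since it differs from $-B_n$ by a constant. This yields \eqref{eq:unconditional-variance}. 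For the MSE, decompose as conditional variance plus squared conditional bias, $\E[(\hat\tau-\tau)^2\mid\calg_n,\{A_i\}]=\operatorname{Var}(\cdot\mid\calg_n,\{A_i\})+B_n^2$, and then take expectations over $\calg_n$. That gives \eqref{eq:unconditional-mse}.

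\textbf{Main obstacle.} The only delicate point is justification rather than algebra. We need integrability: $\operatorname{Var}(Y\mid A=1)<\infty$, $g_n\in L_2(P_X^0)$ almost surely, and $\E B_n^2<\infty$. These I would state as implicit standing assumptions, since otherwise the identities hold in $[0,\infty]$. We also need to check carefully that conditioning on the evaluation indicators preserves the independence of $\calg_n$ and the arm-wise i.i.d.\ structure. That follows from the independence of the two samples and from $\{A_i\}$ being i.i.d.\ Bernoulli$(r)$ with arm-conditional laws.
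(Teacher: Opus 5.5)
Your proposal is correct and follows essentially the same route as the paper's proof. Conditioning on $\calg_n$ and the evaluation indicators makes $g_n$ a fixed function and the two arm means independent i.i.d.\ averages, the second variance form comes from $\hat T(X)\sim Q_n$ for $X\sim P_X^0$, and the unconditional identities follow from the law of total variance and the variance-plus-squared-bias decomposition; your integrability caveats, and strictly also the overlap condition needed alongside \Cref{ass:SUTVA,ass:confoundedness} for $\tau_{ATT}=\E_{P_Y^1}[Y]-P_X^1h_0$, are left implicit in the paper.
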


The proposition separates the three effects of the estimated map. The conditional evaluation variance is driven by the random pushforward $Q_n$. Plug-in estimation shifts the conditional mean of the estimator by $-B_n$. The randomness of $B_n$ adds $\operatorname{Var}(B_n)$.
At an exact population endpoint, where $\hat h_0 = h_0$ and $\hat T = T$ with
$T_{\sharp}P_X^0 = P_X^1$, we have $B_n = 0$, and every exact map gives
$\operatorname{Var}_{P_X^0}\{h_0(T(X))\} = \operatorname{Var}_{P_X^1}\{h_0(X)\}$. The fixed-map oracle variance is therefore invariant to the selected transport.

\subsection{Asymptotic Regimes for the Cross-Fitted Estimator}
\label{sec:regimes}

Under cross-fitting the training folds overlap, and the exact identities above
acquire a remainder; \Cref{prop:crossfit-variance-identity} in
\Cref{app:estimated-map-variance} makes this exact. At first order, the nuisance estimators affect the limit only through $\overline B_n$.

\begin{theorem}[Asymptotic regimes of the cross-fitted STE]
\label{thm:balance-error-regimes}
Let \Cref{ass:SUTVA,ass:confoundedness,ass:supp_overlap} hold, let
$K \ge 2$ be fixed, and let $\hat\tau_{ATT}$ be the cross-fitted STE. Suppose
\begin{enumerate}
    \item there exists a deterministic $g_{\star}\in L_2(P_X^0)$ with
    $\max_{1\le k\le K}\|g_k - g_{\star}\|_{L_2(P_X^0)} = o_p(1)$;
    \item $\E_{P_Y^1}[Y^2] < +\infty$.
\end{enumerate}
Then the expansion
\begin{equation}
\sqrt n\,(\hat\tau_{ATT} - \tau_{ATT})
= \frac{1}{\sqrt n}\sum_{i=1}^n \psi_{g_{\star}}(p_i)
 - \sqrt n\,\overline B_n + o_p(1)
\label{eq:master-expansion}
\end{equation}
holds. The limit distribution is determined by the behaviour of $\overline B_n$, in the following four cases.
\begin{enumerate}
\item[(R1)] If $\overline B_n = o_p(n^{-1/2})$, then
$\sqrt n(\hat\tau_{ATT} - \tau_{ATT}) \Rightarrow N(0, V_{\star})$ with
$V_{\star} = \E[\psi_{g_{\star}}(p)^2]$. If moreover $g_{\star} = h_0\circ T$
with $T_{\sharp}P_X^0 = P_X^1$, then
\begin{equation}
V_{\star} = V_{\mathrm{or}}
:= \frac{\operatorname{Var}(Y\mid A=1)}{r}
 + \frac{\operatorname{Var}\{h_0(X)\mid A=1\}}{1-r},
\label{eq:oracle-variance}
\end{equation}
invariant to the selected exact map.
\item[(R2)] If $\sqrt n\,\overline B_n \to_p b$ for a constant $b$, then
$\sqrt n(\hat\tau_{ATT} - \tau_{ATT}) \Rightarrow N(-b, V_{\star})$: the
fitted nuisances can create a local bias without changing the variance
$V_{\star}$.
\item[(R3)] If $\overline B_n$ is asymptotically linear,
\begin{equation}
\overline B_n
= \frac{1}{n}\sum_{i=1}^n \psi_B(p_i) + o_p(n^{-1/2}),
\qquad \E[\psi_B(p)] = 0,
\label{eq:drift-al}
\end{equation}
then $\sqrt n(\hat\tau_{ATT} - \tau_{ATT}) \Rightarrow N(0, V)$ with
influence function $\psi_{g_{\star}} - \psi_B$ and
\begin{equation}
V = V_{\star} + \operatorname{Var}\{\psi_B(p)\}
 - 2\operatorname{Cov}\{\psi_{g_{\star}}(p), \psi_B(p)\}.
\label{eq:total-variance}
\end{equation}
A bias term of exact order $n^{-1/2}$ can therefore increase or
decrease the variance, depending on its covariance with
$\psi_{g_{\star}}$.
\item[(R4)] If $n^{\alpha}\overline B_n \to_p b \neq 0$ for some
$0 < \alpha < 1/2$, then
$n^{\alpha}(\hat\tau_{ATT} - \tau_{ATT}) \to_p -b$: the estimator remains
consistent but at a rate slower than root $n$. If
$\overline B_n \not\to_p 0$, the estimator is inconsistent for $\tau_{ATT}$
and, when it converges, targets the pseudo-true parameter
$\tau_{ATT} - \operatorname{plim}\overline B_n$.
\end{enumerate}
\end{theorem}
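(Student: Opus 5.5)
The plan is to prove the master expansion \eqref{eq:master-expansion} first; the four regimes then follow from Slutsky's lemma and the CLT. For each fold $k$, with $n_k^a$ the number of arm-$a$ units in $\Omega^{(k)}$, write the fold estimator as
\[
\hat\tau^{(k)}_{ATT}-\tau_{ATT}
=\bigl(P^{1,(k)}_{Y,n}-P^1_Y\bigr)Y
-\bigl(P^{0,(k)}_{X,n}-P^0_X\bigr)g_k
-B_k .
\]
This is an exact identity. It uses $\tau_{ATT}=\E_{P_Y^1}[Y]-P_X^1h_0$, which is the identification formula of \Cref{sec:monge_OR} under \Cref{ass:SUTVA,ass:confoundedness,ass:supp_overlap}, together with the definition of $B_k$.

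Next split the empirical term as
\[
\bigl(P^{0,(k)}_{X,n}-P^0_X\bigr)g_k=\bigl(P^{0,(k)}_{X,n}-P^0_X\bigr)g_\star+\bigl(P^{0,(k)}_{X,n}-P^0_X\bigr)(g_k-g_\star).
\]
The second piece is the empirical-process remainder. Conditional on the training folds $\Omega^{(-k)}$, the function $g_k-g_\star$ is fixed, and the held-out control covariates are i.i.d.\ $P_X^0$ given the treatment labels. Hence the conditional mean of this piece is zero and its conditional variance is at most $\|g_k-g_\star\|^2_{L_2(P_X^0)}/n^0_k$. Since $n^0_k/n\to(1-r)/K$ almost surely, Chebyshev's inequality applied conditionally, combined with condition 1 and dominated convergence for conditional probabilities, gives $\sqrt n\,(P^{0,(k)}_{X,n}-P^0_X)(g_k-g_\star)=o_p(1)$. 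This is the same conditioning device as in \Cref{prop:honest-split-map-variance}, and it is the step where cross-fitting is indispensable.

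Averaging over $k$ leaves
\[
\frac1K\sum_k\bigl[(P^{1,(k)}_{Y,n}-P^1_Y)Y-(P^{0,(k)}_{X,n}-P^0_X)g_\star\bigr]-\overline B_n+o_p(n^{-1/2}).
\]
The bracketed terms must now be converted into the i.i.d.\ sum $n^{-1}\sum_i\psi_{g_\star}(p_i)$. Write
\[
(P^{1,(k)}_{Y,n}-P^1_Y)Y=\frac{1}{n^1_k}\sum_{i\in\Omega^{(k)}}A_i\bigl(Y_i-\E_{P_Y^1}Y\bigr),
\]
and replace $1/(Kn^1_k)$ by $1/(nr)$. The replacement error is
\[
\Bigl(\frac{n}{Kn^1_k}-\frac1r\Bigr)\cdot\frac1{\sqrt n}\sum_{i\in\Omega^{(k)}}A_i\bigl(Y_i-\E_{P_Y^1}Y\bigr).
\]
The first factor is $o_p(1)$ by the LLN. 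The second factor is $O_p(1)$ by the CLT, using condition 2. The control arm is handled the same way, with $g_\star\in L_2(P_X^0)$ supplying the second moment. This yields \eqref{eq:master-expansion}, since $\psi_{g_\star}$ is mean zero with finite variance. The main obstacle is the conditional-variance bound for the remainder. One must be careful that the random arm sizes and the training-fold dependence are handled by conditioning on both $\calg^{(-k)}$ and the labels in fold $k$; that conditioning is what makes the argument rigorous.

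The regimes then follow from the expansion. For (R1) and (R2), apply the CLT to $n^{-1/2}\sum_i\psi_{g_\star}(p_i)$ and Slutsky's lemma, which gives the shift $-b$. For the variance identity \eqref{eq:oracle-variance}, take $g_\star=h_0\circ T$. The two terms of $\psi_{g_\star}$ have disjoint supports in $a$, so the cross term vanishes, giving
\[
\E\psi^2=\frac{\operatorname{Var}(Y\mid A=1)}{r}+\frac{\operatorname{Var}_{P_X^0}(h_0\circ T)}{1-r}.
\]
By $T_\sharp P_X^0=P_X^1$, the last variance equals $\operatorname{Var}\{h_0(X)\mid A=1\}$, which does not depend on which exact map is chosen. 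For (R3), substitute \eqref{eq:drift-al} to obtain the single influence function $\psi_{g_\star}-\psi_B$, whose variance expands to \eqref{eq:total-variance}. For (R4), multiply the expansion by $n^{\alpha-1/2}$: the linear term becomes $n^{\alpha}\cdot O_p(n^{-1/2})=o_p(1)$, and the bias term tends to $-b$. Finally, if $\overline B_n\not\to_p0$, the linear term is $O_p(n^{-1/2})$, so $\hat\tau_{ATT}-\tau_{ATT}+\overline B_n=o_p(1)$ and the estimator converges to $\tau_{ATT}-\operatorname{plim}\overline B_n$ whenever that limit exists.
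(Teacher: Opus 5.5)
Your proof is correct and follows essentially the same route as the paper: a fold-wise decomposition into centred arm-wise empirical terms minus $B_k$, the conditional Chebyshev argument (which relies on cross-fitting) to replace $g_k$ by $g_{\star}$, linearization of the random arm sizes, fold aggregation, and then the CLT and Slutsky's lemma for the four regimes. The only difference is the order of the two replacement steps: you start from the exact centred identity and swap $g_k$ for $g_{\star}$ before replacing $1/(Kn_k^a)$ by $1/(nr)$ or $1/(n(1-r))$, whereas the paper first applies its H\'ajek-ratio delta-method lemma with $g_k$ and only then swaps, which is why its remainder control passes through $\E[g_k^2\mid\Omega^{(-k)}]=O_p(1)$.
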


The theorem is deliberately stated in terms of the scalar quantity $\overline B_n$. A global
$L_2(P_X^0)$ rate on $\hat T^{(-k)} - T$ is sufficient but not necessary: if
$h_0$ is Lipschitz with constant $L_h$ and $T_{\sharp}P_X^0 = P_X^1$, then
\begin{equation}
\begin{aligned}
\bigl|(Q_k - P_X^1)h_0\bigr|
&= \bigl|P_X^0\{h_0(\hat T^{(-k)}(X)) - h_0(T(X))\}\bigr|\\
&\le L_h\|\hat T^{(-k)} - T\|_{L_2(P_X^0)},
\end{aligned}
\label{eq:l2-sufficient-bound}
\end{equation}
and the signed integral on the left can converge faster than the norm on the
right because of cancellation.
The regimes yield consistency of the cross-fitted STE under weak conditions on the nuisance estimators.

\begin{prop}[Consistency of the cross-fitted STE]
\label{thm:att_consistency}
The STE for the ATT under cross-fitting with fixed fold $K>0$ is consistent, namely, $\hat{\tau}_{ATT}\rightarrow_p \tau_{ATT}$ as $n \rightarrow \infty$, provided the following conditions are satisfied:
    \begin{enumerate}
        \item $L_2$ consistency of nuisance parameters: 
        \begin{enumerate}
            \item $\|\hat{h}^{(-k)}_0 - h_0\|_{L_2(P_X^1)} \to_p 0$ as $n\to\infty$;
            \item $\|\hat{T}^{(-k)} - T\|_{L_2(P_X^0)} \to_p 0$ as $n\to\infty$;
        \end{enumerate}
        \item Moment conditions:
        \begin{enumerate}
            \item $\|h_0\|_{L_2(P_X^1)}<+\infty$ and $\E_{P_Y^1}[Y^2]<+\infty$;
            \item  $\sup_k\E\|\hat{T}^{(-k)}-T\|_{L_2(P_X^0)}<+\infty$, $\sup_k\E\|\hat{h}_0^{(-k)} - h_0\|_{L_2(P_X^1)}<+\infty$;
        \end{enumerate}
        \item Lipschitz condition on the outcome regression estimator: $|\hat{h}^{(-k)}_0(x) - \hat{h}^{(-k)}_0(y)|\leq L^{(-k)}\|x-y\|$ for all $x,y\in\calx$, with $L^{(-k)} = O_p(1)$.
    \end{enumerate}
\end{prop}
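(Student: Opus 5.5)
Fix a fold $k$ and write the fold estimator as
\[
\hat\tau_{ATT}^{(k)}-\tau_{ATT}=\bigl(P_{Y,n}^{1,(k)}-P_Y^1\bigr)Y-\bigl(P_{X,n}^{0,(k)}-P_X^0\bigr)g_k-B_k .
\]
Since $K$ is fixed, it suffices to show that each of the three terms is $o_p(1)$. The result then follows by averaging over folds. We deliberately avoid invoking \Cref{thm:balance-error-regimes} wholesale, because that theorem requires a deterministic limit $g_\star$. Instead, we argue term by term.

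\emph{First term.} The first term is a centered sample mean over the treated units in fold $k$. Under $\E_{P_Y^1}[Y^2]<\infty$, the number of treated units in the fold satisfies $|\Omega^{(k)}_1|/n\to r/K>0$. The weak law of large numbers then gives convergence to $0$.

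\emph{Second term.} Condition on the training folds $\Omega^{(-k)}$. Then $g_k$ is a fixed function, evaluated on independent fresh control covariates. By Chebyshev's inequality,
\[
\Pr\Bigl(\bigl|(P_{X,n}^{0,(k)}-P_X^0)g_k\bigr|>\epsilon \,\Big|\, \Omega^{(-k)},\{A_i\}\Bigr)\le \frac{\operatorname{Var}_{P_X^0}(g_k)}{m_0\epsilon^2},
\]
where $m_0$ is the number of control units in fold $k$. It therefore suffices to show $\operatorname{Var}_{P_X^0}(g_k)=O_p(1)$; dominated convergence then removes the conditioning. To bound $\|g_k\|_{L_2(P_X^0)}$, decompose
\[
g_k=\bigl[\hat h_0(\hat T)-\hat h_0(T)\bigr]+\bigl[\hat h_0(T)-h_0(T)\bigr]+h_0(T),
\]
suppressing the fold superscripts. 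Each piece is controlled as follows.
\begin{itemize}
\item By the Lipschitz condition (3), the first bracket has $L_2(P_X^0)$ norm at most $L^{(-k)}\|\hat T-T\|_{L_2(P_X^0)}$.
\item By the change of variables $T_\sharp P_X^0=P_X^1$, the second bracket has norm $\|\hat h_0-h_0\|_{L_2(P_X^1)}$.
\item By the same change of variables, the third has norm $\|h_0\|_{L_2(P_X^1)}<\infty$.
\end{itemize}
Conditions (1) and (2a) then make all three pieces $O_p(1)$. Condition (2b) is available if one prefers to bound the unconditional second moment through Markov's inequality.

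\emph{Third term.} This is the bias $B_k$, and it is the main step. Using the decomposition \eqref{eq:fold-drift} and the same change of variables,
\[
B_k=P_X^0\bigl[\hat h_0(\hat T)-\hat h_0(T)\bigr]+P_X^1(\hat h_0-h_0).
\]
By Cauchy--Schwarz and the Lipschitz bound, the absolute value is at most
\[
L^{(-k)}\|\hat T-T\|_{L_2(P_X^0)}+\|\hat h_0-h_0\|_{L_2(P_X^1)}.
\]
This is $O_p(1)\cdot o_p(1)+o_p(1)=o_p(1)$ by conditions (1) and (3).

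The delicate point is the Lipschitz step. It must be applied to the estimated $\hat h_0$, not to $h_0$, because $\hat T$ may map points outside $\operatorname{supp}(P_X^1)$, where $\hat h_0-h_0$ is not controlled in $L_2(P_X^1)$. This is exactly why condition (3) is placed on the estimator. A second point of care is the conditioning argument in the second term: the randomness in $g_k$ must be handled before applying Chebyshev's inequality, and cross-fitting supplies the needed independence.

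Combining the three terms and averaging over the $K$ folds gives $\hat\tau_{ATT}\to_p\tau_{ATT}$.
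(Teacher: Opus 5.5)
Your proof is correct, but it is organized differently from the paper's direct proof. The paper applies the triangle inequality to split $|\hat\tau^{(k)}_{ATT}-\tau_{ATT}|$ into the treated mean plus three evaluation-fold averages: $P^{0,(k)}_{X,n}(\hat h_0\circ\hat T-\hat h_0\circ T)$, $P^{0,(k)}_{X,n}((\hat h_0-h_0)\circ T)$ and $(P^{0,(k)}_{X,n}-P^0_X)(h_0\circ T)$. It bounds the first two by empirical $L_2$ norms over the fold, using Lipschitz continuity and Cauchy--Schwarz. It then converts these into $\|\hat T-T\|_{L_2(P_X^0)}$ and $\|\hat h_0-h_0\|_{L_2(P_X^1)}$ by a conditional law of large numbers, invoking condition 2(b). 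You instead split into the centred empirical process $(P^{0,(k)}_{X,n}-P^0_X)g_k$ and the population bias $B_k$, which is the $o_p(1)$ analogue of the master expansion. Your route has three advantages. The nuisance errors only ever appear inside population integrals, so you never need an empirical-to-population transfer for $n$-dependent functions; the paper's appeal to the LLN at that step is really a conditional Markov argument. The evaluation fold enters only through one conditional Chebyshev bound, which needs nothing more than $\|g_k\|_{L_2(P_X^0)}=O_p(1)$. And condition 2(b) becomes superfluous. The paper's version, for its part, is a single chain of triangle inequalities with each nuisance error isolated in its own term. One correction to your framing: you do not need to avoid \Cref{thm:balance-error-regimes}. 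Your own bound on the first two pieces of $g_k$ already gives $\|g_k-h_0\circ T\|_{L_2(P_X^0)}\le L^{(-k)}\|\hat T^{(-k)}-T\|_{L_2(P_X^0)}+\|\hat h_0^{(-k)}-h_0\|_{L_2(P_X^1)}=o_p(1)$, so the deterministic limit $g_\star=h_0\circ T$ does exist. This is exactly \Cref{lem:consistency-bridge}, which the paper uses for its alternative proof via \eqref{eq:master-expansion}.
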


Owing to the cross-fitting procedure, only $L_2$ consistency of the nuisance parameters is required, a condition satisfied by many standard estimators for regression and the Monge map. Using an independent sample for functional evaluation enables the application of the weak law of large numbers; without cross-fitting, uniform consistency over the entire domain $\calx$ would be needed. \Cref{lem:consistency-bridge} in the appendix verifies that conditions 1 to 3 imply hypothesis 1 of \Cref{thm:balance-error-regimes} with $g_{\star} = h_0\circ T$; then $\overline B_n \to_p 0$, so the proposition follows from the expansion \eqref{eq:master-expansion}. A direct proof under the same conditions is given in \Cref{app:proofs}.

\begin{remark}[Requirements of the plug-in map]
\label{rem:plug-in-T}
In the STE the transport map enters only as a plug-in nuisance parameter: $\hat\tau_{ATT}$ is well defined for any measurable $\hat T$, and consistency and inference impose different requirements on it.
Consistency (\Cref{thm:att_consistency}) requires only $L_2(P_X^0)$ consistency, $\|\hat T^{(-k)}-T\|_{L_2(P_X^0)}\xrightarrow{p}0$. This is attained by all standard constructions under their stated regularity conditions, including the barycentric projection and its $1$-nearest-neighbour extension \citep{manole2021plugin}, parametric plug-ins such as the Gaussian map, and entropic maps. For point estimation the choice of $\hat T$ is thus essentially unrestricted.

Statistical inference is more demanding. Nonparametric Brenier-map plug-ins attain $L_2(P_X^0)$ errors slower than $n^{-1/2}$, of order $n^{-1/4}$ up to logarithmic factors for $d\le 4$ and of order $n^{-1/d}$ for $d\ge 5$ \citep{manole2021plugin}, as discussed in \Cref{app:map-conditions}. These maps are also not known to admit the linear expansion that inference requires. \Cref{sec:clt-conditions} states the required asymptotic-linearity condition on $\hat T^{(-k)}$, which we call (A1), establishes a central limit theorem under it, and matches common map estimators to the conditions they satisfy. The empirical entropic map at fixed regularization satisfies the expansion around its population counterpart $T_\epsilon$. This expansion follows by composing the Hadamard differentiability of the entropic potentials \citep{goldfeld2024limit} with the functional delta method, and \Cref{app:1b-prime} carries out the argument.
\end{remark}

\subsection{Conditions and the Central Limit Theorem}
\label{sec:clt-conditions}

We now state the conditions for asymptotic normality. The conditions make explicit that the map estimator is asymptotically linear and centered at an exact map, so that it contributes to the variance but not to the first-order bias.

\begin{enumerate}
\item[(A1)] \emph{The transport-map estimator is asymptotically linear at a limit map $T^{\dagger}$.} There exists a measurable kernel
$\phi_T:\calx\times(\calx\times\{0,1\})\to\R^d$ such that for each fold $k$
and each $x\in\calx$,
\begin{equation}
\hat T^{(-k)}(x) - T^{\dagger}(x)
= \frac{1}{|\Omega^{(-k)}|}\sum_{j\in\Omega^{(-k)}}\phi_T(x; X_j, A_j)
 + r_n^{T,(k)}(x),
\label{eq:map-al}
\end{equation}
with remainder $\|r_n^{T,(k)}\|_{L_2(P_X^0)} = o_p(n^{-1/2})$. The kernel is mean zero and square integrable in the following sense. For $P_X^0$-almost every $x$, $\E[\phi_T(x; X, A)] = 0$, where the expectation is over an observation $(X,A)$ drawn from the population data law, and
$$
\int_{\calx}\E\|\phi_T(x; X, A)\|^2\, dP_X^0(x) < \infty.
$$
\item[(A2)] \emph{The limit map is exact:} $T^{\dagger} = T$ with $T_{\sharp}P_X^0 = P_X^1$. The map estimator is consistent for an exact transport map, so the map contributes no first-order bias.
\item[(A3)] \emph{The regression error is negligible at the root-$n$ scale:}
$\|\hat h_0^{(-k)} - h_0\|_{L_2(P_X^1)} = o_p(n^{-1/2})$.
\item[(A4)] \emph{Smoothness.} $h_0 \in C^{1,1}(\calx)$ with
$\sup_x\|\nabla h_0(x)\| \le L_h$ and
$\|\nabla h_0(x) - \nabla h_0(y)\| \le L^{\ast}_{\nabla h_0}\|x-y\|$.
Almost surely, $\hat h_0^{(-k)} \in C^{1,1}(\calx)$ with Lipschitz constant
$L^{(-k)} = O_p(1)$ and gradient Lipschitz constant $M^{(-k)} = O_p(1)$, and
$\|\nabla\hat h_0^{(-k)} - \nabla h_0\|_{L_2(P_X^1)} = o_p(1)$.
\item[(A5)] \emph{Moments:} $\E_{P_Y^1}[Y^2] < \infty$ and
$\E_{P_X^1}[h_0^2] < \infty$.
\end{enumerate}

Condition (A1) is the substantive restriction. It requires the map estimator to admit a root-$n$ linear expansion with a well-behaved kernel. For each of the transport-map estimators used in this paper, \Cref{app:map-conditions} verifies conditions (A1) and (A2) or explains why they fail. The integrated second moment is deliberately weaker than a supremum over $x$. The supremum form fails already for the Gaussian plug-in map, whose kernel is affine in $x$, while the integrated form holds whenever $P_X^0$ has finite second moments. Under (A1) and the $h_0$ part of (A4), the term $B_{T,k}$ admits a linear expansion.

\begin{corollary}[Asymptotic expansion of $B_{T,k}$]
\label{cor:al-transport}
Assume (A1) with limit map $T^{\dagger}$ and the $h_0$ smoothness of (A4). Define
$$
\psi_T(x,a) := \E_{P_X^0}\bigl[\nabla h_0(T^{\dagger}(X'))^{\top}
\phi_T(X'; x, a)\bigr],
\qquad
b^{\dagger} := P_X^0\{h_0(T^{\dagger}(X))\} - P_X^1 h_0,
$$
where $X' \sim P_X^0$ is an independent copy serving as the inner integration variable. Then
\begin{equation}
\frac{1}{K}\sum_{k=1}^K B_{T,k}
= b^{\dagger} + \frac{1}{n}\sum_{i=1}^n\psi_T(X_i, A_i) + o_p(n^{-1/2}),
\label{eq:map-drift-al}
\end{equation}
with $\psi_T$ mean zero and $\E[\psi_T^2]<\infty$. Under (A2) the limit map is exact and $b^{\dagger} = 0$. If
$T^{\dagger}$ is not exact, the expansion is centered at the pseudo-true parameter
$\tau^{\dagger} = \tau_{ATT} - b^{\dagger}$.
\end{corollary}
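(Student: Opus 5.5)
We will prove the expansion fold by fold and then average over the $K$ folds. Fix $k$ and abbreviate $\hat T := \hat T^{(-k)}$. Recall $B_{T,k} = (Q_k - P_X^1)h_0 = P_X^0\{h_0(\hat T(X))\} - P_X^1 h_0$. Adding and subtracting $P_X^0\{h_0(T^{\dagger}(X))\}$ gives
$$
B_{T,k} = b^{\dagger} + P_X^0\bigl\{h_0(\hat T(X)) - h_0(T^{\dagger}(X))\bigr\}.
$$
A first-order Taylor expansion with integral remainder, using $h_0\in C^{1,1}$ from (A4), yields
$$
h_0(\hat T(x)) - h_0(T^{\dagger}(x)) = \nabla h_0(T^{\dagger}(x))^{\top}\Delta(x) + R(x),
\qquad |R(x)|\le \tfrac12 L^{\ast}_{\nabla h_0}\|\Delta(x)\|^2,
$$
where $\Delta := \hat T - T^{\dagger}$. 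Integrating against $P_X^0$, we obtain a linear term $P_X^0[\nabla h_0(T^{\dagger})^{\top}\Delta]$ and a quadratic remainder bounded by $\tfrac12 L^{\ast}_{\nabla h_0}\|\Delta\|^2_{L_2(P_X^0)}$.

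\textbf{Remainder and linear term.} For the quadratic remainder, (A1) gives $\Delta = \bar\phi_k + r_n^{T,(k)}$, where $\bar\phi_k(x) := |\Omega^{(-k)}|^{-1}\sum_{j}\phi_T(x;X_j,A_j)$. Because the summands are i.i.d., mean zero for a.e.\ $x$, and integrated square integrable, Fubini gives
$$
\E\|\bar\phi_k\|^2_{L_2(P_X^0)} = |\Omega^{(-k)}|^{-1}\int\E\|\phi_T(x;X,A)\|^2\,dP_X^0(x) = O(n^{-1}).
$$
Hence $\|\Delta\|^2_{L_2(P_X^0)} = O_p(n^{-1}) = o_p(n^{-1/2})$. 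For the linear term, the contribution of $r_n^{T,(k)}$ is at most $L_h\|r_n^{T,(k)}\|_{L_2(P_X^0)} = o_p(n^{-1/2})$ by Cauchy--Schwarz and $\sup\|\nabla h_0\|\le L_h$. The contribution of $\bar\phi_k$ equals
$$
|\Omega^{(-k)}|^{-1}\sum_{j\in\Omega^{(-k)}}\psi_T(X_j,A_j)
$$
after exchanging the finite sum with the $P_X^0$-integral, which is legitimate since each summand is integrable by Cauchy--Schwarz.

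\textbf{Properties of $\psi_T$ and averaging.} Mean zero follows from Fubini together with $\E[\phi_T(x;X,A)]=0$ for a.e.\ $x$; the needed joint integrability $\int\E|\nabla h_0^{\top}\phi_T|<\infty$ is again supplied by Cauchy--Schwarz. Square integrability follows from Jensen:
$$
\E[\psi_T^2]\le L_h^2\int\E\|\phi_T(x;X,A)\|^2\,dP_X^0(x)<\infty.
$$
To average over folds, use $|\Omega^{(-k)}| = n(K-1)/K$. Each observation $i$ lies in exactly $K-1$ of the complements $\Omega^{(-k)}$, so
$$
\frac1K\sum_{k=1}^K |\Omega^{(-k)}|^{-1}\sum_{j\in\Omega^{(-k)}}\psi_T(X_j,A_j) = \frac1K\cdot\frac{K}{n(K-1)}\,(K-1)\sum_{i=1}^n\psi_T(X_i,A_i) = \frac1n\sum_{i=1}^n\psi_T(X_i,A_i).
$$
Since $K$ is fixed, the $K$ remainders, each $o_p(n^{-1/2})$, sum to $o_p(n^{-1/2})$. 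This gives \eqref{eq:map-drift-al}.

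\textbf{Final claims and main obstacle.} Under (A2), $T^{\dagger}=T$ pushes $P_X^0$ to $P_X^1$, so $b^{\dagger}=0$ by change of variables. Otherwise, substituting the expansion into \eqref{eq:master-expansion} shows the estimator is centered at $\tau_{ATT}-b^{\dagger}$. The main obstacle is the measure-theoretic bookkeeping in the middle step: justifying the interchange of the empirical sum with the $P_X^0$-integral and of expectation with integration, using only the integrated (not uniform) moment condition in (A1). This is handled by the Cauchy--Schwarz and Fubini bounds above, and is exactly where the integrated form of (A1) suffices.
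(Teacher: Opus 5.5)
Your proof is correct and follows the paper's argument essentially step for step: the split $B_{T,k} = b^{\dagger} + P_X^0\{h_0(\hat T^{(-k)}) - h_0(T^{\dagger})\}$, the $C^{1,1}$ Taylor bound with quadratic remainder, substitution of (A1) into the linear term to produce $\psi_T$, the Jensen bound $\E[\psi_T^2] \le L_h^2\int\E\|\phi_T\|^2\,dP_X^0$, and fold aggregation. You use the mean-zero kernel to kill the cross terms in $\E\|\bar\phi_k\|^2_{L_2(P_X^0)}$, and this justifies the input $\|\hat T^{(-k)} - T^{\dagger}\|^2_{L_2(P_X^0)} = O_p(n^{-1})$ more carefully than the paper's terse appeal to Jensen and Tonelli, which alone would not produce the $n^{-1}$ factor.
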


\begin{theorem}[CLT for the cross-fitted STE]
\label{thm:att_normality}
Let \Cref{ass:SUTVA,ass:confoundedness,ass:supp_overlap} hold, and let
$K \ge 2$ be fixed. Assume (A1), (A2), (A3), (A4) and (A5). Then
$$
\sqrt n\,(\hat\tau_{ATT} - \tau_{ATT}) \Rightarrow N(0, V),
$$
with influence function $\psi_{Y,h_0,T} - \psi_T$ and
\begin{equation}
V = \frac{\operatorname{Var}(Y\mid A=1)}{r}
 + \frac{\operatorname{Var}\{h_0(X)\mid A=1\}}{1-r}
 + \operatorname{Var}(\psi_T)
 - 2\operatorname{Cov}\bigl(\psi_{Y,h_0,T},\, \psi_T\bigr).
\label{eq:combined-variance}
\end{equation}
\end{theorem}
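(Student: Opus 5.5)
The plan is to reduce the statement to regime (R3) of \Cref{thm:balance-error-regimes} with $g_{\star} = h_0\circ T$. In that regime the influence function is $\psi_{g_{\star}} - \psi_B$, and $\psi_{g_{\star}} = \psi_{Y,h_0,T}$. The whole task is therefore to show that $\overline B_n$ is asymptotically linear with $\psi_B = \psi_T$. Once this holds, \eqref{eq:total-variance} together with the oracle identity \eqref{eq:oracle-variance} gives exactly \eqref{eq:combined-variance}.

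\emph{Step 1: hypotheses of \Cref{thm:balance-error-regimes}.}
Hypothesis 2 is (A5). For hypothesis 1, I bound
$$
\|g_k - h_0\circ T\|_{L_2(P_X^0)} \le \|\hat h_0^{(-k)}\circ\hat T^{(-k)} - \hat h_0^{(-k)}\circ T\|_{L_2(P_X^0)} + \|\hat h_0^{(-k)}\circ T - h_0\circ T\|_{L_2(P_X^0)}.
\]
By the Lipschitz bound in (A4), the first term is at most $L^{(-k)}\|\hat T^{(-k)}-T\|_{L_2(P_X^0)}$. This is $O_p(1)\cdot O_p(n^{-1/2})$: (A1) and the integrated second-moment condition give, by Fubini and Chebyshev, $\|n^{-1}\sum_j\phi_T(\cdot;X_j,A_j)\|_{L_2(P_X^0)} = O_p(n^{-1/2})$, using independence and mean zero of the kernel. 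Since $T_\sharp P_X^0=P_X^1$, the second term equals $\|\hat h_0^{(-k)}-h_0\|_{L_2(P_X^1)} = o_p(n^{-1/2})$ by (A3). (Alternatively, I can invoke \Cref{lem:consistency-bridge}.)

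\emph{Step 2: decomposition of $\overline B_n$.}
Use the split \eqref{eq:fold-drift}, $B_k = B_{T,k}+B_{h,k}$.
\Cref{cor:al-transport} under (A1), (A2) and (A4) gives
$K^{-1}\sum_k B_{T,k} = n^{-1}\sum_i\psi_T(X_i,A_i)+o_p(n^{-1/2})$, with $b^\dagger=0$.
It remains to show $B_{h,k}=Q_k(\hat h_0^{(-k)}-h_0) = o_p(n^{-1/2})$. Write $\Delta := \hat h_0^{(-k)}-h_0$. Then
$$
B_{h,k} = P_X^1\Delta + P_X^0\{\Delta\circ\hat T^{(-k)} - \Delta\circ T\}.
$$
By Cauchy--Schwarz and (A3), $|P_X^1\Delta|\le\|\Delta\|_{L_2(P_X^1)} = o_p(n^{-1/2})$. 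For the second piece, a first-order Taylor expansion gives
$$
|P_X^0\{\Delta\circ\hat T^{(-k)}-\Delta\circ T\}| \le \|\nabla\Delta\circ T\|_{L_2(P_X^0)}\,\|\hat T^{(-k)}-T\|_{L_2(P_X^0)} + \tfrac12(M^{(-k)}+L^\ast_{\nabla h_0})\|\hat T^{(-k)}-T\|^2_{L_2(P_X^0)}.
$$
Here $\|\nabla\Delta\circ T\|_{L_2(P_X^0)} = \|\nabla\Delta\|_{L_2(P_X^1)} = o_p(1)$ by (A4), and $\|\hat T^{(-k)}-T\|_{L_2(P_X^0)} = O_p(n^{-1/2})$ from Step 1. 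The product term is therefore $o_p(n^{-1/2})$ and the quadratic term is $O_p(n^{-1})$. Hence $\overline B_n$ satisfies \eqref{eq:drift-al} with $\psi_B=\psi_T$, and $\psi_T$ is mean zero and square integrable by the corollary.

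\emph{Step 3: conclusion, and the main obstacle.}
Apply (R3) and substitute $V_\star=V_{\mathrm{or}}$. The main obstacle is Step 2's control of the cross term $P_X^0\{\Delta\circ\hat T-\Delta\circ T\}$. A naive Lipschitz bound gives only $O_p(1)\cdot O_p(n^{-1/2})$, which is not enough. The gradient-consistency clause of (A4) is exactly what upgrades this to $o_p(n^{-1/2})$. A minor additional care point is that (A1) is stated with the training-fold averages $|\Omega^{(-k)}|^{-1}\sum_{j\in\Omega^{(-k)}}$. The fold-averaging into a full-sample average $n^{-1}\sum_i$ is already handled inside \Cref{cor:al-transport}, so I would cite it rather than redo it.
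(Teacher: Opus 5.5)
Your proposal is correct and follows the paper's proof essentially step for step. You verify the hypotheses of \Cref{thm:balance-error-regimes} with $g_{\star}=h_0\circ T$, take the map part of $\overline B_n$ from \Cref{cor:al-transport} with $b^{\dagger}=0$ by (A2), split $B_{h,k}$ into $P_X^1(\hat h_0^{(-k)}-h_0)$ plus the cross term, and conclude via regime (R3) with $V_{\star}=V_{\mathrm{or}}$. The only difference is that you re-derive the cross-term bound inline, using a second-order Taylor remainder, and derive the $O_p(n^{-1/2})$ map rate explicitly, whereas the paper cites \Cref{lem:cross-term} with $T^{\dagger}=T$ and $Q^{\dagger}=P_X^1$.
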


Under (A2) the estimated map contributes no first-order bias. Its estimation error contributes the variance terms
$\operatorname{Var}(\psi_T) - 2\operatorname{Cov}(\psi_{Y,h_0,T},\psi_T)$,
which can be of either sign. Assuming instead an $o_p(n^{-1/2})$ rate for
$\hat T^{(-k)}$ in $L_2(P_X^0)$ would erase $\psi_T$ from the expansion. An
$O_p(n^{-1/2})$ rate without the linear structure would keep
$\sqrt n(\hat\tau_{ATT} - \tau_{ATT})$ tight but would not identify a limit
distribution.

\begin{remark}[Root-$n$ outcome regressions]
\label{rem:root-n-regression}
Condition (A3) excludes regression estimators that converge at exactly
the root-$n$ rate, such as correctly specified OLS. Such estimators
contribute a second influence term $\psi_h$, symmetric to $\psi_T$, and the
CLT continues to hold with influence function
$\psi_{Y,h_0,T} - \psi_T - \psi_h$. We state the corresponding
asymptotic-linearity condition on the outcome regression and verify it for
OLS in \Cref{app:regression-al}. The main text stays focused on the
role of the transport map.
\end{remark}

\subsection{Discussion: Transport-Map Estimators}
\label{sec:role-of-map}

Conditions (A1) and (A2) are properties of the chosen transport-map estimator. This subsection discusses the estimators considered in this paper. For each estimator, we describe which conditions it satisfies and what this implies for the causal estimator.

\textbf{Direct in-sample estimation.}
The simplest choice estimates the transport on the same sample on which the functional is evaluated. In that case the estimator coincides with the outcome-regression estimator on the treated arm, whether or not the plan is optimal. Any feasible plan has the prescribed second marginal, so integrating $\hat h_0$ through the plan returns its treated-arm average, and the STE reduces to the treated-arm average of the residuals $Y_j - \hat h_0(X_j)$. \Cref{prop:empirical-plan-cancellation} in \Cref{app:estimated-map-variance} makes this exact, and the same marginal argument extends the conclusion to the ATC and the ATE. Converting the plan into a barycentric projection does not change this. The discrepancy is a Jensen gap, which vanishes for affine $\hat h_0$ and is otherwise bounded by the spread of the plan, see \eqref{eq:barycentric-gap-bound}. For transport to contribute, the map must therefore be trained on held-out data, as the cross-fitted STE of \Cref{subsec:crossfit} does.

\textbf{Out-of-sample nonparametric maps.}
Barycentric projections and their nearest-neighbour extensions can be trained on the held-out folds and evaluated at fresh controls. These estimators are $L_2(P_X^0)$ consistent, with rates of order $n^{-1/4}$ up to logarithmic factors for $d \le 4$ and of order $n^{-1/d}$ for $d \ge 5$ \citep{manole2021plugin}. Consistency of the causal estimator then follows from \Cref{thm:att_consistency}. No asymptotic-linearity expansion of the form (A1) is known for these maps, so \Cref{thm:att_normality} does not apply to them. Their behaviour is instead governed by the regimes of \Cref{thm:balance-error-regimes}. If the induced bias term is $o_p(n^{-1/2})$, regime (R1) gives root-$n$ normality without a $\psi_T$ term. Otherwise regime (R4) gives a slower rate. The bias term is a signed integral, so it can converge faster than the $L_2$ rate of the map itself because of cancellation, see \eqref{eq:l2-sufficient-bound}.

\textbf{Parametric plug-in maps.}
When the covariate laws are modeled parametrically, the transport map is a smooth functional of the model parameters, and the plug-in map inherits asymptotic linearity from the parameter estimates. Under correct specification, the Gaussian plug-in map satisfies (A1) and (A2), as verified in \Cref{app:prop:gaussian-a1}, so \Cref{thm:att_normality} applies and the map contributes the influence term $\psi_T$. For the plug-in of an exact-map functional with efficiently estimated parameters, the pushforward constraint determines $\psi_T$, and the asymptotic variance admits an exact identity relative to the oracle variance \eqref{eq:oracle-variance}, stated as \Cref{cor:exact-map-cancellation} in \Cref{app:estimated-map-variance}. Under a constant conditional treatment effect the identity gives a strict variance reduction, paralleling the classical gain from weighting by an estimated rather than the true propensity score \citep{hirano2003efficient, henmi2004paradox}. In general the sign of the net effect depends on the treatment effect function. The efficiency bound for the ATT is unchanged by the choice of map.

\textbf{Entropic maps at fixed regularization.}
The empirical entropic map at fixed $\epsilon > 0$ is asymptotically
linear around its population counterpart $T_{\epsilon}$. We verify this in
\Cref{app:1b-prime}, so condition (A1) holds with
$T^{\dagger} = T_{\epsilon}$. The limit map is not exact, because
$(T_{\epsilon})_{\sharp}P_X^0 \ne P_X^1$ in general. Condition (A2)
therefore fails, the offset
$b^{\dagger} = P_X^0\{h_0(T_{\epsilon}(X))\} - P_X^1 h_0$ is nonzero in
general, and the central limit theorem is centered at the pseudo-true parameter
$\tau_{ATT} - b^{\dagger}$. Inference centered at $\tau_{ATT}$ requires
$b^{\dagger} = o(n^{-1/2})$, which holds when $h_0$ is affine. A vanishing
schedule $\epsilon = \epsilon_n \to 0$
would require a triangular-array analysis, which we leave to future work.
The balanced entropic plan behaves differently. It has
the prescribed second marginal for every fixed $\epsilon$ by
\Cref{prop:entropic-ipw}, so the in-sample identity of
\Cref{prop:empirical-plan-cancellation} applies to it exactly and the
offset vanishes.

\textbf{What quadratic optimality contributes.}
\label{rem:quadratic-optimality-variance}
Quadratic optimality itself provides no first-order efficiency gain.
Every exact map
yields the same identification, the same oracle variance
\eqref{eq:oracle-variance}, and the same efficiency bound. The
contribution of optimality enters instead through the bias term. If $h_0$
is Lipschitz with constant $L_h$, the term $(Q_k - P_X^1)h_0$ obeys
\begin{equation}
\bigl|(Q_k - P_X^1)h_0\bigr|
\le L_h\, W_1(Q_k, P_X^1) \le L_h\, W_2(Q_k, P_X^1),
\label{eq:wasserstein-balance-bound}
\end{equation}
so the out-of-sample Wasserstein error of the fitted map bounds the bias
channel directly. If $\hat T^{(-k)}$ generalizes in Wasserstein distance
at a root-$n$ rate, the estimator falls in regime (R1). A slower
Wasserstein rate still bounds this term and yields regime (R4) at that
rate. Optimality on the training sample alone does not supply this
out-of-sample generalization. The optimal map also has the smallest
displacement $\|\mathrm{Id} - T\|_{L_2(P_X^0)}$ among exact maps. We use
this property in the extrapolation analysis of \Cref{sec:interpolation},
where the transport is interpolated with the identity to control
extrapolation. Optimality does not in general minimize the actual bias,
variance, or mean squared error of the causal estimator, which depend on
$h_0$ and on the selected coupling.

\section{Extrapolation-Controlled Stable Transport Estimator for Overlap Violation}
\label{sec:interpolation}

The stable transport estimator (STE) of \Cref{sec:monge_OR} evaluates the outcome regression $\hat{h}_0$ at transported samples $\hat{T}(X)$, which follow the target distribution $P_X^1$. When the overlap assumption holds, these transported samples remain within the training domain of $\hat{h}_0$, and the estimator is consistent as shown in \Cref{thm:att_consistency}. In practice, however, the overlap assumption is frequently violated: high-dimensional covariate spaces \citep{damour2021overlap} and small sample sizes can cause $\supp(P_X^1)$ to extend well beyond $\supp(P_X^0)$. In such settings, evaluating $\hat{h}_0$ at transported samples amounts to extrapolation outside the training domain, and the resulting bias can be substantial even when the estimator avoids the variance explosion typical of IPW-based methods \citep{crump2009dealing} or the sample loss incurred by caliper-based matching \citep{austin2011optimal, kormos2023asymptotics}.

To make this precise, we decompose the bias of the transport-adjusted ATT estimator,
\begin{equation}\label{eq:bias_ATT}
\begin{aligned}
    \mathbb{E}[\hat{\tau}_{ATT}] - \tau_{ATT} &=
    -(1-r)^n\mathbb{E}[Y|A=1] +r^n\mathbb{E}[h_0(T(X))|A=0]\\
    &\quad  + (1-r^n)\bigl[\mathbb{E}[h_0(T(X))|A=0] - \mathbb{E}[{h}_0(\hat{T}(X))|A=0]\\ 
    &\qquad + 
    \mathbb{E}[{h}_0(\hat{T}(X))|A=0]
    - \mathbb{E}[\hat{h}_0(\hat{T}(X))|A=0]\bigr]\\
    &\approx \underbrace{\mathbb{E}[h_0(T(X))-{h}_0(\hat{T}(X))|A=0]}_{\text{(a) Bias from $\hat{T}$}}
    +\underbrace{\mathbb{E}[{h}_0(\hat{T}(X))-\hat{h}_0(\hat{T}(X))|A=0]}_{\text{(b) Bias from $\hat{h}_0$}},
\end{aligned}
\end{equation}
where $h_a(x) := E[Y |X= x,A= a].$
The exact bias consists of two components. Term (a) measures the error from the estimated transport map $\hat{T}$, and term (b) measures the error from the estimated outcome regression $\hat{h}_0$. Terms (a) and (b) are the negatives of the map term $B_{T,k}$ and the regression term $B_{h,k}$ in \eqref{eq:fold-drift} of \Cref{sec:theory}, so the bias equals $-B_k$ up to remainders that vanish geometrically in $n$. Crucially, both terms are evaluated at the transported population $T(X) \sim P_X^1$ rather than at the training distribution $P_X^0$. When $\hat{h}_0$ is $L_2$-consistent over $P_X^1$, this poses no asymptotic difficulty, see \Cref{thm:att_consistency}. With finite samples and a misspecified model class, however, $\hat{h}_0$ may extrapolate poorly to $P_X^1$, and term (b) can dominate the estimation error.

Because $\hat{h}_0$ is trained on the control covariates $X \sim P_X^0$ while the full transport $T$ targets $P_X^1=T_{\sharp} P_X^0$, evaluating $\hat{h}_0 \circ T$ extrapolates the regression onto $\supp \left(P_X^1\right)$. We index the interpolation by the weight $\eta \in[0,1]$ on the identity, 
$$
T_\eta=\eta\,\mathrm{Id}+(1- \eta) T,
$$ 
and write $P_X^{\eta} := (T_{\eta})_{\sharp} P_X^0$ for the induced interpolated distribution. At $\eta=0$ the transport is full, $P_X^{\eta}=P_X^1$, and extrapolation is maximal. At $\eta=1$ evaluation stays on the training support, $P_X^{\eta}=P_X^0$, and there is no extrapolation. A larger $\eta$ thus trades imbalance correction for reduced extrapolation.

When an exact map exists, the interpolation admits an explicit population bias bound. If $T_{\sharp}P_X^0 = P_X^1$ and $h_0$ is Lipschitz with constant $L_h$, then
\begin{equation}
\bigl|P_X^0\{h_0(T_{\eta}(X))\} - P_X^1 h_0\bigr|
\le L_h\,\eta\,\bigl\{P_X^0\|X - T(X)\|^2\bigr\}^{1/2}
= L_h\,\eta\, W_2(P_X^0, P_X^1),
\label{eq:interpolation-displacement-bound}
\end{equation}
and the quadratic optimal map minimizes this bound among exact maps, since $W_2(P_X^0,P_X^1) = \min\{\|\mathrm{Id} - T\|_{L_2(P_X^0)} : T_{\sharp}P_X^0 = P_X^1\}$.
The idea of this interpolated adjustment is to transport the control covariates only partway toward the treated distribution. The evaluation points then stay close enough to the training domain to avoid severe extrapolation, while still correcting part of the covariate imbalance. For the region $D\subset \supp(P_X^1)\setminus\supp(P_X^0)$, which the control data never reach, the interpolation replaces the unattainable evaluation points with approximations that remain within reach of the training support. 

This formulation stands in contrast to existing approaches that assume the outcome regression is invariant under covariate shift \citep{oberst2020characterization, johansson2019support}. Such an assumption amounts to requiring that the outcome model is correctly specified everywhere, so that extrapolation error is zero by construction. In our setting, we make no such assumption; instead, we explicitly control the degree of extrapolation through the weight $\eta$ and characterize the resulting bias trade-off.

\subsection{Bias Decomposition and Trade-off}
\label{subsec:interpolation_prop}

We now formalize the bias trade-off governing the choice of $\eta$. We call the resulting estimator the Extrapolation-Controlled Stable Transport Estimator (EC-STE). Its cross-fitted form replaces the map $\hat{T}^{(-k)}$ in the STE of \Cref{subsec:crossfit} by the interpolated map $\hat{T}^{(-k)}_{\eta}=\eta\,\mathrm{Id} + (1-\eta) \hat{T}^{(-k)}$:
$$
\hat{\tau}_{ATT}^{\eta,(k)} = \E_{P_{Y,n}^{1,(k)}}[Y] - \E_{P_{X,n}^{0,(k)}}\bigl[\hat{h}_0^{(-k)} \circ \hat{T}^{(-k)}_{\eta}\bigr],
\qquad
\hat{\tau}_{ATT}^{\eta} = \frac{1}{K}\sum_{k=1}^K \hat{\tau}_{ATT}^{\eta,(k)}.
$$
Up to remainders that vanish geometrically in $n$, the bias satisfies, suppressing the fold index,
$$
\begin{aligned}
 \E[\hat{\tau}_{ATT}^{\eta}] - \tau_{ATT}
\approx& - \underbrace{\E[(\hat{h}_0-h_0)(\hat{T}_{\eta}(X))|A=0]}_{\text{(a) Extrapolation bias}}
- \underbrace{\E[h_0(\hat{T}_{\eta}(X)) - h_0(T_{\eta}(X))|A=0]}_{\text{(b) Estimation bias of $T$.}}\\
&- \underbrace{\E[h_0(T_{\eta}(X))-h_0(T(X))|A=0]}_{\text{(c) Approximation bias}}.\\
\end{aligned}
$$
The suppressed remainders involve the empty-arm probabilities $r^n$ and $(1-r)^n$, which vanish geometrically in $n$.

The decomposition highlights a fundamental trade-off among three sources of bias. 
Term (a), the extrapolation bias, captures how accurately $\hat{h}_0$ approximates $h_0$ at the evaluation points $\hat{T}_{\eta}(X)$; it increases with $\eta$ as these points move farther from the training distribution $P_X^0$, particularly under misspecification. 
Term (b), the transport estimation bias, reflects the error in approximating $T$ by $\hat{T}$ and is present for any $\eta > 0$. 
Term (c), the approximation bias, arises from evaluating $h_0$ under $P_X^{\eta}$ rather than $P_X^1$; it vanishes at $\eta = 0$ and increases as $\eta$ increases.

The choice of $\eta$ is thus driven by the tension between terms (a) and (c). 
When $\hat{h}_0$ is well specified or extrapolates reliably, extrapolation bias remains small and the approximation bias dominates, favoring $\eta = 0$ and the full transport estimator. 
When $\hat{h}_0$ is poorly specified, extrapolation bias becomes large for $\eta$ near $0$, and larger values of $\eta$ are preferred to restrict evaluation to regions supported by $P_X^0$. 
In the limiting case $\eta \to 1$, the estimator resembles a difference-in-means approach, eliminating extrapolation error at the cost of potentially large approximation bias due to uncorrected covariate imbalance. 
Intermediate values of $\eta$ balance these two effects, trading off robustness to misspecification against accuracy in adjusting for distributional shift. Term (c) is exactly the quantity controlled by the displacement bound \eqref{eq:interpolation-displacement-bound}. When $h_0$ is Lipschitz with constant $L_h$, the approximation bias is at most $L_h\,\eta\, W_2(P_X^0, P_X^1)$.

The EC-STE remains consistent under exactly the conditions of the uninterpolated estimator, provided the interpolation weight vanishes so that the evaluation distribution approaches the target distribution.

\begin{corollary}[Consistency of the EC-STE]\label{cor:att_consistency}
Let conditions 1 to 3 of \Cref{thm:att_consistency} hold, and let
$\eta = \eta_n \to 0$ as $n\to\infty$. Then the cross-fitted EC-STE is
consistent, $\hat{\tau}^{\eta}_{ATT} \to_p \tau_{ATT}$.
\end{corollary}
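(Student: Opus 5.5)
The plan is to reduce the statement to \Cref{thm:balance-error-regimes} applied to the interpolated compositions, exactly as \Cref{thm:att_consistency} was obtained from it through \Cref{lem:consistency-bridge}. For each fold set $g_k^{\eta} := \hat h_0^{(-k)}\circ\hat T^{(-k)}_{\eta_n}$ and take the candidate limit $g_{\star} = h_0\circ T$. Replacing $g_k$ by $g_k^{\eta}$ in the expansion \eqref{eq:master-expansion}, and using $\tau_{ATT}$ as the centering, consistency follows once two facts are established: $\max_k\|g_k^{\eta}-h_0\circ T\|_{L_2(P_X^0)} = o_p(1)$, and the fold-averaged error $\overline B_n^{\eta} := K^{-1}\sum_k (P_X^0 g_k^{\eta} - P_X^1 h_0)$ is $o_p(1)$. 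The first fact makes the centered empirical term $n^{-1}\sum_i\psi_{g_\star}(p_i)$ negligible after dividing \eqref{eq:master-expansion} by $\sqrt n$. The second removes the bias term.

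For the $L_2$ convergence, I would use the triangle inequality and split the error into three pieces.
\begin{align*}
\|g_k^{\eta}-h_0\circ T\|_{L_2(P_X^0)}
&\le \|\hat h_0^{(-k)}\circ\hat T^{(-k)}_{\eta}-\hat h_0^{(-k)}\circ\hat T^{(-k)}\|_{L_2(P_X^0)}
+\|\hat h_0^{(-k)}\circ\hat T^{(-k)}-\hat h_0^{(-k)}\circ T\|_{L_2(P_X^0)}\\
&\quad+\|(\hat h_0^{(-k)}-h_0)\circ T\|_{L_2(P_X^0)}.
\end{align*}
By condition 3, the first piece is at most $L^{(-k)}\eta_n\|\hat T^{(-k)}-\mathrm{Id}\|_{L_2(P_X^0)}$. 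Since $\|\hat T^{(-k)}-\mathrm{Id}\|_{L_2(P_X^0)} \le \|\hat T^{(-k)}-T\|_{L_2(P_X^0)} + W_2(P_X^0,P_X^1)$, and the right-hand side is $O_p(1)$ by condition 1(b) and finite second moments, this piece is $O_p(1)\cdot\eta_n\cdot O_p(1)=o_p(1)$. The second piece is at most $L^{(-k)}\|\hat T^{(-k)}-T\|_{L_2(P_X^0)}=o_p(1)$. The third piece equals $\|\hat h_0^{(-k)}-h_0\|_{L_2(P_X^1)}$ by the pushforward $T_\sharp P_X^0=P_X^1$, and it is $o_p(1)$ by 1(a). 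The last two pieces are precisely the content of \Cref{lem:consistency-bridge}, so only the new $\eta_n$ piece needs an argument. Because $K$ is fixed, taking the maximum over folds preserves $o_p(1)$.

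For the bias, I would note that $|P_X^0 g_k^{\eta}-P_X^1h_0| = |P_X^0(g_k^{\eta}-h_0\circ T)| \le \|g_k^{\eta}-h_0\circ T\|_{L_2(P_X^0)}$ by Cauchy--Schwarz. The bias is therefore controlled by the same $o_p(1)$ bound, and this directly absorbs the approximation term (c), which is bounded by $L_h\eta_n W_2$ as in \eqref{eq:interpolation-displacement-bound}.

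The main obstacle is a technical one: condition 3 supplies a Lipschitz constant for $\hat h_0^{(-k)}$ over all of $\calx$, but the proof must make sure the interpolated points $\hat T_{\eta}(X)$ lie in $\calx$. If $\calx$ is convex, such as $\R^d$, this holds automatically. Otherwise I would assume, or note, that $\calx$ is taken convex. The moment condition 2(b) is used only if one wants convergence in mean rather than in probability. Working in probability throughout, as above, avoids it.
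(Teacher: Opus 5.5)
Your argument is correct, but it takes a different route from the paper's written proof of this corollary.

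\textbf{The paper's route.} It re-runs the direct proof of \Cref{thm:att_consistency}. It splits $|\hat\tau^{\eta,(k)}_{ATT}-\tau_{ATT}|$ into the treated-mean error and three empirical-average terms. The map term (II.a) is bounded by the Lipschitz property and Cauchy--Schwarz on the evaluation fold, then passed to the population norm by a law of large numbers conditional on $\Omega^{(-k)}$. The proof closes with $\|\hat T^{(-k)}_\eta - T\|_{L_2(P_X^0)}\le \eta\, W_2(P_X^0,P_X^1)+(1-\eta)\|\hat T^{(-k)}-T\|_{L_2(P_X^0)}$.

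\textbf{Your route.} You verify hypothesis 1 of \Cref{thm:balance-error-regimes} for the interpolated compositions, which is the analogue of \Cref{lem:consistency-bridge} and essentially Step 0 of the proof of \Cref{cor:interpolated-ste}. You then control $\overline B_n^{\eta}$ by Cauchy--Schwarz. The key Lipschitz-plus-displacement inequality is the same in both proofs; you simply package it at the population level.

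\textbf{What each buys.} Given the master expansion, your route is shorter and never uses condition 2(b). It also gives the sharper statement $\hat\tau^{\eta}_{ATT}-\tau_{ATT}=O_p(n^{-1/2})-\overline B_n^{\eta}$, which isolates the bias term as the only obstruction to consistency, in line with the regime classification. The paper's route is self-contained and elementary: it needs only conditional laws of large numbers, not the delta-method and conditional Chebyshev machinery behind \eqref{eq:master-expansion}.

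\textbf{Two small remarks.} First, the negligibility of $n^{-1}\sum_i\psi_{g_\star}(p_i)$ comes from the law of large numbers for the fixed function $g_\star=h_0\circ T$, which lies in $L_2(P_X^0)$ by condition 2(a). Your first fact is instead what licenses the expansion in the first place. Second, your concern that $\hat T^{(-k)}_\eta(X)$ may leave $\calx$ is legitimate. It applies equally to the paper's bound on (II.a), which invokes the Lipschitz property at these points without comment.
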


The corollary requires only $L_2$ consistency of the map estimator, so it
inherits the full scope of \Cref{thm:att_consistency}. In particular it
covers the barycentric projection and its nearest-neighbour extension, for
which no asymptotic-linearity expansion is known and the central limit
theory of \Cref{sec:clt-conditions} is unavailable. Consistency of the
EC-STE therefore costs nothing beyond the conditions of the uninterpolated
estimator. The only genuinely new requirement is that the interpolation
weight vanishes. If $\eta$ is held fixed, the approximation bias, term (c)
above, persists, and consistency for $\tau_{ATT}$ fails. In practice, any consistent map estimator therefore yields a consistent
EC-STE under near-violation of overlap. The weight $\eta$ controls
extrapolation in finite samples, and \Cref{cor:interpolated-ste} below
strengthens the guarantee to a central limit theorem when the map
satisfies condition (A1).

Under the asymptotic-linearity conditions of \Cref{sec:clt-conditions}, the decomposition \eqref{eq:fold-drift} of \Cref{sec:theory} quantifies the effect of the interpolation exactly. The interpolated estimator integrates the regression error against $P_X^{\eta}$ instead of $P_X^1$, and adds the deterministic imputation bias
$b_{\eta} := (P_X^{\eta} - P_X^1)h_0$, which equals term (c) above and is bounded by \eqref{eq:interpolation-displacement-bound}. \Cref{cor:interpolated-ste} shows that $b_{\eta}$ is linear in $\eta$ to first order, with an estimable coefficient, so the interpolation can be tuned or debiased without changing the central limit theorem or its asymptotic variance. Define the first-order bias coefficient
\begin{equation}
c_1 := \E_{P_X^0}\bigl[\nabla h_0(T(X))^{\top}\bigl(X - T(X)\bigr)\bigr],
\label{eq:interp-bias-coefficient}
\end{equation}
which is the derivative of the imputed mean along the displacement field toward the identity, and its cross-fitted plug-in
\begin{equation}
\hat c_1 := \frac{1}{K}\sum_{k=1}^K
P_{X,n}^{0,(k)}\Bigl[\nabla\hat h_0^{(-k)}\bigl(\hat T^{(-k)}(X)\bigr)^{\top}
\bigl(X - \hat T^{(-k)}(X)\bigr)\Bigr].
\label{eq:interp-bias-plugin}
\end{equation}

\begin{corollary}[CLT for the interpolated STE]
\label{cor:interpolated-ste}
Let the conditions of \Cref{thm:att_normality} hold, namely (A1) with the exact limit map of (A2), together with (A3), (A4) and (A5). Let
$W_2 := W_2(P_X^0, P_X^1) < \infty$, and let $\eta_n \in [0,1]$ be a
deterministic sequence with $\eta_n \to 0$. Let
$\hat\tau_{ATT}^{\eta}$ denote the cross-fitted EC-STE built from
$g_{k,\eta} := \hat h_0^{(-k)} \circ \hat T^{(-k)}_{\eta_n}$, and let $V$ be
the asymptotic variance \eqref{eq:combined-variance}.
\begin{enumerate}
\item[(i)] \emph{Oracle regime.} If $\eta_n = o(n^{-1/2})$, then
$$
\sqrt n\,\bigl(\hat\tau_{ATT}^{\eta} - \tau_{ATT}\bigr) \Rightarrow N(0, V).
$$
The interpolation has no first-order effect, and the limit coincides with
that of the uninterpolated estimator.
\item[(ii)] \emph{Local-bias regime.} If
$\sqrt n\,\eta_n \to \kappa \in [0, \infty)$, then
$$
\sqrt n\,\bigl(\hat\tau_{ATT}^{\eta} - \tau_{ATT}\bigr)
\Rightarrow N(-\kappa\, c_1,\; V).
$$
Moreover $\hat c_1 \to_p c_1$, so the recentered estimator
$\hat\tau_{ATT}^{\eta} + \eta_n \hat c_1$ satisfies
$\sqrt n(\hat\tau_{ATT}^{\eta} + \eta_n\hat c_1 - \tau_{ATT}) \Rightarrow
N(0, V)$. Confidence intervals in this regime require only consistency of
$\hat c_1$.
\item[(iii)] \emph{Debiased interpolation.} Define
$\hat\tau^{\,dc}_{ATT,\eta} := \hat\tau_{ATT}^{\eta} + \eta_n\,\hat c_1$. If
$\eta_n = o(n^{-1/4})$ and the product condition
\begin{equation}
\eta_n \cdot \max_{1\le k\le K}
\bigl\|\nabla\hat h_0^{(-k)} - \nabla h_0\bigr\|_{L_2(P_X^1)}
= o_p(n^{-1/2})
\label{eq:interp-product-condition}
\end{equation}
holds, then
$$
\sqrt n\,\bigl(\hat\tau^{\,dc}_{ATT,\eta} - \tau_{ATT}\bigr)
\Rightarrow N(0, V).
$$
\end{enumerate}
In all three regimes the asymptotic variance is the unchanged $V$ of
\Cref{thm:att_normality}. The map contributes its variance through
$(1-\eta_n)\psi_T \to \psi_T$.
\end{corollary}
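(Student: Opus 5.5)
The plan is to feed the interpolated compositions $g_{k,\eta}=\hat h_0^{(-k)}\circ\hat T^{(-k)}_{\eta_n}$ into the master expansion \eqref{eq:master-expansion} of \Cref{thm:balance-error-regimes}, with limit $g_\star=h_0\circ T$. Hypothesis 1 holds because
\[
\|\hat T^{(-k)}_{\eta_n}-T\|_{L_2(P_X^0)}\le (1-\eta_n)\|\hat T^{(-k)}-T\|_{L_2(P_X^0)}+\eta_n\|\mathrm{Id}-T\|_{L_2(P_X^0)}.
\]
The first term is $O_p(n^{-1/2})$ by (A1). The second equals $\eta_n W_2$, which tends to zero. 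Combined with the Lipschitz bounds in (A4) and (A3), this gives $\|g_{k,\eta}-h_0\circ T\|_{L_2(P_X^0)}=o_p(1)$. Everything then reduces to expanding the bias term $\overline B_n^{\eta}=K^{-1}\sum_k(P_X^0 g_{k,\eta}-P_X^1h_0)$. I will split it as in \eqref{eq:fold-drift}:
\[
B_k^{\eta}=\underbrace{P_X^0\{h_0(\hat T_\eta^{(-k)})-h_0(T_\eta)\}}_{\text{map}}+\underbrace{b_\eta}_{(P_X^\eta-P_X^1)h_0}+\underbrace{P_X^0\{(\hat h_0^{(-k)}-h_0)(\hat T^{(-k)}_\eta)\}}_{\text{regression}}.
\]

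\emph{Map term.} Note that $\hat T_\eta-T_\eta=(1-\eta_n)(\hat T-T)$. A first-order Taylor expansion of $h_0$ at $T_\eta(X)$, using the $C^{1,1}$ bound of (A4), gives
\[
(1-\eta_n)P_X^0[\nabla h_0(T_\eta)^\top(\hat T-T)]+O(\|\hat T-T\|^2_{L_2}).
\]
The quadratic remainder is $O_p(n^{-1})$. Replacing $\nabla h_0(T_\eta)$ by $\nabla h_0(T)$ costs at most $L^\ast_{\nabla h_0}\eta_n W_2\,\|\hat T-T\|_{L_2}=o_p(n^{-1/2})$. Averaged over folds, this reproduces the expansion of \Cref{cor:al-transport} multiplied by $(1-\eta_n)$, namely $(1-\eta_n)n^{-1}\sum_i\psi_T+o_p(n^{-1/2})$. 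Since $\eta_n\to0$, this equals $n^{-1}\sum_i\psi_T+o_p(n^{-1/2})$, so the variance $V$ is unchanged.

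\emph{Regression term.} This is where I expect the main obstacle: (A3) controls $\hat h_0-h_0$ in $L_2(P_X^1)$, but the term is integrated against $\hat Q_\eta=(\hat T_\eta)_\sharp P_X^0$, which is not $P_X^1$. I would write
\[
(\hat h_0-h_0)(\hat T_\eta)=(\hat h_0-h_0)(T)+\bigl[(\hat h_0-h_0)(\hat T_\eta)-(\hat h_0-h_0)(T)\bigr].
\]
The first piece is $o_p(n^{-1/2})$ by (A3), since $T_\sharp P_X^0=P_X^1$. The bracket is handled by a mean-value expansion along the segment from $T(X)$ to $\hat T_\eta(X)$. Its leading part is $P_X^0[(\nabla\hat h_0-\nabla h_0)(T)^\top(\hat T_\eta-T)]$, and the curvature correction is bounded by $(M^{(-k)}+L^\ast_{\nabla h_0})\|\hat T_\eta-T\|^2_{L_2}$. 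By Cauchy–Schwarz the leading part is at most
\[
\|\nabla\hat h_0-\nabla h_0\|_{L_2(P_X^1)}\bigl(O_p(n^{-1/2})+\eta_n W_2\bigr).
\]
The $O_p(n^{-1/2})$ contribution is $o_p(n^{-1/2})$ by (A4). The $\eta_n$ contribution is exactly the product condition \eqref{eq:interp-product-condition}. The curvature correction is $O_p(n^{-1}+\eta_n^2)$. In regimes (i) and (ii) these are automatically small, since $\eta_n=O(n^{-1/2})$ and the gradient error is $o_p(1)$. In regime (iii) they require $\eta_n^2=o(n^{-1/2})$, which is the assumption $\eta_n=o(n^{-1/4})$, together with the product condition.

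\emph{Deterministic bias and regimes.} Expand $b_\eta=P_X^0\{h_0(T+\eta_n(X-T))-h_0(T)\}$. By Taylor's theorem with $C^{1,1}$ remainder, $b_\eta=\eta_n c_1+R_\eta$ with $|R_\eta|\le \tfrac12L^\ast_{\nabla h_0}\eta_n^2W_2^2$; the constant $c_1$ is finite because $|c_1|\le L_hW_2$. In (i), $\sqrt n\,b_\eta\to0$ and regime (R1)/(R3) applies, exactly as in \Cref{thm:att_normality}. In (ii), $\sqrt n\,b_\eta\to\kappa c_1$, and (R2) combined with (R3) gives $N(-\kappa c_1,V)$. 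For (iii) and for recentring in (ii), I show $\hat c_1\to_p c_1$. The argument replaces $\nabla\hat h_0,\hat T$ by $\nabla h_0,T$, with error controlled by the Lipschitz bounds of (A4), the $L_2$ map consistency, and the gradient consistency. The remaining fold average of $\nabla h_0(T)^\top(X-T)$ converges by the law of large numbers, its integrability coming from $L_h$ and $W_2<\infty$. A sharper look at (iii) shows that $\eta_n(\hat c_1-c_1)$ must be $o_p(n^{-1/2})$. The gradient part of this is exactly the product condition again. The map-error part is $\eta_n\,O_p(n^{-1/2})=o_p(n^{-1/2})$. The empirical fluctuation is $\eta_nO_p(n^{-1/2})$. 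Hence $\hat\tau^{dc}$ cancels $\eta_nc_1$ up to $R_\eta=o(n^{-1/2})$, giving $N(0,V)$.
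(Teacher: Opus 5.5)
Your proposal is correct and follows essentially the same route as the paper's proof: the master expansion with $g_\star=h_0\circ T$, the three-way split of $B_{k,\eta}$ into map noise (giving $(1-\eta_n)\psi_T$), the deterministic bias $b_\eta=\eta_n c_1+R_\eta$, and a regression term controlled by mean-value bounds, Cauchy--Schwarz and the pushforward identity, plus a plug-in argument for $\hat c_1$ whose $\eta_n$-weighted error is handled by the product condition in part (iii). The only cosmetic difference is that you bound the regression term along a single segment from $T(X)$ to $\hat T^{(-k)}_{\eta}(X)$, whereas the paper routes it through $P_X^{\eta}$ in three pieces; both give the same remainders $O_p(\eta_n^2)+O_p\bigl(\eta_n\max_k\|\nabla\hat h_0^{(-k)}-\nabla h_0\|_{L_2(P_X^1)}\bigr)+o_p(n^{-1/2})$.
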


\begin{remark}[Interpretation]
\label{rem:interpolated-ste-reading}
(a) \emph{Effect of the interpolation.} For the interpolated estimator,
the quantity $B_{k,\eta}$ decomposes exactly as
$$
B_{k,\eta} = b_{\eta} + (Q_{k,\eta} - P_X^{\eta})h_0
 + Q_{k,\eta}\bigl(\hat h_0^{(-k)} - h_0\bigr),
\qquad
b_{\eta} = \eta\,c_1 + R_{\eta},
\quad |R_{\eta}| \le \tfrac{L^{\ast}_{\nabla h_0}}{2}\,\eta^2 W_2^2,
$$
where $Q_{k,\eta} := (\hat T^{(-k)}_{\eta})_{\sharp}P_X^0$. The
interpolation thus converts part of the statistical estimation error
into a deterministic bias that is bounded by
\eqref{eq:interpolation-displacement-bound}, linear to first order, and
estimable by \eqref{eq:interp-bias-plugin}. Under (A3) the regression term stays asymptotically negligible at every
$\eta$, while the measure it is integrated against moves from $P_X^1$
toward $P_X^0$, which is how the interpolation reduces extrapolation. For regressions
converging at exactly the root-$n$ rate, the same expansion holds with the
additional influence term $\psi_h$ of \Cref{app:regression-al}.
(b) \emph{Choice of $\eta_n$.} In regime (ii), $\kappa$ is an explicit
trade-off parameter with known local bias $\kappa c_1$. Regime (iii) allows a
larger interpolation weight, $\eta_n = o(n^{-1/4})$ instead of
$o(n^{-1/2})$, in exchange for the product condition
\eqref{eq:interp-product-condition} on the regression gradient rate. For
instance, a gradient rate of $o_p(n^{-1/4})$ suffices when
$\eta_n \asymp n^{-1/4}$.
(c) \emph{Affine case.} If $h_0(x) = \beta_0 + \beta_1^{\top}x$, then
$c_1 = \beta_1^{\top}(\mu_0 - \mu_1)$ with
$\mu_a = \E[X\mid A=a]$, $R_{\eta} \equiv 0$, and
$b_{\eta} = \eta\,c_1$ exactly for every $\eta$.
(d) \emph{Relation to the consistency corollary.} A fixed $\eta$ places the
estimator in regime (R4) of \Cref{thm:balance-error-regimes} with
pseudo-true parameter $\tau_{ATT} - b_{\eta}$. \Cref{cor:att_consistency}
establishes consistency under only $L_2$ consistency of the map estimator;
under the stronger condition (A1), the present corollary strengthens
the conclusion to a central limit theorem with explicit and removable
bias.
\end{remark}

\subsection{Selecting the Interpolation Parameter} 

The bias decomposition above shows that the optimal $\eta$ depends on the unknown extrapolation error $\E[(\hat{h}_0 - h_0)(\hat{T}_\eta(X)) | A=0]$. Since $h_0$ is unobserved, this quantity cannot be computed directly. Instead, we adopt a distribution-based, data-driven strategy. We select $\eta$ by controlling the distributional discrepancy between the training domain $P_X^0$ and the evaluation domain $P_X^{\eta}$. The rationale is that if $P_X^{\eta}$ remains close to $P_X^0$, the outcome regression $\hat{h}_0$ is evaluated near its training data, limiting extrapolation regardless of model misspecification.
 
Formally, under the assumption that $P_X^0, P_X^1, P_X^{\eta} \in \calp_{p,ac}(\calx)$, we select $\eta$ by solving a one-dimensional program
\begin{equation}\label{eq:eta_condition}
\begin{aligned}
    &\max_{\eta} u(\eta)\\
    &\text{s.t.} \eta\in[0,1],  d(P_X^0, P_X^{\eta}) \leq c
\end{aligned}
\end{equation}
for a chosen discrepancy measure $d$, a tolerance $c > 0$, and a utility function $u(\eta)$. The natural choice $u(\eta) = -\eta$ selects the smallest feasible $\eta$, which brings the evaluation domain as close to the target $P_X^1$ as the constraint allows.

The choice of discrepancy measure $d$ determines the type of distributional deviation being controlled. We consider four classes, ordered from weakest to strongest geometric assumptions, and for each we establish how the bound on $|g(0) - g(\eta)|$ relates to the constraint on $d(P_X^0, P_X^{\eta})$. Here, $f(x) = |\hat{h}_0(x) - h_0(x)|$ denotes the pointwise estimation error, $g(\eta) = \E_{X\sim P_X^{\eta}}[f(X)]$ is the expected estimation error under the interpolated distribution, and $g(0) = \E_{X\sim P_X^0}[f(X)]$ is the in-distribution estimation error. The quantity $|g(0) - g(\eta)|$ measures how much additional bias is incurred by evaluating at $P_X^{\eta}$ rather than $P_X^0$.

\textbf{Integral probability metrics (IPMs).} 
IPMs impose the weakest assumptions, requiring only that the estimation error $f$ lie in a test-function class $\calf$. For a symmetric class $\calf$, the associated IPM is
$$
d_\calf(P_X^0,P_X^\eta):=\sup_{f\in\calf}\Bigl|\int f(x)d(P_X^0-P_X^\eta)(x)\Bigr|.
$$
\begin{prop}\label{prop:ipm}
Assume $f\in\calf$ for a symmetric test-function class $\calf$. Then the bias incurred by evaluating at $P_X^\eta$ rather than $P_X^0$ is controlled by the IPM,
$$
|g(0)-g(\eta)|\le d_\calf(P_X^0,P_X^\eta),
$$
so the $\hat\eta$ obtained from \eqref{eq:eta_condition} with $d=d_\calf$ satisfies $|g(0)-g(\hat\eta)|\le c$.
\end{prop}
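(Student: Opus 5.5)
The plan is to unfold the definitions and apply the supremum in the IPM once. By definition, $g(\eta)=\int f\,dP_X^\eta$ and $g(0)=\int f\,dP_X^0$, so
$$
g(0)-g(\eta)=\int f(x)\,d(P_X^0-P_X^\eta)(x).
$$
The first step is to check that both integrals are finite. The expression $\int f\,d(P_X^0-P_X^\eta)$ is implicitly assumed finite for every $f\in\calf$ whenever $d_\calf$ is defined, so I will take this as part of the standing assumption $f\in\calf$. Alternatively, it suffices to note $f\in L_1(P_X^0)\cap L_1(P_X^\eta)$. With integrability in hand, the difference of integrals equals the integral against the signed measure $P_X^0-P_X^\eta$.

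The second step takes absolute values. Since $f\in\calf$, the quantity $|\int f\,d(P_X^0-P_X^\eta)|$ is one of the terms over which the supremum defining $d_\calf$ is taken. Hence $|g(0)-g(\eta)|\le d_\calf(P_X^0,P_X^\eta)$. Symmetry of $\calf$ (closure under $f\mapsto -f$) is not strictly needed here because the absolute value is already inside the supremum. I will remark that symmetry only ensures that the version of the IPM without an absolute value coincides with the one stated.

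The third step handles the selected $\hat\eta$. By construction, $\hat\eta$ solves \eqref{eq:eta_condition} with $d=d_\calf$, so it is feasible: $d_\calf(P_X^0,P_X^{\hat\eta})\le c$. Applying the bound at $\eta=\hat\eta$ gives $|g(0)-g(\hat\eta)|\le c$.

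The only point requiring care is the existence of a feasible $\hat\eta$ attaining the program. $\eta=1$ gives $P_X^1=P_X^0$ in the interpolation indexing, so $d_\calf=0\le c$ and the feasible set is nonempty. If the maximum of $u(\eta)=-\eta$ is not attained, I will simply interpret $\hat\eta$ as any feasible point. The bound holds for every feasible $\eta$, so attainment is not needed for the conclusion.
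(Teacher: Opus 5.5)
Your proposal is correct and matches the paper's proof. Both write $g(0)-g(\eta)$ as $\int f\,d(P_X^0-P_X^\eta)$, bound its absolute value by the supremum defining $d_\calf$ because $f\in\calf$, and conclude for $\hat\eta$ by feasibility in \eqref{eq:eta_condition}. Your side remarks on integrability, on symmetry being unnecessary once the absolute value sits inside the supremum, and on feasibility of $\eta=1$ are fine, except that you should write $P_X^{\eta}=P_X^0$ at $\eta=1$ rather than $P_X^1=P_X^0$, since $P_X^1$ denotes the treated law.
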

\begin{proof}
Since $f\in\calf$,
$$
|g(0)-g(\eta)|=\Bigl|\int fd(P_X^0-P_X^\eta)\Bigr|
\le \sup_{f\in\calf}\Bigl|\int fd(P_X^0-P_X^\eta)\Bigr|
= d_\calf(P_X^0,P_X^\eta),
$$
and feasibility of $\hat\eta$ in \eqref{eq:eta_condition} gives $d_\calf(P_X^0,P_X^{\hat\eta})\le c$.
\end{proof}

For a generic class $\calf$, the map $\eta\mapsto d_\calf(P_X^0,P_X^\eta)$ need not be linear or even monotone in $\eta$, since the optimal witness function may itself vary with $\eta$. The feasible set $\{\eta\in[0,1]:d_\calf(P_X^0,P_X^\eta)\le c\}$ is therefore not necessarily an interval, and the selection is carried out numerically, as discussed under implementation below.

The Lipschitz class is the leading special case. Take $\calf=\{f:\mathrm{Lip}(f)\le 1\}$, so that $d_\calf=W_1$ by Kantorovich--Rubinstein duality, and let $T$ be the quadratic-cost optimal map fixed in \Cref{sec:monge_OR}. Because $T$ is optimal, McCann's displacement interpolation makes $T_\eta$ the quadratic-cost optimal map between $P_X^0$ and $P_X^\eta$, and the curve $\eta\mapsto P_X^\eta$ is a constant-speed geodesic in $(\calp_2(\calx),W_2)$. This gives the exact scaling
$$
W_2(P_X^0,P_X^\eta)=(1-\eta)W_2(P_X^0,P_X^1),
\qquad
W_1(P_X^0,P_X^\eta)\le W_2(P_X^0,P_X^\eta).
$$
Imposing the computable constraint $W_2(P_X^0,P_X^\eta)\le c$ therefore enforces the IPM bound of \Cref{prop:ipm} and admits the closed-form maximizer
$$
\eta^\star=1-\min\Bigl(1,\ \frac{c}{W_2(P_X^0,P_X^1)}\Bigr),
$$
which is nontrivial with $\eta^\star<1$ precisely when $c<W_2(P_X^0,P_X^1)$.

\textbf{$W_p$ distance. } When the transport map $T$ is defined under the $W_p$ geometry, the constant-speed geodesic property provides $W_p(P_X^0, P_X^{\eta}) = (1-\eta) W_p(P_X^0, P_X^1)$, giving a structure analogous to the IPM case but under the stronger assumption that $f$ is Lipschitz. In addition, the Jensen inequality step loosens the bound from $W_1$ to $W_p$.

In particular, the Wasserstein distance induces a weak geometry on $\mathcal{P}_p(\calx)$, requiring only finite $p$-th moments. When $T$ is constructed under this geometry, the interpolated distribution $P_X^{\eta}$ evolves linearly in $W_p$, so its distance from $P_X^0$ is fully characterized by the interpolation parameter $\eta$ alone.
\begin{prop}
Assume $f$ is $L$-Lipschitz, $P_X^0,P_X^1\in\calp_p(\calx)$, and $T_{\sharp} P_X^0=P_X^1$ with $T$ the optimal map under the cost $\|\cdot\|^p$. With $W_p$ the Wasserstein distance of order $p$,
$$
|g(0)-g(\eta)|\le L W_p(P_X^0,P_X^\eta)=L(1-\eta)W_p(P_X^0,P_X^1).
$$
\end{prop}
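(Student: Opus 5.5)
The plan has two parts. First, bound $|g(0)-g(\eta)|$ by an $L$-Lipschitz coupling argument together with Jensen's inequality. Second, identify $W_p(P_X^0,P_X^\eta)$ exactly via the geodesic structure of the interpolation.

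\emph{Coupling bound.} The map $T_\eta=\eta\,\mathrm{Id}+(1-\eta)T$ pushes $P_X^0$ to $P_X^\eta$. Hence for any coupling $\pi$ of $(P_X^0,P_X^\eta)$,
$$
|g(0)-g(\eta)|=\Bigl|\int (f(x)-f(y))\,d\pi(x,y)\Bigr|\le L\int\|x-y\|\,d\pi .
$$
Taking the infimum over $\pi$ gives $L\,W_1(P_X^0,P_X^\eta)$ by Kantorovich--Rubinstein duality, which is \Cref{prop:ipm} with $\calf$ the $L$-Lipschitz class. Jensen's (or H\"older's) inequality, $\int\|x-y\|\,d\pi\le(\int\|x-y\|^p d\pi)^{1/p}$ for $p\ge1$, then gives $W_1\le W_p$.

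\emph{Upper bound on the distance.} Use the particular coupling $(\mathrm{Id},T_\eta)_\sharp P_X^0$. Since $x-T_\eta(x)=(1-\eta)(x-T(x))$,
$$
W_p(P_X^0,P_X^\eta)\le(1-\eta)\Bigl(\int\|x-T(x)\|^p\,dP_X^0\Bigr)^{1/p}=(1-\eta)W_p(P_X^0,P_X^1),
$$
where the last equality uses optimality of $T$ under cost $\|\cdot\|^p$. This alone already yields the inequality $|g(0)-g(\eta)|\le L(1-\eta)W_p(P_X^0,P_X^1)$, which is all the bound on $g$ needs.

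\emph{Reverse inequality.} To justify the stated equality of distances, prove the reverse by the triangle inequality along the geodesic. The map $(T_\eta,T)$ gives a coupling of $(P_X^\eta,P_X^1)$ with $\|T_\eta(x)-T(x)\|=\eta\|x-T(x)\|$, so $W_p(P_X^\eta,P_X^1)\le\eta W_p(P_X^0,P_X^1)$. Then
$$
W_p(P_X^0,P_X^1)\le W_p(P_X^0,P_X^\eta)+W_p(P_X^\eta,P_X^1)\le W_p(P_X^0,P_X^\eta)+\eta W_p(P_X^0,P_X^1),
$$
which forces $W_p(P_X^0,P_X^\eta)\ge(1-\eta)W_p(P_X^0,P_X^1)$. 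This is the constant-speed geodesic property. It avoids invoking McCann's theory or any claim that $T_\eta$ is itself optimal.

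\emph{Main obstacle.} The only subtle point is confirming that $T$ attains $W_p(P_X^0,P_X^1)$. This needs existence of the Monge map, which the paper secures when $P_X^0$ is absolutely continuous and $p>1$ (Brenier/Gangbo--McCann as recalled in \Cref{sec:monge_OR}). The proposition's hypothesis that $T$ is the optimal map covers this. Finiteness of every quantity follows from $P_X^0,P_X^1\in\calp_p(\calx)$.
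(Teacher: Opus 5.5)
Your proof is correct. It reaches the same two conclusions as the paper, but it handles the geodesic step differently.

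The paper works with the single coupling $(\mathrm{Id},T_\eta)_\sharp P_X^0$ from the start. Lipschitz continuity and Jensen give $|g(0)-g(\eta)|\le L\bigl(\int\|x-T_\eta(x)\|^p\,dP_X^0\bigr)^{1/p}$. It then cites McCann's displacement interpolation to assert that $T_\eta$ is optimal between $P_X^0$ and $P_X^\eta$, so this quantity equals $L\,W_p(P_X^0,P_X^\eta)$. It cites the constant-speed geodesic property for $W_p(P_X^0,P_X^\eta)=(1-\eta)W_p(P_X^0,P_X^1)$.

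You instead bound by $L\,W_1\le L\,W_p$ over arbitrary couplings. The primal definition of $W_1$ suffices for this direction; Kantorovich--Rubinstein duality is not needed. You then prove the geodesic identity yourself: the couplings $(\mathrm{Id},T_\eta)_\sharp P_X^0$ and $(T_\eta,T)_\sharp P_X^0$ give upper bounds $(1-\eta)W_p$ and $\eta W_p$, and the triangle inequality forces equality.

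What your route buys is self-containedness for every $p\ge 1$. McCann's displacement interpolation is classical for the quadratic cost. For general $p$, the fact the paper needs is the geodesic-space statement that restrictions of optimal couplings along straight-line interpolations remain optimal, and your triangle-inequality argument is exactly its proof. Note also that you do not really avoid the optimality of $T_\eta$, contrary to your remark; you prove it as a byproduct. Once $W_p(P_X^0,P_X^\eta)=(1-\eta)W_p(P_X^0,P_X^1)$ is established, the coupling $(\mathrm{Id},T_\eta)_\sharp P_X^0$ attains it.

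The paper's version is shorter because it outsources this step to the literature. As you also observe, the bound $|g(0)-g(\eta)|\le L(1-\eta)W_p(P_X^0,P_X^1)$ follows in either approach by direct computation with the specific coupling. Only the identification with $W_p(P_X^0,P_X^\eta)$ requires the geodesic argument.
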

\begin{proof}
By the Lipschitz property of $f$ followed by Jensen's inequality, which yields $L^1\le L^p$ under any probability measure,
$$
|g(0)-g(\eta)|\le \int|f(x)-f(T_\eta(x))|dP_X^0
\le L\int\|x-T_\eta(x)\|dP_X^0
\le L\Bigl(\int\|x-T_\eta(x)\|^p dP_X^0\Bigr)^{1/p}.
$$
Since $T$ is optimal, McCann's displacement interpolation makes $T_\eta= \eta\,\mathrm{Id}+(1-\eta) T$ the optimal map between $P_X^0$ and $P_X^\eta$, so the last quantity equals $L W_p(P_X^0,P_X^\eta)$. The same constant-speed geodesic property gives $W_p(P_X^0,P_X^\eta)=(1-\eta) W_p(P_X^0,P_X^1)$, and \eqref{eq:eta_condition} yields $\eta=1-\min\bigl(1,c/W_p(P_X^0,P_X^1)\bigr)$.
\end{proof}

\textbf{$\chi^2$ distance.} 
Moving to stronger geometric assumptions, the $\chi^2$ divergence requires the existence of the density ratio $dP_X^0/dP_X^{\eta}$. It provides a tighter bound when overlap is nearly violated. In that case the density ratio exists but is large.
\begin{prop}\label{prop:chisq}
Assume $P_X^0\ll P_X^\eta$ and $C:=\sup_{\eta\in[0,1]}\|f\|_{L_2(P_X^\eta)}<\infty$. Then for any $\eta$,
$$
|g(0)-g(\eta)|\le \|f\|_{L_2(P_X^\eta)}\cdot \chi^2(P_X^0\|P_X^\eta)^{1/2}
\le C\cdot \chi^2(P_X^0\|P_X^\eta)^{1/2},
$$
and the selected $\hat\eta$ from \eqref{eq:eta_condition} (with $d=\chi^2$, tolerance $c$) satisfies $|g(0)-g(\hat\eta)|\le Cc^{1/2}$.
\end{prop}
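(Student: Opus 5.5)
The plan is to mirror the proof of \Cref{prop:ipm}: write the bias difference as a single integral against $P_X^\eta$ using the density ratio, and then apply Cauchy--Schwarz. The assumption $P_X^0\ll P_X^\eta$ guarantees that the Radon--Nikodym derivative $w_\eta:=dP_X^0/dP_X^\eta$ exists. It is nonnegative, satisfies $\int w_\eta\,dP_X^\eta=1$, and gives $g(0)=\int f\,dP_X^0=\int f w_\eta\,dP_X^\eta$. Subtracting $g(\eta)=\int f\,dP_X^\eta$ yields the exact identity
\begin{equation*}
g(0)-g(\eta)=\int f\,(w_\eta-1)\,dP_X^\eta ,
\end{equation*}
valid whenever both integrals are finite.

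Next, apply Cauchy--Schwarz in $L_2(P_X^\eta)$:
\begin{equation*}
|g(0)-g(\eta)|\le \|f\|_{L_2(P_X^\eta)}\Bigl(\int (w_\eta-1)^2\,dP_X^\eta\Bigr)^{1/2}.
\end{equation*}
The second factor is $\chi^2(P_X^0\|P_X^\eta)^{1/2}$ by definition of the $\chi^2$ divergence. This gives the first inequality. The second inequality follows because $\|f\|_{L_2(P_X^\eta)}\le C$ holds uniformly in $\eta$ by the definition of $C$ as a supremum.

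For the selection statement, feasibility of $\hat\eta$ in \eqref{eq:eta_condition} with $d=\chi^2$ means $\chi^2(P_X^0\|P_X^{\hat\eta})\le c$. Substituting this into the bound at $\eta=\hat\eta$ gives $|g(0)-g(\hat\eta)|\le C c^{1/2}$.

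The only delicate point is integrability, i.e.\ justifying that $\int f\,dP_X^0$ and the change of measure are finite. If $\chi^2(P_X^0\|P_X^\eta)=\infty$, the bound is trivial. Otherwise $w_\eta\in L_2(P_X^\eta)$, so $f w_\eta\in L_1(P_X^\eta)$ by Cauchy--Schwarz, since $f\in L_2(P_X^\eta)$. Hence $g(0)$ is finite and the identity above holds. No other obstacles arise.
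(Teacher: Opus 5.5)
Your proposal is correct and is essentially the paper's proof. Both write $g(0)-g(\eta)=\int f\,(dP_X^0/dP_X^\eta-1)\,dP_X^\eta$, apply Cauchy--Schwarz in $L_2(P_X^\eta)$ to get $\|f\|_{L_2(P_X^\eta)}\,\chi^2(P_X^0\|P_X^\eta)^{1/2}$, and then conclude from $\|f\|_{L_2(P_X^\eta)}\le C$ and the feasibility condition $\chi^2(P_X^0\|P_X^{\hat\eta})\le c$; your integrability check (finiteness of $g(0)$ via $f w_\eta\in L_1(P_X^\eta)$ when the divergence is finite) is a small rigor step the paper leaves implicit.
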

\begin{proof}
By Cauchy--Schwarz in $L_2(P_X^\eta)$,
$$
\begin{aligned}
|g(0)-g(\eta)|&=\Bigl|\int f\tfrac{d(P_X^0-P_X^\eta)}{dP_X^\eta}dP_X^\eta\Bigr|
\le \Bigl(\int f^2dP_X^\eta\Bigr)^{1/2}\Bigl(\int\tfrac{(dP_X^0-dP_X^\eta)^2}{dP_X^\eta}\Bigr)^{1/2}\\
&=\|f\|_{L_2(P_X^\eta)}\chi^2(P_X^0\|P_X^\eta)^{1/2}.
\end{aligned}
$$
Bounding $\|f\|_{L_2(P_X^\eta)}\le C$ and $\chi^2(P_X^0\|P_X^{\hat\eta})\le c$ gives $|g(0)-g(\hat\eta)|\le Cc^{1/2}$.
\end{proof}

\textbf{Hellinger distance.} 
When overlap is violated, the density ratio does not exist and the $\chi^2$ bound is not applicable. The Hellinger distance provides an alternative that requires only the existence of densities, not their ratio, making it better suited to settings with complete support mismatch.
\begin{prop}
Assume densities exist for $P_X^0,P_X^\eta$ and $\sup_x|f(x)|\le C$. With the Hellinger distance
$H(P_X^0,P_X^\eta):=\bigl(\tfrac12\int(\sqrt{dP_X^0}-\sqrt{dP_X^\eta})^2\bigr)^{1/2}$,
$$
|g(0)-g(\eta)|\le 2\sqrt2C\cdot H(P_X^0,P_X^\eta).
$$
\end{prop}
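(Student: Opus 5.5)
We bound the difference of expectations by total variation and then pass from total variation to Hellinger. Let $\lambda$ be a common dominating measure (e.g.\ Lebesgue, or $P_X^0+P_X^\eta$). Write $p_0 = dP_X^0/d\lambda$ and $p_\eta = dP_X^\eta/d\lambda$. Unlike \Cref{prop:chisq}, no absolute continuity between the two laws is needed; this is the point of the Hellinger bound.

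First, using only $\sup_x|f(x)|\le C$,
$$
|g(0)-g(\eta)| = \Bigl|\int f\,(p_0-p_\eta)\,d\lambda\Bigr| \le C\int |p_0-p_\eta|\,d\lambda .
$$

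Second, we control the $L_1$ distance by factoring
$$
|p_0-p_\eta| = |\sqrt{p_0}-\sqrt{p_\eta}|\,(\sqrt{p_0}+\sqrt{p_\eta})
$$
and applying Cauchy--Schwarz in $L_2(\lambda)$:
$$
\int|p_0-p_\eta|\,d\lambda \le \Bigl(\int(\sqrt{p_0}-\sqrt{p_\eta})^2 d\lambda\Bigr)^{1/2}\Bigl(\int(\sqrt{p_0}+\sqrt{p_\eta})^2 d\lambda\Bigr)^{1/2}.
$$
The first factor equals $\sqrt2\,H(P_X^0,P_X^\eta)$ under the paper's normalization with the factor $\tfrac12$. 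For the second factor, $(a+b)^2\le 2a^2+2b^2$ gives $\int(\sqrt{p_0}+\sqrt{p_\eta})^2 d\lambda \le 2\int p_0\,d\lambda+2\int p_\eta\,d\lambda = 4$, so the second factor is at most $2$. (Equivalently, $\int(\sqrt{p_0}+\sqrt{p_\eta})^2 = 4-2H^2$.) Combining these gives
$$
|g(0)-g(\eta)|\le C\cdot\sqrt2 H\cdot 2 = 2\sqrt2\,C\,H(P_X^0,P_X^\eta),
$$
which is the claimed bound.

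The only point that needs care is matching the constant to the $\tfrac12$-normalization of $H$; the sharper factor $\sqrt{4-2H^2}$ can simply be bounded by $2$. As in \Cref{prop:ipm}, a corollary follows: if $\hat\eta$ is feasible for \eqref{eq:eta_condition} with $d=H$ and tolerance $c$, then $|g(0)-g(\hat\eta)|\le 2\sqrt2\,C c$.
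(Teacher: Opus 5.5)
Your proof is correct and follows the same route as the paper: bound $|g(0)-g(\eta)|$ by $C\int|p_0-p_\eta|=2C\,\mathrm{TV}(P_X^0,P_X^\eta)$, then pass to Hellinger via $\mathrm{TV}\le\sqrt2\,H$. The only difference is that you derive this last inequality yourself by Cauchy--Schwarz, with the exact factor $\sqrt{4-2H^2}\le 2$, whereas the paper cites it as a standard fact.
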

\begin{proof}
With $\mathrm{TV}(P_X^0,P_X^\eta)=\tfrac12\int|dP_X^0-dP_X^\eta|$,
$$
|g(0)-g(\eta)|\le \sup_x|f(x)|\int|dP_X^0-dP_X^\eta|
= 2C\mathrm{TV}(P_X^0,P_X^\eta).
$$
Because the Hellinger distance upper-bounds total variation, $\mathrm{TV}\le\sqrt2H$,
$$
|g(0)-g(\eta)|\le 2C\sqrt2H(P_X^0,P_X^\eta)=2\sqrt2CH(P_X^0,P_X^\eta).
$$
Under $H(P_X^0,P_X^{\hat\eta})\le c$ this gives $|g(0)-g(\hat\eta)|\le 2\sqrt2Cc$.
\end{proof}

\textbf{Summary.} Across all examples of discrepancy measures, bounding $d(P_X^0, P_X^{\eta}) \leq c$ ensures that the additional bias from evaluating at $P_X^{\eta}$ rather than $P_X^0$ is controlled. The measures differ in their assumptions, namely test-function class membership, Lipschitz regularity, existence of the density ratio, or existence of densities, and in the tightness of the resulting bound. In settings where the severity of overlap violation is unknown, the IPM and $W_p$ bounds offer the safest choices, as they impose the weakest conditions. When the density ratio is known to exist but may be large, the $\chi^2$ bound provides a sharper constraint. When support mismatch is complete, the Hellinger bound applies without requiring a density ratio.

\textbf{Implementation.}
Solving \eqref{eq:eta_condition} in practice requires evaluating the chosen discrepancy along the interpolation path. Under the Wasserstein constraint, the closed form $\eta^\star$ above applies directly. For a generic discrepancy the feasible set need not be an interval, and we maximize the utility by grid search over $\eta$. The $\chi^2$ constraint requires an estimate of the density ratio. When the covariate distributions are Gaussian with a common covariance, a closed-form solution for $\eta$ is available. In low dimensions, $d\leq 4$, a plug-in kernel density estimator estimates the density ratio well. In higher dimensions we recommend uLSIF, unconstrained least-squares importance fitting, which approximates the density ratio directly by minimizing a squared error; we implement uLSIF with an RKHS function class and a Gaussian kernel. The guarantees of \Cref{cor:interpolated-ste} are stated for deterministic sequences $\eta_n$. Extending them to the data-driven $\hat\eta$ requires uniformity in $\eta$, which we do not pursue.

\textbf{Oracle estimator.}
As a theoretical benchmark, we consider the case where the true outcome regression $h_0$ can be used for selecting $\eta$. This allows direct optimization of the bias trade-off rather than the indirect control through distributional discrepancy. Assuming that $h_0$ and $\hat{h}_0$ are first-order differentiable and that the estimation bias of $\hat{T}$ is negligible, a Taylor expansion yields the first-order approximation
$$
\begin{aligned}
    \mathrm{Bias}(\hat{\tau}_{ATT}^{\eta}) 
    &\approx  - \E[(\hat{h}_0-h_0)(\hat{T}_{\eta}(X))\mid A=0] - \E[h_0(\hat{T}_{\eta}(X))-h_0(\hat{T}(X))\mid A=0]\\
    &= - \E[(\hat{h}_0 - h_0)(\hat{T}(X)) + \eta(X-\hat{T}(X))^{\top}(\nabla\hat{h}_0 - \nabla h_0)(\hat{T}(X)) \mid A=0]\\
    &\quad -\E[\eta(X-\hat{T}(X))^{\top}\nabla h_0(\hat{T}(X))\mid A=0]
    + o\bigl(\eta\,\E[\|X-\hat{T}(X)\|\mid A=0]\bigr)\\
    &= -\E[(\hat{h}_0 - h_0)(\hat{T}(X)) + \eta(X-\hat{T}(X))^{\top}\nabla\hat{h}_0(\hat{T}(X))\mid A=0]\\
    & \quad + o\bigl(\eta\,\E[\|X-\hat{T}(X)\|\mid A=0]\bigr).
\end{aligned}
$$
Therefore, we could correct for the zeroth and first order bias with the oracle $\eta^{\ast}$,
$$
\eta^{\ast} =
\frac{\E[(h_0-\hat{h}_0)(\hat{T}(X))\mid A=0]}{\E[(X-\hat{T}(X))^{\top}\nabla\hat{h}_0(\hat{T}(X))\mid A=0]}.
$$
This quantity is not guaranteed to lie in $[0,1]$ and is not available in practice, but it serves as a useful baseline in simulations for evaluating how close the data-driven selection of $\eta$ comes to the optimal trade-off. When overlap is sufficient, both the numerator and the displacement $\|X - \hat{T}(X)\|$ become small, and $\eta^*$ exhibits high variability.

\section{Empirical Studies}
\label{sec:sim}

We evaluate the Stable Transport Estimator (STE) and its Extrapolation-Controlled variant (EC-STE). We compare them with standard causal inference methods: inverse propensity weighting (IPW), stabilized IPW (SIPW), the doubly robust augmented IPW (AIPW), difference in means (DiM), and outcome regression (OR). 

We consider four settings of increasing complexity: (1) simulated examples with Normal and Beta covariates, (2) the Lalonde--PSID job training dataset, (3) a semi-synthetic experiment based on the Right Heart Catheterization (RHC) study, and (4) the Medical Expenditure Panel Survey (MEPS). For EC-STE, we report variants using data-driven selection of $\hat{\eta}$ based on $W_1$, $W_2$, and $\chi^2$ divergence, alongside the grid-search oracle benchmark ($\eta_{gs}^{\ast}$); the simulation figures additionally show the first-order oracle $\eta^{\ast}$ of \Cref{sec:interpolation}. Throughout, we employ the sliced Wasserstein distance \citep{bonneel2015sliced} for covariate spaces of dimension $d > 3$, estimate transport maps via the exact linear solver, and use the nearest-neighbor estimator for map evaluation.\footnote{Code reproducing all experiments is available at \url{https://github.com/edenx/CausalSTE}.}

Under misspecification and overlap violation, appropriately selected interpolation reduces bias relative to outcome regression while maintaining substantially lower variance than IPW-based alternatives. When the outcome model is well specified, as in the Lalonde study, the interpolation neither helps nor hurts.

\subsection{Simulations: Unbounded Density Ratio}

To illustrate how the density ratio between treatment groups affects estimator performance, we consider two simulation designs: one with Beta-distributed covariates on compact support, and one with Gaussian covariates on unbounded support. In both cases, we vary distributional parameters to transition the density ratio from bounded to unbounded, exposing the failure modes of existing methods.

Recall from \Cref{sec:interpolation} the interpolated transport map $T_{\eta}(X) = \eta X + (1-\eta)T(X)$ with $P_X^{\eta} = (T_{\eta})_{\sharp} P_X^0$. At $\eta = 0$ it recovers the full transport map $T$ with $T_{\sharp} P_X^0=P_X^1$, and at $\eta = 1$ no transport is applied. When the covariate density is highly skewed with mass concentrated away from the origin, the intercept of the outcome regression can be poorly estimated. In such cases the interpolated map corrects for extrapolation bias in the ATT estimation by keeping the transported covariates within the support of the regression training data.

\begin{figure}[t]
    \centering
    
    \begin{subfigure}{0.48\linewidth}
        \centering
        \includegraphics[width=\linewidth]{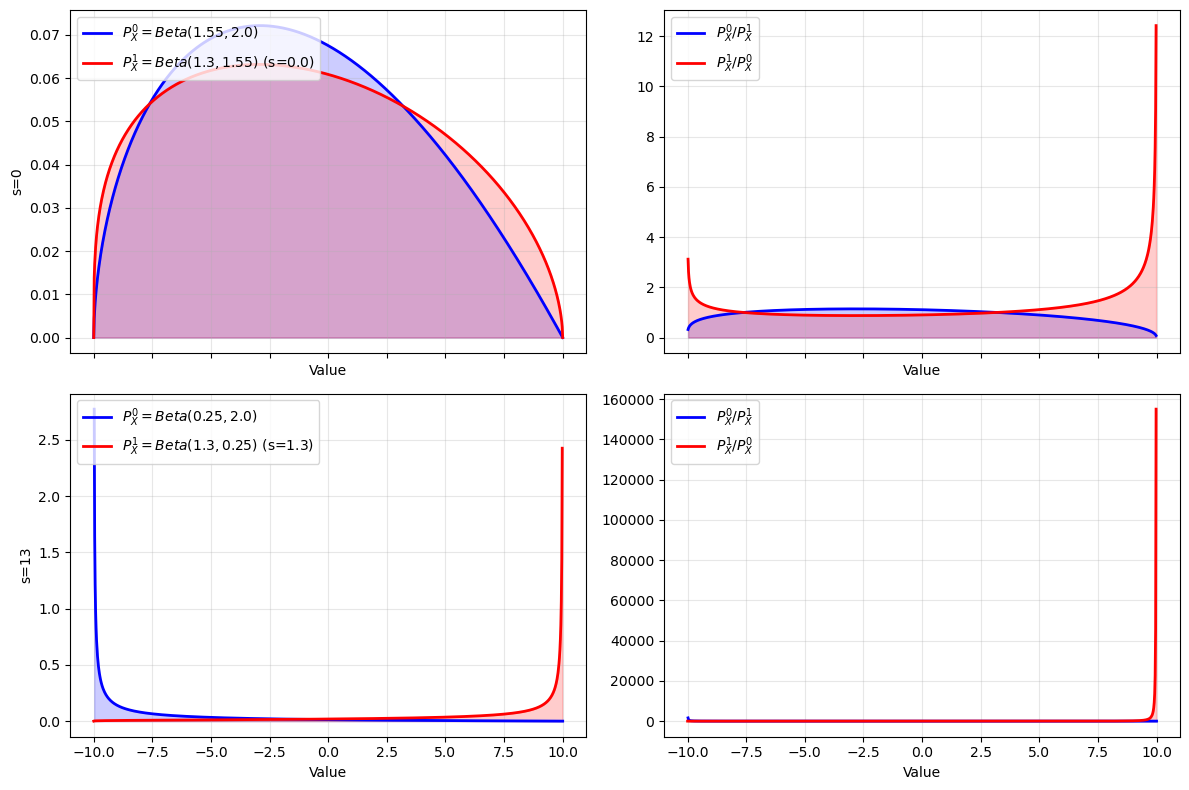}
        \caption{Beta example.}
        \label{fig:beta_scale}
    \end{subfigure}
    \hfill  
    \begin{subfigure}{0.48\linewidth}
        \centering
        \includegraphics[width=\linewidth]{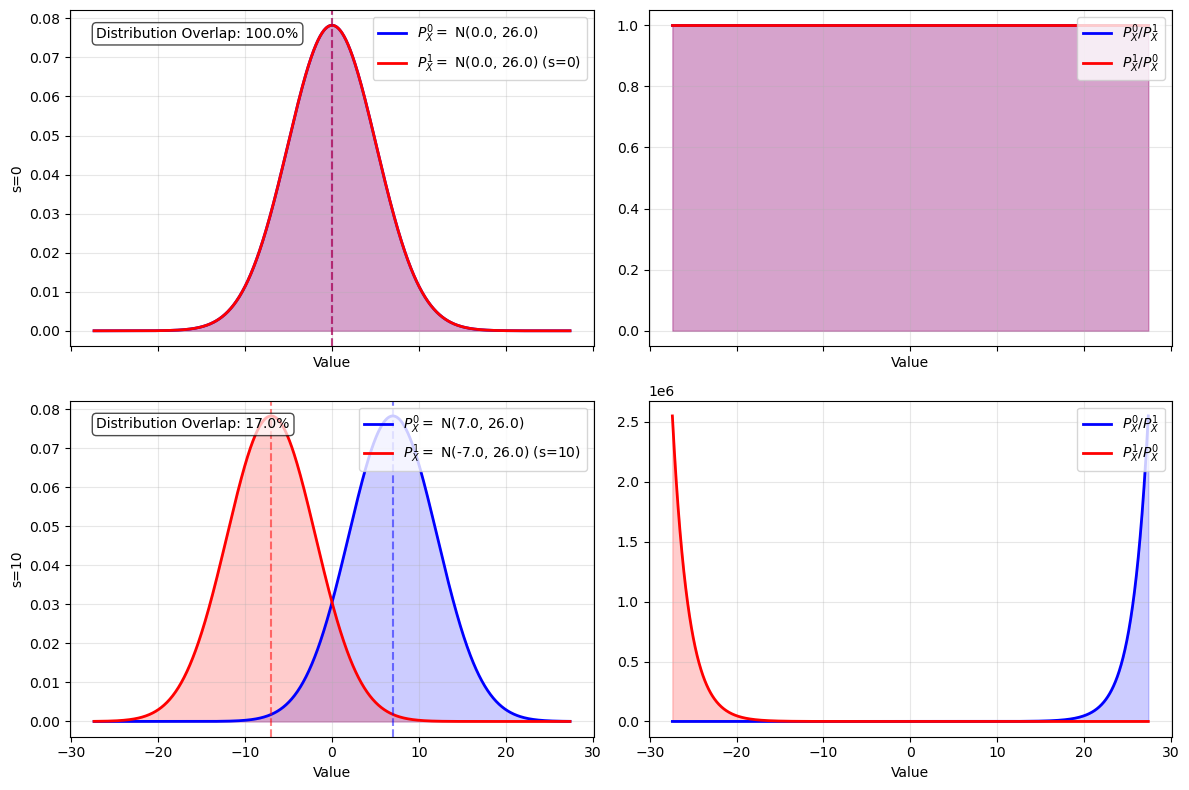}
        \caption{Gaussian example.}
        \label{fig:gaussian_shift}
    \end{subfigure}
    
    \caption{Covariate distributions under increasing distributional divergence. (a) Generalized Beta. (b) Gaussian.}
    \label{fig:covariate_distributions}
\end{figure}

\textbf{Generalized Beta.} Recall that if $X \sim \text{Beta}(\alpha, \beta)$, then $Y = a + (b-a)X \sim \text{Beta}(\alpha, \beta; a, b)$ follows a generalized Beta distribution. Here the two distribution covers the origin $0$, which ensures the intercept can be estimated reliably.
$$
\begin{aligned}
X=(X_1, \dots, X_d)\mid A=1 &\sim Beta(1.3,\, 1.55-0.1\times s;\, 50, -25)^d,\\
X=(X_1, \dots, X_d)\mid A=0 &\sim Beta(1.55-0.1\times s,\, 2;\, 50, -25)^d,
\end{aligned}
$$
with i.i.d.\ coordinates, dimension $d\in\{5, 20\}$, $P(A=1)=0.6$, and $s\in[0,13]$.
Denote $X^2 = (X_1^2, \dots, X_d^2)$, and let the generating process follow a quadratic regression function, where the treatment effect is in the intercept:
$$
    \begin{aligned}
        Y(A)\mid X, A=0 &\sim N(\beta_0^0 + \bfbeta_1^{0\top} X + \bfbeta_1^{0\top} X^2,\; \sigma^2),\\
        Y(A)\mid X, A=1 &\sim N(\beta_0^1 + \bfbeta_1^{1\top} X + \bfbeta_1^{1\top} X^2,\; \sigma^2),
    \end{aligned}
$$
with a common slope on every coordinate: $\beta_0^0 = 1.2$ and $\bfbeta_1^0 = -1.25\cdot\mathbf{1}_d$ under control, $\beta_0^1 = -1.1$ and $\bfbeta_1^1 = 2.5\cdot\mathbf{1}_d$ under treatment, and $\sigma^2=0.1$. The same slope vector multiplies the linear and the quadratic terms. For the objective \eqref{eq:eta_condition} we choose $c=1.1$. 

As shown in \Cref{fig:beta_5d_combined} and \Cref{fig:beta_20d_combined}, our estimators achieve smaller bias and variance than IPW-based methods as $s$ increases from $0$ to $13$. Because the outcome regression is deliberately misspecified, a linear model fit to quadratic data, OR exhibits exploding bias and growing variance as the treated and control covariate distributions diverge. By contrast, the EC-STE with appropriately chosen $\eta$ maintains stable performance throughout.

\begin{figure}[t]
    \centering
    
    \includegraphics[width=\linewidth]{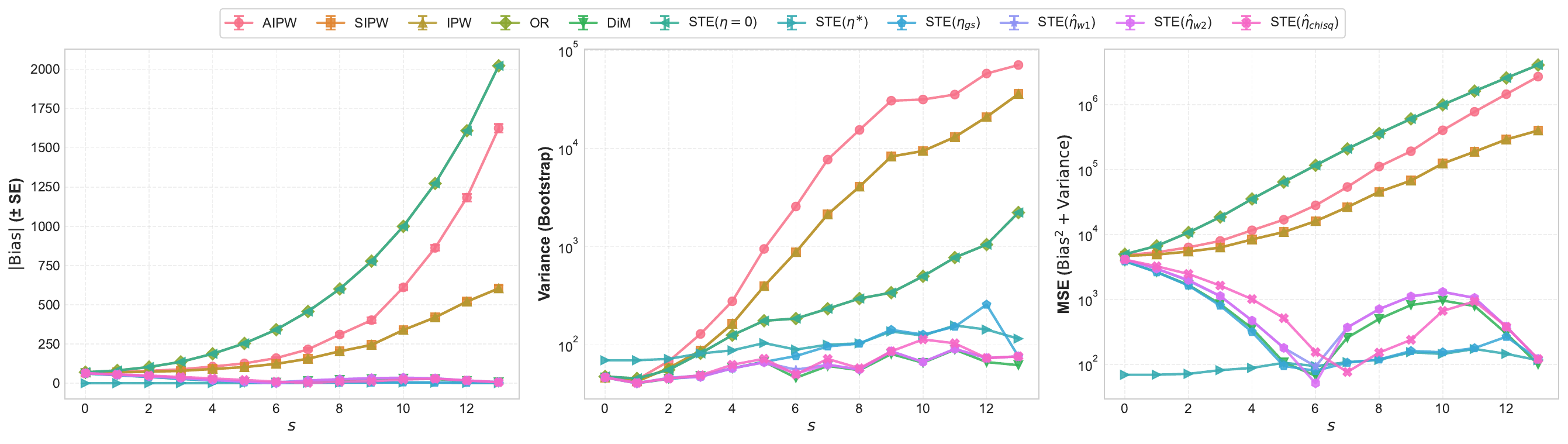}
    
    \vspace{0.3cm}
    
    \includegraphics[width=\linewidth]{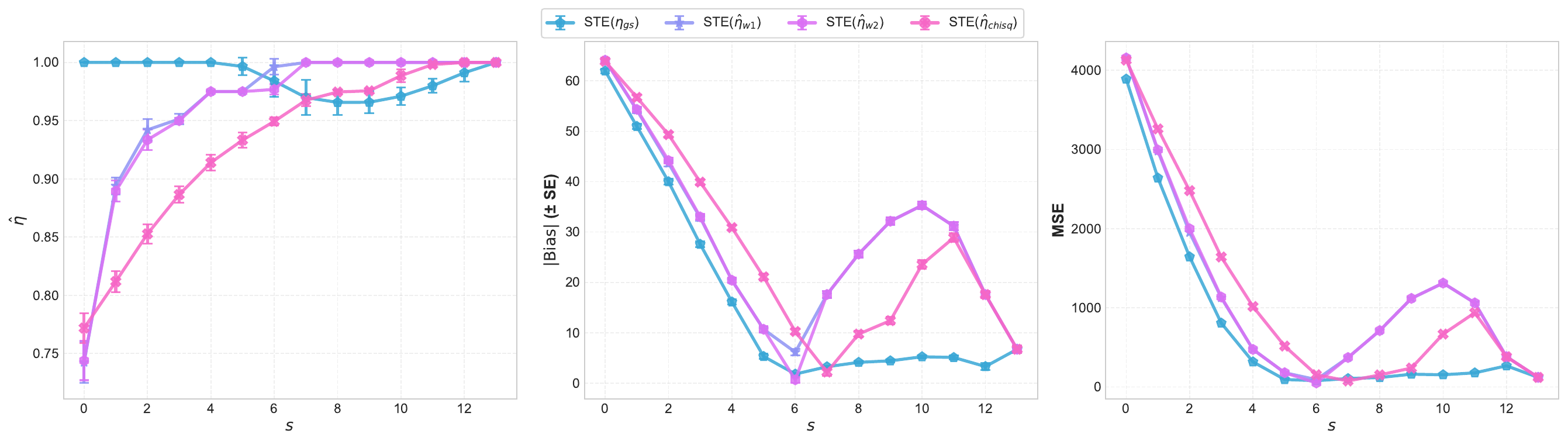}
    
    \caption{Beta 5D example: bias--variance--MSE decomposition. 
    \textbf{Top:} Comparison among AIPW, IPW, SIPW, OR, DiM and our method with different $\eta$ estimation strategies. 
    \textbf{Bottom:} Comparison among different $\eta$ selection strategies with our method.}
    \label{fig:beta_5d_combined}
\end{figure}

\begin{figure}[t]
    \centering
    
    \includegraphics[width=\linewidth]{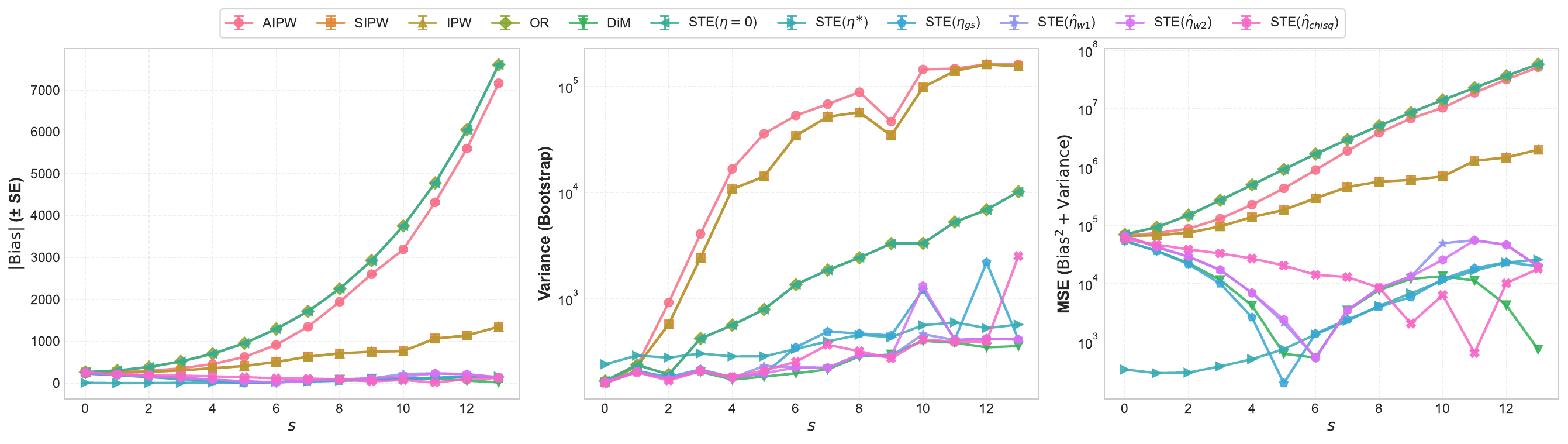}
    
    \vspace{0.3cm}
    
    \includegraphics[width=\linewidth]{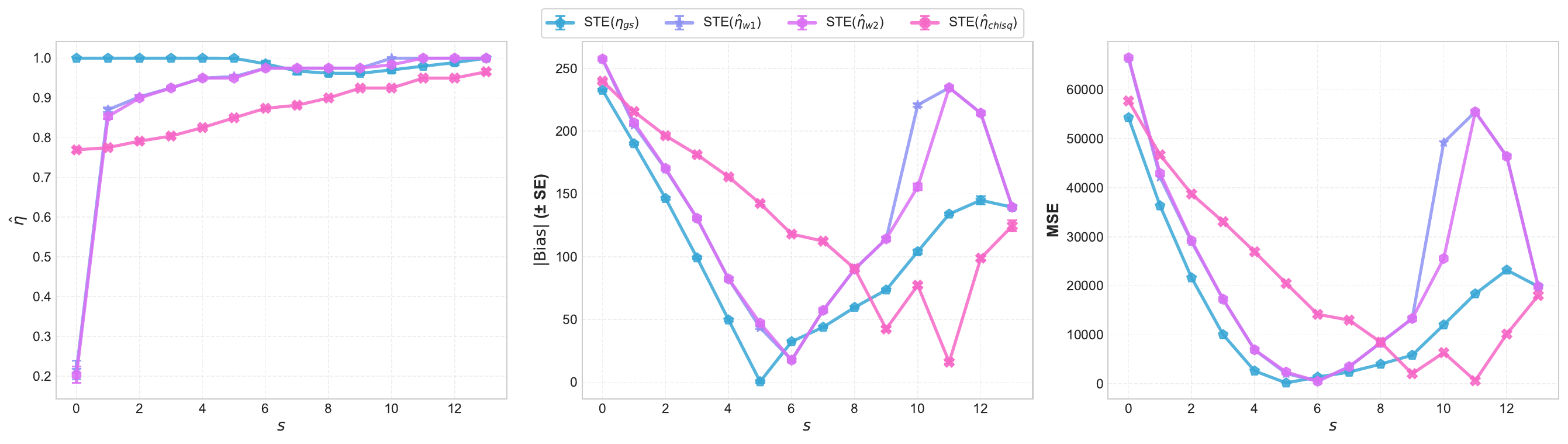}
    
    \caption{Beta 20D example: bias--variance--MSE decomposition. 
    \textbf{Top:} Comparison among AIPW, IPW, SIPW, OR, DiM and our method with different $\eta$ estimation strategies. 
    \textbf{Bottom:} Comparison among different $\eta$ selection strategies with our method.}
    \label{fig:beta_20d_combined}
\end{figure}

\textbf{Gaussian.}
We next consider i.i.d.\ samples from:
$$
\begin{aligned}
X=(X_1, \dots, X_d)\mid A=1 &\sim N(-s,\,5^2)^d,\\
X=(X_1, \dots, X_d)\mid A=0 &\sim N(s,\,5^2)^d,
\end{aligned}
$$
with i.i.d.\ coordinates, dimension $d\in\{5, 20\}$, $P(A=1)=0.6$, and $s\in[0,13]$.
Denote $X^2 = (X_1^2, \dots, X_d^2)$, and let the generating process follow the same quadratic regression function as in the previous example.

In this setting, the Bayes-optimal classifier for the IPW propensity model is linear, representing a best-case scenario for propensity-based methods. Nevertheless, IPW exhibits growing bias and exploding variance as $s$ increases. Our estimators, by contrast, maintain consistently low bias and variance even as $s \to 13$, demonstrating robustness to distributional separation regardless of the support structure (\Cref{fig:norm_5d_combined} and \Cref{fig:norm_20d_combined}).

\begin{figure}[t]
    \centering
    
    \includegraphics[width=\linewidth]{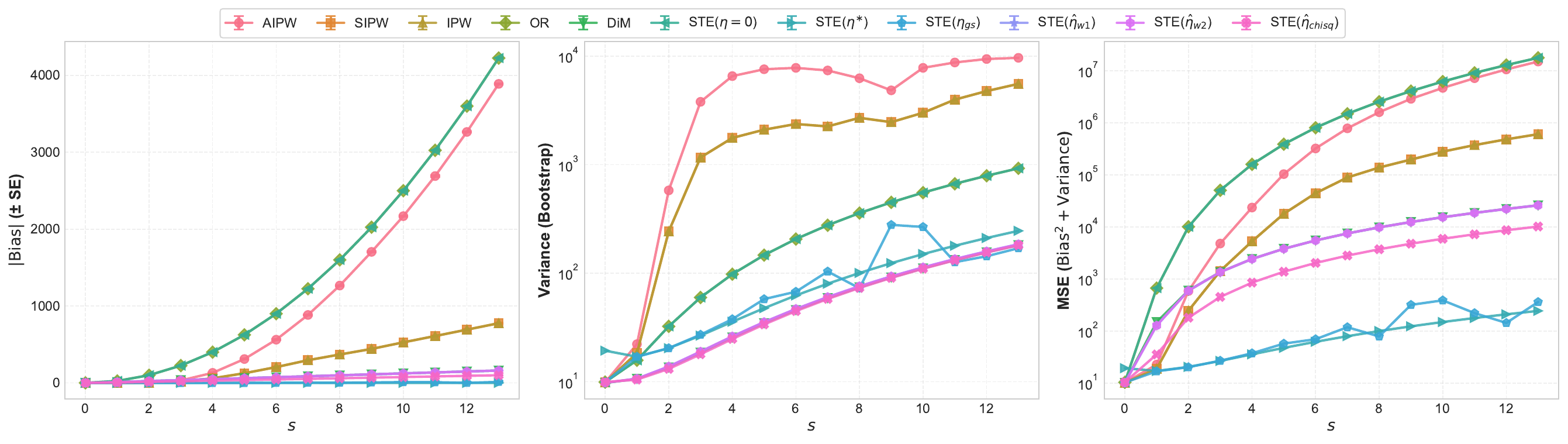}
    
    \vspace{0.3cm}
    
    \includegraphics[width=\linewidth]{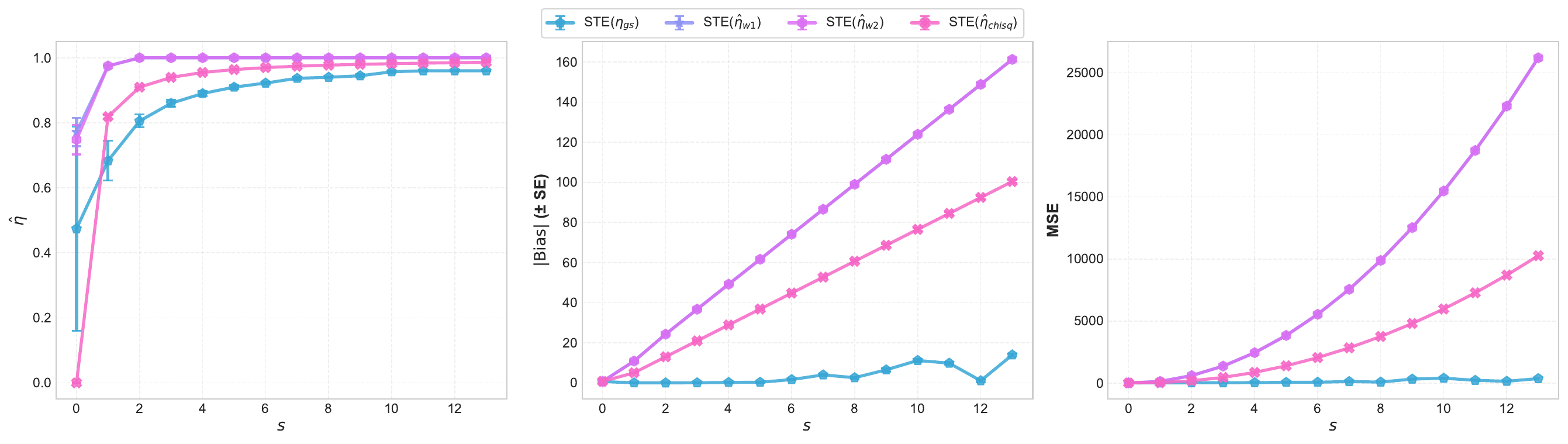}
    
    \caption{Normal 5D example: bias--variance--MSE decomposition.  
    \textbf{Top:} Comparison among AIPW, IPW, SIPW, OR, DiM and our method with different $\eta$ estimation strategies. 
    \textbf{Bottom:} Comparison among different $\eta$ selection strategies with our method.}
    \label{fig:norm_5d_combined}
\end{figure}

\begin{figure}[t]
    \centering
    
    \includegraphics[width=\linewidth]{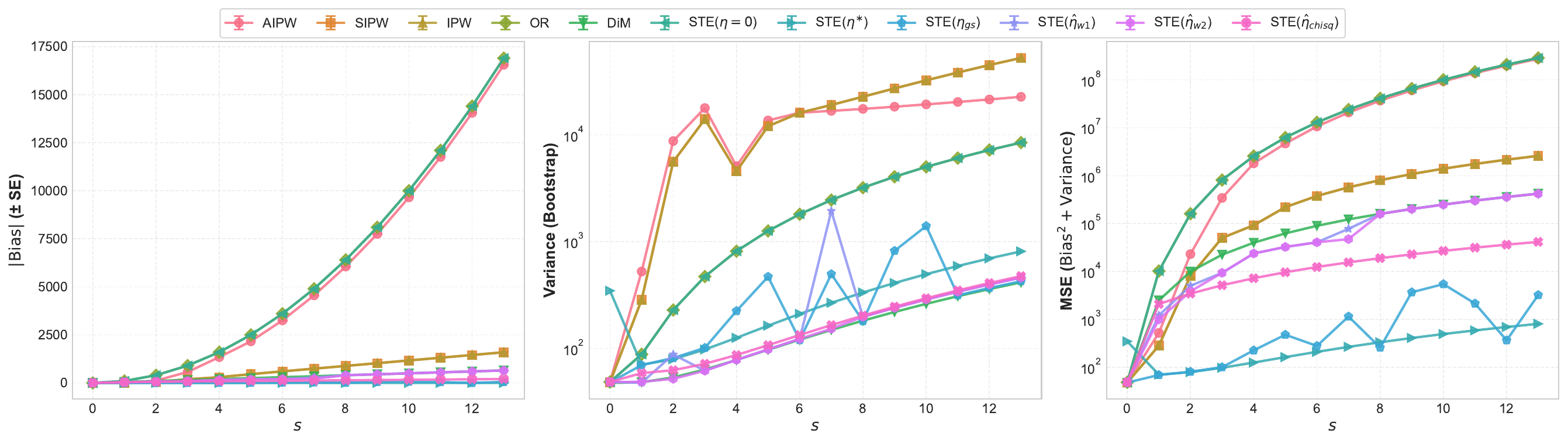}
    
    \vspace{0.3cm}
    
    \includegraphics[width=\linewidth]{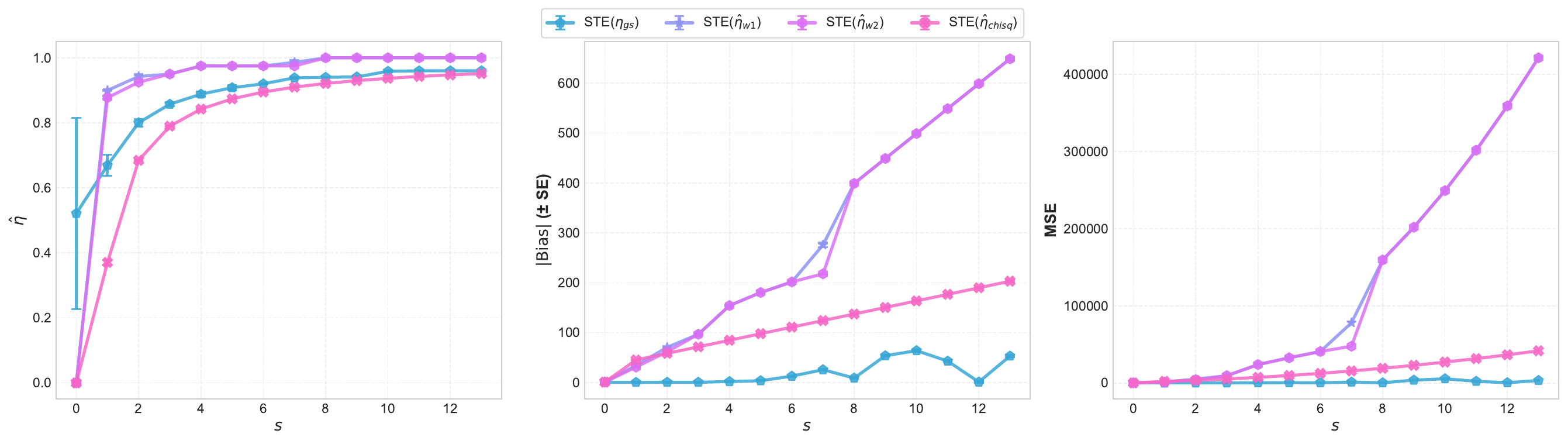}
    
    \caption{Normal 20D example: bias--variance--MSE decomposition.  
    \textbf{Top:} Comparison among AIPW, IPW, SIPW, OR, DiM and our method with different $\eta$ estimation strategies. 
    \textbf{Bottom:} Comparison among different $\eta$ selection strategies with our method.}
    \label{fig:norm_20d_combined}
\end{figure}

\subsection{Lalonde Job Training Dataset}

The Lalonde dataset \citep{lalonde1986evaluating, dehejia1999causal} combines experimental data from the National Supported Work (NSW) Demonstration job training program with an observational comparison group drawn from the Panel Study of Income Dynamics (PSID). This combination is a canonical benchmark for causal methods because the PSID comparison group exhibits severe covariate imbalance: treated individuals are predominantly disadvantaged workers with low baseline earnings, while PSID controls include many stably employed individuals with substantially higher incomes. The experimental benchmark provides a ground-truth ATT of approximately \$1,794 in annual earnings gains, but naive observational comparisons with PSID controls yield severely biased estimates due to extreme selection. For consistency with our methodological framework, we restrict attention to continuous covariates (age, education, and earnings histories from 1974 and 1975), excluding binary indicators. We model the outcome with degree-3 polynomial regression with ridge regularization and the propensity score with linear logistic regression.

In this setting, the outcome regression model is sufficiently well specified that the STE at $\eta = 0$, which coincides with OR by \Cref{prop:empirical-plan-cancellation}, already achieves strong performance. Introducing interpolation with $\eta > 0$ does not improve bias or variance, as the data-driven $\eta$-selection criteria yield estimates comparable to or slightly worse than the baseline. This result is consistent with our theoretical analysis: when the outcome model is well-specified and overlap is adequate for regression, transport-based correction offers no additional benefit. The value of EC-STE emerges precisely in settings where outcome regression extrapolates beyond its training support, a condition we examine in the remaining experiments.

\begin{table}[htbp]
        \centering
        \caption{Lalonde experiment: bias, variance, and MSE decomposition across estimation methods and tuning parameter values}
        \label{tab:lalonde}
        \resizebox{\textwidth}{!}{%
        \begin{tabular}{lrrrrrrrrrr}
            \toprule
            \textbf{Metric} & AIPW & SIPW & IPW & OR & DiM & STE($\eta=0$) & STE($\eta_{gs}^{\ast}$) & STE($\hat{\eta}_{\text{chisq}}$) & STE($\hat{\eta}_{w1}$) & STE($\hat{\eta}_{w2}$) \\
            \midrule
            \textbf{Bias $(\times 10^{3})$} & $-1.268$ & $-1.791$ & $-1.791$ & $-1.879$ & $-17.555$ & $-1.879$ & $-1.879$ & $-2.237$ & $-2.483$ & $-2.447$ \\
    \textbf{Variance $(\times 10^{5})$} & 2.054 & 1.528 & 1.528 & 2.693 & 2.128 & 2.693 & 2.693 & 4.954 & 4.665 & 5.887 \\
    \textbf{MSE $(\times 10^{6})$} & 1.813 & 3.360 & 3.360 & 3.799 & 308.388 & 3.799 & 3.799 & 5.500 & 6.629 & 6.576 \\
    \bottomrule
        \end{tabular}%
        }
        \end{table}

\subsection{Right Heart Catheterization Semi-synthetic Study}

We use the Right Heart Catheterization (RHC) dataset \citep{connors1996effectiveness}, which contains observational data on critically ill ICU patients, to evaluate estimator performance under controlled overlap violations. Rather than using the original treatment assignments, we adopt a semi-synthetic design: we retain the real covariates and outcomes but generate new treatment assignments through a known propensity score model. This allows us to create deliberate regions of minimal covariate overlap while preserving the realistic correlation structure and clinical relationships in the original data. Specifically, we construct propensity scores that strongly tie treatment assignment to disease severity markers, including APACHE III scores, vital signs, and laboratory values, so that the sickest patients have near-deterministic probability of receiving RHC ($e(X) \approx 1$) while the healthiest are almost certainly untreated ($e(X) \approx 0$). The resulting covariate distributions between treated and control groups exhibit substantial non-overlap, providing a controlled testbed for evaluating ATT estimators when overlap is severely violated. We use linear logistic regression for the propensity model and polynomial outcome regression with ridge regularization.

\Cref{tab:rhc_results} reveals a clear separation in performance. The STE at $\eta = 0$ coincides with OR by \Cref{prop:empirical-plan-cancellation} and inherits its large bias of $45.8$ from extrapolation beyond the regression training support. IPW-based methods achieve lower bias but at the cost of high variance, an order of magnitude larger than our best estimators. The EC-STE with data-driven $\eta$-selection dramatically improves on both: the $W_2$-selected variant achieves bias of $-0.814$ and MSE of $6.64$, representing a $99.7\%$ MSE reduction relative to OR and a $71\%$ reduction relative to IPW. The oracle grid-search estimator ($\eta_{gs}^{\ast}$) achieves the lowest MSE of $3.02$, confirming that the data-driven selectors approximate the optimal interpolation well.

\begin{table}[htbp]
        \centering
        \caption{RHC semi-synthetic study: bias, variance, and MSE decomposition across estimation methods and tuning parameter values}
        \label{tab:rhc_results}
        \resizebox{\textwidth}{!}{%
        \begin{tabular}{lrrrrrrrrrr}
            \toprule
            \textbf{Metric} & AIPW & SIPW & IPW & OR & DiM & STE($\eta=0$) & STE($\eta_{gs}^{\ast}$) & STE($\hat{\eta}_{\text{chisq}}$) & STE($\hat{\eta}_{w1}$) & STE($\hat{\eta}_{w2}$) \\
            \midrule
            \textbf{Bias $(\times 10^{-1})$} & 116.952 & 10.982 & 10.982 & 457.969 & $-412.328$ & 457.969 & 16.432 & $-11.468$ & $-8.251$ & $-8.139$ \\
    \textbf{Variance $(\times 10^{-1})$} & 9527.541 & 217.692 & 217.692 & 458.617 & 5.236 & 458.617 & 3.171 & 90.859 & 60.978 & 59.776 \\
    \textbf{MSE} & 1089.531 & 22.975 & 22.975 & 2143.215 & 1700.669 & 2143.215 & 3.017 & 10.401 & 6.779 & 6.640 \\
    \bottomrule
        \end{tabular}%
        }
        \end{table}

\subsection{Medical Expenditure Panel Survey (MEPS) Study}

We analyze data from the Medical Expenditure Panel Survey (MEPS) 2022 Full Year Consolidated files \citep{meps}, a nationally representative survey of the U.S.\ civilian noninstitutionalized population. The goal is to estimate the ATT of Medicare coverage on total healthcare expenditure (TOTEXP) among individuals aged 58--72. 

The treatment is a binary indicator for Medicare coverage, and our observed confounders include continuous variables capturing age, socioeconomic status (income relative to the poverty line), health status (PCS-42 and MCS-42 physical and mental component summary scores, PHQ-2 depression scale scores), and prior healthcare utilization.

To ensure a clean identification strategy with a single, well-defined treatment assignment mechanism, we restrict our sample to individuals who obtained Medicare through age-based eligibility only. Specifically, we define the treated group as individuals aged 65 or older with Medicare coverage (identified via the MEPS variable MCREV=1), and the control group as individuals under age 65 without Medicare coverage (MCREV=2). Critically, we exclude two groups from our analysis: (1) individuals under age 65 with Medicare coverage, who obtained coverage through disability-based pathways; and (2) individuals aged 65 or older without Medicare coverage (non-compliers), as our estimand targets the ATT rather than the intent-to-treat effect. 

This sample restriction yields approximately 40{,}000 to 50{,}000 observations across the four years, with roughly 60\% in the control group and 40\% in the treated group. The resulting design exhibits an extreme violation of overlap by construction: the sharp discontinuity in Medicare eligibility at age 65 creates an approximately 95\% point jump in coverage probability, leading to near-complete separation in the age distribution and substantial imbalances across the continuous health status and socioeconomic measures.

\Cref{tab:meps_results} demonstrates the practical value of transport-based estimation in this challenging setting. Standard methods struggle: OR and AIPW exhibit large bias (around $-1{,}850$) and high variance, while IPW and SIPW trade bias in the opposite direction ($+2{,}830$) for lower variance but worse overall MSE. DiM performs worst due to the strong confounding structure. The EC-STE with grid-search oracle ($\eta_{gs}^{\ast}$) achieves an order-of-magnitude improvement, with MSE of $2{,}388$ compared to $3{,}884 \times 10^3$ for OR. Among data-driven selectors, the $W_1$ and $W_2$ variants achieve MSEs of $193 \times 10^3$ and $190 \times 10^3$ respectively, roughly a 95\% reduction relative to OR, demonstrating that principled interpolation can substantially improve estimation even in real-world settings with extreme overlap violations.

\begin{table}[htbp]
        \centering
        \caption{MEPS study: bias, variance, and MSE decomposition across estimation methods and tuning parameter values}
        \label{tab:meps_results}
        \resizebox{\textwidth}{!}{%
        \begin{tabular}{lrrrrrrrrrr}
            \toprule
            \textbf{Metric} & AIPW & SIPW & IPW & OR & DiM & STE($\eta=0$) & STE($\eta_{gs}^{\ast}$) & STE($\hat{\eta}_{\text{chisq}}$) & STE($\hat{\eta}_{w1}$) & STE($\hat{\eta}_{w2}$) \\
            \midrule
            \textbf{Bias $(\times 10^{1})$} & $-185.557$ & 282.949 & 282.949 & $-184.758$ & 498.157 & $-184.758$ & $-2.733$ & $-94.266$ & $-3.429$ & 2.229 \\
    \textbf{Variance $(\times 10^{3})$} & 472.934 & 25.788 & 25.788 & 470.748 & 7.592 & 470.748 & 1.642 & 354.138 & 191.990 & 189.088 \\
    \textbf{MSE $(\times 10^{3})$} & 3916.086 & 8031.807 & 8031.807 & 3884.285 & 24823.632 & 3884.285 & 2.388 & 1242.754 & 193.165 & 189.585 \\
    \bottomrule
        \end{tabular}%
    }
\end{table}

\section{Discussion}
\label{sec:conclusion}
Limited overlap between treatment groups poses a fundamental challenge in observational causal inference, particularly when treatments are rare, selectively assigned, or measured alongside high-dimensional covariates. Limited overlap manifests theoretically through unbounded density ratios that violate standard regularity conditions, and practically through degenerate propensity score estimates that destabilize inverse probability weighting. Existing approaches either restrict inference to subpopulations with adequate overlap, thereby compromising external validity, or employ trimming procedures that discard data and change the estimand. The need for principled methods that maintain population-level inference while providing stability under overlap violations has remained a critical gap in the causal inference literature.

This paper addresses these challenges through estimators grounded in optimal transport that explicitly use the geometry of the covariate distributions. We introduce two complementary estimators. The Stable Transport Estimator (STE) performs covariate adjustment via optimal transport maps, achieving variance comparable to outcome regression even under near-violations of overlap. When overlap fails entirely and standard methods do not apply, we further propose the Extrapolation-Controlled Stable Transport Estimator (EC-STE), which interpolates along the displacement geodesic between the covariate distributions to control extrapolation. For both estimators we establish consistency and central limit theorems: the STE under explicit conditions that characterize the contribution of the estimated transport map to the asymptotic variance, and the EC-STE with an interpolation bias that is explicit to first order, estimable, and removable, at unchanged asymptotic variance.

These theoretical advances translate to substantial empirical improvements. Through extensive empirical studies, we demonstrate that STE and EC-STE achieve meaningful variance reduction compared to IPW-based methods and MSE reduction compared to outcome regression, particularly in high-dimensional settings with limited overlap. Semi-synthetic experiments on the Right Heart Catheterization dataset validate the $\eta$-selection procedures under realistic model misspecification, while the application to the Medical Expenditure Panel Survey illustrates the practical utility of the approach for policy-relevant causal questions where overlap is naturally limited by design.

Several directions follow naturally from this work. The current central limit theory covers transport maps that admit an asymptotically linear expansion, such as parametric plug-in maps, and extending it to nonparametric map estimators is a natural next step. Moreover, the choice among exact transport maps leaves the first-order variance unchanged, and whether the transport structure can be exploited for semiparametric efficiency remains open. Finally, the identification relies on unconfoundedness, and relaxing this assumption within the transport framework is an avenue we leave for future exploration.

\bigskip

\section*{Acknowledgements} This work was supported in part by funding from the Office of Naval Research under grant N00014-23-1-2590, the Department of Energy under grant DE-SC0026902, the National Science Foundation under grant DMS-2310831, DMS-2428059, IIS-2435696, IIS-2440954, Two Sigma Investments LP, and LG Management Development Institute AI Research. 

\bibliographystyle{agsm}
\bibliography{biblio}

\newpage
\appendix

\section{Notation Glossary}
\label{app:notation}

\Cref{tab:notation,tab:notation2} collect the notation used throughout the paper, in order of first appearance.

\begin{table}[H]
\centering
\begingroup
\spacingset{1.1}\small
\begin{tabular}{@{}l p{0.56\textwidth}@{}}
\toprule
Symbol & Meaning \\
\midrule
\multicolumn{2}{@{}l}{Population and causal quantities} \\
$(Y(1), Y(0), A, X) \sim \mathcal P$ & potential outcomes, treatment $A\in\{0,1\}$, covariates $X\in\calx\subseteq\R^d$ \\
$Y = AY(1) + (1-A)Y(0)$ & observed outcome (SUTVA) \\
$p = (a, x, y)$ & generic realized observation \\
$r = \mathcal P(A=1)$ & marginal treatment probability \\
$P_X^a$, $P_Y^a$ & law of $X$, resp.\ $Y$, given $A=a$ \\
$p_X^a$ & Lebesgue density of $P_X^a$ \\
$e(x) = \mathcal P(A=1 \mid X=x)$ & propensity score \\
$h_a(x) = \E[Y \mid X=x, A=a]$ & outcome regression under arm $a$ \\
$\tau_{ATT}$, $\tau_{ATC}$, $\tau_{ATE}$ & average treatment effects (on treated, control, population) \\
\midrule
\multicolumn{2}{@{}l}{Optimal transport} \\
$\calp_p(\calx)$, $\calp_{\mathrm{ac}}(\calx)$, $\calp_{p,\mathrm{ac}}(\calx)$ & measures with finite $p$-th moment; absolutely continuous; both \\
$T_{\sharp}P$ & pushforward of $P$ under $T$ \\
$T$ & Monge map, $T_{\sharp}P_X^0 = P_X^1$ \\
$S$ & Monge map in the reverse direction, $S_{\sharp}P_X^1 = P_X^0$ \\
$W_p(\cdot,\cdot)$ & Wasserstein distance of order $p$ \\
$T_{\eta} = \eta\,\mathrm{Id} + (1-\eta)T$ & interpolated map, $\eta\in[0,1]$; $\eta\to 0$ recovers $T$ \\
$P_X^{\eta} = (T_{\eta})_{\sharp}P_X^0$ & interpolated covariate distribution \\
$c_1$, $\hat c_1$ & first-order interpolation bias coefficient and its plug-in \\
$b_{\eta} = (P_X^{\eta} - P_X^1)h_0$ & interpolation imputation bias, equal to term (c) of Section 5 \\
\bottomrule
\end{tabular}
\endgroup
\caption{Notation: population, causal, and transport quantities.}
\label{tab:notation}
\end{table}

\begin{table}[H]
\centering
\begingroup
\spacingset{1.1}\small
\begin{tabular}{@{}l p{0.56\textwidth}@{}}
\toprule
Symbol & Meaning \\
\midrule
\multicolumn{2}{@{}l}{Samples and cross-fitting} \\
$\Omega$, $n$ & full sample and its size; $\Omega_a$, $n_a$: arm-$a$ subsample \\
$\Omega^{(k)}$, $n^{(k)} = n/K$ & evaluation fold $k$ of $K$; $\Omega_a^{(k)}$: its arm-$a$ part \\
$\Omega^{(-k)}$ & training folds, $|\Omega^{(-k)}| = (K-1)n/K$ \\
$P_{X,n}^{a}$, $P_{Y,n}^{a}$ & empirical arm measures; fold versions $P_{X,n}^{a,(k)}$, $P_{X,n}^{a,(-k)}$ \\
$P_n^{(k)}[\,\cdot\,]$ & empirical average over fold $k$ \\
$\hat h_0^{(-k)}$, $\hat T^{(-k)}$ & nuisance estimators trained on $\Omega^{(-k)}$ \\
$\hat\tau_{ATT}$, $\hat\tau_{ATT}^{(k)}$ & cross-fitted STE and its fold-$k$ component \\
\midrule
\multicolumn{2}{@{}l}{Estimation and variance theory} \\
$P_X^a f := \E_{P_X^a}[f(X)]$ & population functional (operator) notation \\
$P^a f$, $P_n^{a}$, $P_n^{a,(k)}$ & $P^a f := \E[f(X,Y)\mid A=a]$; arm and arm-fold empirical counterparts \\
$X'$ & independent copy of $X$ (inner integration variable) \\
$g_k := \hat h_0^{(-k)} \circ \hat T^{(-k)}$ & composed plug-in regression; $g_{\star}$: its $L_2(P_X^0)$ limit \\
$Q_k := \hat T^{(-k)}_{\sharp} P_X^0$ & pushforward of the control law under the fitted map \\
$B_k := P_X^0 g_k - P_X^1 h_0$ & population error of the fold-$k$ plug-in; the conditional bias is $-B_k$; $\overline B_n = K^{-1}\sum_k B_k$ \\
$B_{T,k}$, $B_{h,k}$ & map and regression components of $B_k$ \\
$\psi_{Y,h_0,T}$, $\psi_{g_{\star}}$ & influence function at exact $(h_0, T)$; version with nuisances held fixed at $g_{\star}$ \\
$\phi_T$, $\phi_h$ & asymptotic-linearity kernels of $\hat T^{(-k)}$, $\hat h_0^{(-k)}$ \\
$\psi_T$, $\psi_h$ & influence terms induced by the nuisance estimation \\
$r_n^{T,(k)}$, $r_n^{h,(k)}$ & asymptotic-linearity remainders \\
$T^{\dagger}$, $Q^{\dagger} = T^{\dagger}_{\sharp}P_X^0$, $b^{\dagger}$ & pseudo-limit map, its pushforward, its bias offset \\
$V_{\mathrm{or}}$, $V_{\star}$ & fixed-map oracle variance; $V_{\star} = \E[\psi_{g_{\star}}^2]$ \\
$L_h$, $L^{\ast}_{\nabla h_0}$ & bound and Lipschitz constant of $\nabla h_0$ \\
$L^{(-k)}$, $M^{(-k)}$ & Lipschitz constants of $\hat h_0^{(-k)}$ and of $\nabla \hat h_0^{(-k)}$ \\
\bottomrule
\end{tabular}
\endgroup
\caption{Notation: samples, cross-fitting, and estimation theory.}
\label{tab:notation2}
\end{table}

\section{Brief Recapitulation to Optimal Transport}
\label{sec:brief_recapitulation_to_optimal_transport}

In the following, we briefly recall the key concepts in Optimal Transport.
\textbf{Monge Problem} 
Theorem 5.30 of \citet{villani2009optimal} establishes the existence and uniqueness of the Monge map, also called the deterministic coupling. 
When the source measure $\mu$ is absolutely continuous with respect to Lebesgue measure and both $\mu, \nu \in \mathcal{P}_p\left(\mathbb{R}^d\right)$, the Monge formulation of optimal transport admits a unique optimal map $T$ pushing $\mu$ forward to $\nu$. For the quadratic cost $c(x, y)=\|x-y\|^2$, Brenier's theorem \citep{brenier1991polar} shows that $T= \nabla \psi$ for a convex function $\psi$, uniquely determined $\mu$-a.e. McCann \citep{mccann1995existence} extended this result by relaxing several regularity assumptions on the measures. For general costs $c(x, y)=h(x-y)$ with $h$ strictly convex - including $c(x, y)=\|x-y\|^p$ for any $p>1$ -
\citet{gangbo1996geometry} established existence and uniqueness of the optimal map, which is characterized through $c$-concave Kantorovich potentials rather than gradients of convex functions. The case $p=1$ is qualitatively different: the cost $c(x, y)=\|x-y\|$ is not strictly convex, so uniqueness fails in general; existence of an optimal map was proved by \citet{evans1999differential} under compact support assumptions. Comprehensive treatments of these results can be found in \citet[Theorems 2.12 and 2.44]{villani2003topics} and \citet[Part I, Chapters 9--10]{villani2009optimal}.

We present here the most well-known case established in the Euclidean space with quadratic cost.
\begin{theorem}[Brenier Theorem \citep{villani2009optimal}]
    Let $\calx,\caly\subset \R^d$ carry the quadratic cost $c(x,y)=\|x-y\|^2_2$.
    Let $\mu\in\calp_{2}(\calx)$ and $\nu\in\calp_2(\caly)$ be probability measures with finite second moments, and suppose that the source measure $\mu$ is absolutely continuous with respect to the Lebesgue measure $\lambda$ on $\mathbb{R}^d$.
    
    Then there exists a unique coupling $\pi$ that solves the Kantorovich problem. Under $\pi$, $y$ is uniquely determined for $\mu$-a.e.\ $x$. In addition, there exists a lower semi-continuous function $\varphi:\calx\to\R$ such that $y\in\partial_x\varphi(x)$ for $\mu$-a.e.\ $x$.

    Since $\partial_x\varphi(x)$ is a singleton for $\mu$-a.e. $x$, the Monge map $T$ exists which is $y=T(x)=\Tilde{\nabla}\varphi(x)$, and the optimal coupling $\pi$ is characterized by 
    $$
    \begin{aligned}
    \nu & =T_{\sharp} \mu \\
    \pi & =(I d, T)_{\sharp} \mu .
    \end{aligned}
    $$
\end{theorem}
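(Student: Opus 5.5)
The plan is the classical route: existence of an optimal Kantorovich coupling, then cyclical monotonicity of its support, then Rockafellar's theorem to produce a convex potential, and finally absolute continuity of $\mu$ to force the coupling onto a graph. Since $\mu,\nu\in\calp_2$, for every coupling $\pi$ we have $\int\|x-y\|^2\,d\pi=\int\|x\|^2d\mu+\int\|y\|^2d\nu-2\int x^\top y\,d\pi$. The first two terms are finite and independent of $\pi$, so the Kantorovich problem is equivalent to maximizing $\int x^\top y\,d\pi$ over $\Pi(\mu,\nu)$, the set of couplings of $\mu$ and $\nu$. I will work with this equivalent formulation throughout.

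\textbf{Step 1 (existence of an optimal coupling).} The set $\Pi(\mu,\nu)$ is tight, because its marginals are tight, and it is weakly closed, so it is weakly compact by Prokhorov. The cost $\|x-y\|^2$ is continuous and nonnegative, so $\pi\mapsto\int\|x-y\|^2d\pi$ is weakly lower semicontinuous (approximate the cost from below by bounded continuous functions). A minimizer $\pi$ therefore exists, and its cost is finite because the product coupling $\mu\otimes\nu$ has finite cost.

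\textbf{Step 2 (cyclical monotonicity of the support).} I will show that $\supp(\pi)$ is $c$-cyclically monotone. Suppose instead that some finite family $(x_i,y_i)_{i\le N}$ in the support admits a permutation $\sigma$ with $\sum_i\|x_i-y_{\sigma(i)}\|^2<\sum_i\|x_i-y_i\|^2$. Small balls around the $(x_i,y_i)$ carry positive $\pi$-mass. Removing a small amount $\varepsilon$ of that mass and re-gluing it along $\sigma$ preserves both marginals and strictly lowers the cost, which contradicts optimality. For the quadratic cost this condition is exactly classical cyclical monotonicity, $\sum_i x_i^\top(y_{\sigma(i)}-y_i)\le 0$. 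Rockafellar's theorem then gives a proper lower semicontinuous convex function $\varphi$ with $\supp(\pi)\subset\partial\varphi$, which means $y\in\partial\varphi(x)$ for $\pi$-a.e.\ $(x,y)$.

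\textbf{Step 3 (reduction to a map).} The set $\{x:\varphi(x)<\infty\}$ has $\mu$-full measure, since $\pi$ concentrates on the graph of $\partial\varphi$. The boundary of this convex set is Lebesgue-null. On its interior, $\varphi$ is locally Lipschitz and hence differentiable Lebesgue-a.e.\ by Rademacher's theorem. Because $\mu\ll\lambda$, it follows that $\partial\varphi(x)=\{\nabla\varphi(x)\}$ for $\mu$-a.e.\ $x$. Setting $T=\nabla\varphi$ then gives $\pi=(\mathrm{Id},T)_\sharp\mu$ and $\nu=T_\sharp\mu$.

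\textbf{Step 4 (uniqueness).} Let $\pi_1,\pi_2$ be two optimal couplings. Their average $\tfrac12(\pi_1+\pi_2)$ is also optimal, because the objective is linear. By Steps 2 and 3 the average is concentrated on the graph of a single map $\nabla\bar\varphi$. Both $\pi_1$ and $\pi_2$ are absolutely continuous with respect to the average, so they are concentrated on that same graph. Each therefore equals $(\mathrm{Id},\nabla\bar\varphi)_\sharp\mu$, and so $\pi_1=\pi_2$. The map is then uniquely determined $\mu$-a.e.

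The main obstacle is Step 2. Making the local re-gluing argument rigorous requires measurable selection and a careful choice of the small balls so that the modified measure remains a positive coupling with a strictly smaller cost. If this becomes messy, I will instead invoke Kantorovich duality, $\min\int c\,d\pi=\max\{\int f\,d\mu+\int g\,d\nu: f(x)+g(y)\le c(x,y)\}$, which is attained for $c$-concave $f$ under finite second moments. In that formulation, $\varphi(x)=\tfrac12\|x\|^2-f(x)$ is convex, and complementary slackness directly yields $y\in\partial\varphi(x)$ on $\supp(\pi)$.
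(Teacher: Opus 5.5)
Your proposal is correct. The paper gives no proof of its own and only quotes the result from \citet{villani2009optimal}; your argument (optimal plan by compactness, cyclical monotonicity of its support, Rockafellar's theorem, a.e.\ differentiability of the convex potential together with $\mu\ll\lambda$, and uniqueness by averaging two optimal plans) is the standard proof behind that citation. One small slip, in the fallback only: for the unnormalized cost $\|x-y\|^2$ the convex potential is $\varphi(x)=\tfrac12\bigl(\|x\|^2-f(x)\bigr)$, whereas $\tfrac12\|x\|^2-f(x)$ is the correct formula for the cost $\tfrac12\|x-y\|^2$.
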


The Monge problem above is equivalent to an optimal matching problem when $P_X^1$ and $P_X^0$ are discrete with the same number of atoms \citep{peyre2019computational}. To allow mass splitting, one solves the Kantorovich problem instead, which finds an optimal coupling between distributions that may be discrete or continuous. From an estimated coupling, the Brenier map can be estimated by barycentric projection \citep{reich2013nonparametric}, even for finite and unbalanced samples. A parametric plug-in \citep{perrot2016mapping, deb2021rates, zhu2021functional} extends the map to out-of-sample points. 

\textbf{Kantorovich Problem.} \citet{kantorovich1942translocation} proposed a relaxation of the Monge problem that seeks a coupling rather than a map. The coupling allows mass splitting from $P$ to $Q$, which enlarges the set of solvable problems. The problem is to find a coupling $\pi$ of $P$ and $Q$ that minimizes the expected cost of transportation,
$$
\inf_{\pi\in\Pi(P,Q)} \int c(x,y) d\pi(x,y) 
$$
The Kantorovich problem also induces a metric on the space of probability measures.
\textbf{Wasserstein metric.} Let $(\mathcal{S}, d)$ be a Polish metric space and let $q \in[1, \infty)$. For any two probability measures $\mu, \nu$ on $\mathcal{S}$, the Wasserstein metric of order $q$ is given by
$$
W_q(\mu, \nu) =\left(\inf _{\pi \in \Pi(\mu, \nu)} \int_{\mathcal{S}\times\mathcal{S}} d(x, y)^q \, d \pi(x, y)\right)^{1 / q},
$$
which is a metric on the space of probability measures on $\mathcal{S}$ with finite $q$-th moments.

\section{Connection to Inverse Propensity Weighting}
\label{app:ipw-connection}

Throughout this appendix we work at the population level and write $p_X^1, p_X^0$ for the Lebesgue densities of $P_X^1, P_X^0$. Let $r := \Pr(A=1)$ denote the marginal treatment probability, so that $\Pr(A=0) = 1-r$, and let $e(x) = \Pr(A=1 \mid X=x)$ denote the propensity score. Recall $h_0(x) = \E[Y \mid A=0, X=x]$ and that the Monge map $T$ satisfies $T_{\sharp} P_X^0 = P_X^1$. The control counterfactual component of
$\tau_{ATT}$ is
$$
  \theta := \E[Y(0) \mid A=1] = \int h_0  dP_X^1.
$$
By Bayes' rule the ATT density ratio is
$$
  \varrho(x) := \frac{dP_X^1}{dP_X^0}(x) = \frac{p_X^1(x)}{p_X^0(x)}
            = \frac{1-r}{r}\frac{e(x)}{1-e(x)},
$$
so $\varrho$ and the inverse-propensity odds $e/(1-e)$ coincide up to the
constant $(1-r)/r$, which equals one under equal group sizes ($r=\tfrac12$)
and cancels in any self-normalized (H\'ajek) form. The two adjusted forms of
the estimand are
$$
  \theta_{\mathrm{OT}} := \E_{P_X^0}\!\left[h_0 \circ T\right],
  \qquad
  \theta_{\mathrm{IPW}} := \E_{P_X^0}\!\left[\varrho \cdot h_0\right].
$$

\begin{ass}[Bounded density ratio]\label{ass:overlap}
$p_X^0(x) > 0$ wherever $p_X^1(x) > 0$, and $\varrho$ is bounded on
the support of $P_X^1$.
\end{ass}
This is a one-sided form of strict overlap (\Cref{ass:strong_positivity}). It bounds the density ratio only in the direction used in this appendix.

\begin{ass}[Regularity]\label{ass:reg}
$p_X^0, p_X^1$ have common convex support on which they are bounded away from $0$ and $\infty$, and $h_0 \in L^1(P_X^1)$. Under
\Cref{ass:reg} with quadratic cost, Brenier's theorem gives a unique map $T = \nabla \varphi$ with $\varphi$ convex, $T_\sharp P_X^0 = P_X^1$, and $T$ a diffeomorphism on the interior of the support, with inverse $S := T^{-1}$ satisfying $S_\sharp P_X^1 = P_X^0$.
\end{ass}

\subsection{Monge map}
We first connect the exact Monge map to the IPW weights.

\begin{prop}
\label{prop:monge-ipw}
Under \Cref{ass:overlap} and \Cref{ass:reg},
$$
  \theta_{\mathrm{OT}} = \theta_{\mathrm{IPW}} = \theta,
$$
and the Monge-Amp\`ere identity holds pointwise:
$$
  p_X^0(x) = p_X^1\!\big(T(x)\big)\big|\!\det \nabla T(x)\big|
  \qquad\Longleftrightarrow\qquad
  \big|\!\det \nabla S(y)\big| = \frac{p_X^1(y)}{p_X^0\!\big(S(y)\big)}.
$$
Consequently the weight that transport assigns to a point $y \in \supp P_X^1$,
$$
  w_{\mathrm{OT}}(y) = \big|\!\det \nabla S(y)\big|
                    = \frac{p_X^1(y)}{p_X^0\!\big(S(y)\big)},
  \qquad\text{versus}\qquad
  w_{\mathrm{IPW}}(y) = \varrho(y) = \frac{p_X^1(y)}{p_X^0(y)},
$$
satisfies $w_{\mathrm{OT}} \equiv w_{\mathrm{IPW}}$ on $\supp P_X^1$ if and only if $S = \mathrm{id}$, equivalently $T = \mathrm{id}$, which under a balanced map forces $P_X^0 = P_X^1$.
\end{prop}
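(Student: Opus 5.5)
The plan is to prove the three claims in sequence: the equality of the estimands, the Monge--Amp\`ere identity, and the characterization of when the two weights agree.

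\textbf{Equality of estimands.} For $\theta_{\mathrm{OT}}=\theta$ I will use the pushforward identity $T_\sharp P_X^0=P_X^1$ directly:
$$\int h_0\circ T\,dP_X^0=\int h_0\,d(T_\sharp P_X^0)=\int h_0\,dP_X^1 .$$
This holds because $h_0\in L^1(P_X^1)$ by \Cref{ass:reg}. For $\theta_{\mathrm{IPW}}=\theta$ I will note that \Cref{ass:overlap} gives $P_X^1\ll P_X^0$ with Radon--Nikodym derivative $\varrho=p_X^1/p_X^0$. Then $\int \varrho h_0\,dP_X^0=\int h_0\,dP_X^1$. Boundedness of $\varrho$ is only needed to ensure integrability, and $h_0\in L^1(P_X^1)$ already suffices.

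\textbf{Monge--Amp\`ere identity.} Under \Cref{ass:reg}, $T=\nabla\varphi$ is a diffeomorphism on the interior of the common convex support, with $\nabla T=\nabla^2\varphi$ positive definite. I will apply the change-of-variables formula to $T_\sharp P_X^0=P_X^1$. For every bounded test function $f$,
$$\int f(T(x))p_X^0(x)\,dx=\int f(y)p_X^1(y)\,dy=\int f(T(x))\,p_X^1(T(x))\,|\det\nabla T(x)|\,dx,$$
where the last step substitutes $y=T(x)$. Since $f\circ T$ ranges over all bounded test functions as $f$ varies ($T$ is bijective), the densities agree a.e. By continuity, which holds under the smoothness that Caffarelli regularity gives for densities bounded above and below on convex domains, they agree pointwise.

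The equivalent form follows by substituting $x=S(y)$ and using $\nabla S(y)=(\nabla T(S(y)))^{-1}$. This gives
$$|\det\nabla S(y)|=\frac{1}{|\det\nabla T(S(y))|}=\frac{p_X^1(y)}{p_X^0(S(y))}.$$
The key technical point here is justifying the pointwise (not merely a.e.) statement. I will rely on Caffarelli's interior regularity, treating it as what \Cref{ass:reg} asserts. Alternatively I will read the identity as holding a.e.

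\textbf{The iff.} Set $w_{\mathrm{OT}}(y)=|\det\nabla S(y)|$. The direction "if" is trivial: $S=\mathrm{id}$ gives $p_X^0(S(y))=p_X^0(y)$, so the two weights agree. For the converse I will interpret $w_{\mathrm{OT}}\equiv w_{\mathrm{IPW}}$ as the requirement that the two reweighting rules coincide as per-unit procedures. That is, $p_X^0(S(y))=p_X^0(y)$ and each control unit is compared at the same location, which forces $S(y)=y$.

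As stated, the pointwise equality of the ratio functions alone is weaker: it only says $p_X^0\circ S=p_X^0$. This is the main obstacle.

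\textbf{Closing the gap.} The route I plan to take uses the convexity structure. $S=\nabla\varphi^*$ pushes $P_X^1$ to $P_X^0$ and preserves the level sets of $p_X^0$. I will try to combine this with the optimality of $S$: its displacement cost satisfies $\int\|y-S(y)\|^2\,dP_X^1=W_2^2$. A measure-preserving rearrangement of $P_X^0$ along its own level sets that is also a gradient of a convex function must be the identity, by uniqueness in Brenier's theorem. The plan is to argue via that uniqueness: compose the maps to obtain an optimal self-map of $P_X^0$, which must be the identity.

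If that argument fails, I will fall back on the per-unit reading, under which the conclusion is immediate. Finally, $T=\mathrm{id}$ together with $T_\sharp P_X^0=P_X^1$ gives $P_X^0=P_X^1$.
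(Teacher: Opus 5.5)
You handle $\theta_{\mathrm{OT}}=\theta_{\mathrm{IPW}}=\theta$ and the Monge--Amp\`ere identity as the paper does: the pushforward identity, then change of variables for the diffeomorphism $S=T^{-1}$. Your remark on a.e.\ versus pointwise is extra care the paper does not take.

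The gap is in the ``only if'' direction. You correctly reduce $w_{\mathrm{OT}}\equiv w_{\mathrm{IPW}}$ to $p_X^0\circ S=p_X^0$ on $\supp P_X^1$. Your planned upgrade to $S=\mathrm{id}$ via Brenier uniqueness cannot work. $S$ pushes $P_X^1$ to $P_X^0$, not $P_X^0$ to itself. Mapping each level set of $p_X^0$ into itself therefore does not make $S$ a measure-preserving self-map of $P_X^0$. The only compositions available, $S\circ T$ and $T\circ S$, are already the identity and carry no information.

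In fact no argument can work, because the implication is false under the stated assumptions. Take $P_X^0=\mathrm{Unif}[0,1]$ and $p_X^1(y)=\tfrac12+y$ on $[0,1]$. Both assumptions hold and $S(y)=(y+y^2)/2$. Then $w_{\mathrm{OT}}(y)=S'(y)=\tfrac12+y=p_X^1(y)/p_X^0(y)=w_{\mathrm{IPW}}(y)$, yet $S\neq\mathrm{id}$.

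Uniformity is not the culprit. On $[0,1]^2$ take $p_X^0(x)\propto e^{-x_1}$ and $p_X^1(y)\propto e^{-y_1}G'(y_2)$, with $G$ any non-uniform CDF on $[0,1]$ whose density is bounded away from $0$ and $\infty$. Then $S(y)=(y_1,G(y_2))$ is the gradient of a convex function pushing $P_X^1$ to $P_X^0$. Again $p_X^0\circ S=p_X^0$ while $S\neq\mathrm{id}$.

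What you can actually prove is this: $w_{\mathrm{OT}}\equiv w_{\mathrm{IPW}}$ on $\supp P_X^1$ if and only if $p_X^0\circ S=p_X^0$ there. That condition is implied by, but strictly weaker than, $S=\mathrm{id}$.

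The paper's proof does not close this either. It asserts that comparing $p_X^1(y)$ with $p_X^0$ at the preimage $S(y)$, rather than at $y$, agrees for all $y$ iff $S(y)=y$. That is exactly your per-unit reading: a statement about where the comparison is made, not about the values of the two weight functions. Your fallback therefore coincides with the paper. You should present it as a reinterpretation of the claim, or replace it with an added hypothesis, rather than as a proof of the literal statement.

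One hypothesis that works: in one dimension, with $p_X^0$ continuous on the support and constant on no interval, the claim is true. Let $(u,v)$ be a component of $\{y:S(y)\neq y\}$. Because $S$ is increasing, the iterates $S^n(y)$ converge monotonically to an endpoint of $(u,v)$. Then $p_X^0(y)=p_X^0(S^n(y))$ forces $p_X^0$ to be constant on $(u,v)$, a contradiction.
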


\begin{proof}
Equality of the functionals is the push-forward identity $\int (\phi \circ T)dP_X^0 = \int \phi dP_X^1$ applied to $\phi = h_0$,
which simultaneously yields the IPW form on writing $dP_X^1 = \varrho dP_X^0$. The displayed identity is the change of variables for the diffeomorphic push-forward $S_\sharp P_X^1 = P_X^0$. The weight comparison reads $p_X^1$ against $p_X^0$ at the same point ($w_{\mathrm{IPW}}$) versus against $p_X^0$ at the preimage $S(y)$ ($w_{\mathrm{OT}}$); these agree for all $y$ iff $S(y)=y$.
\end{proof}

\subsection{Entropic OT}
For the entropic OT map, we give the following characterization.

\begin{prop}
\label{prop:entropic-ipw}
For $\epsilon > 0$ let
$$
  \pi^\epsilon = \argmin_{\pi \in \Pi(P_X^0, P_X^1)}
    \int c  d\pi + \epsilon \mathrm{KL}\!\big(\pi \| P_X^0 \otimes P_X^1\big),
$$
with Schr\"odinger potentials $(f,g)$, barycentric map
$T^\epsilon(x) = \int y  d\pi^\epsilon(y \mid x)$, and entropic estimand
$\theta_{\mathrm{OT}}^\epsilon := \E_{P_X^0}[h_0 \circ T^\epsilon]$. Then:
\begin{enumerate}
  \item\label{it:plan} \emph{(Plan exactness.)} The second marginal of
  $\pi^\epsilon$ is exactly $P_X^1$, so pushing $h_0$ through the plan gives
  $\iint h_0(y)d\pi^\epsilon(x,y) = \theta$ for every $\epsilon$.
  \item\label{it:limits} \emph{(Limits.)} As $\epsilon \to 0$,
  $\pi^\epsilon \to \pi^{\mathrm{Monge}}$ and $T^\epsilon \to T$, so
  $\theta_{\mathrm{OT}}^\epsilon \to \theta_{\mathrm{OT}} = \theta$ and
  Proposition~\ref{prop:monge-ipw} applies. As $\epsilon \to \infty$,
  $\pi^\epsilon \to P_X^0 \otimes P_X^1$ and $T^\epsilon(x) \to \E_{P_X^1}[X]$,
  collapsing all covariate adjustment.
  \item\label{it:recover} \emph{(IPW recovery.)} Replace the marginal constraints by the penalties
  $$\lambda_0\,\mathrm{KL}(\pi_0 \| p_X^0) + \lambda_1\,\mathrm{KL}(\pi_1 \| p_X^1),$$
  and take the localizing cost $c_\epsilon(x,y) = \tfrac1\epsilon\|x-y\|^2$.
  The optimal plan factorizes as
  $d\pi(x,y) = u(x)v(y)e^{-c_\epsilon(x,y)/\epsilon}dP_X^0(x)dP_X^1(y)$.
  As $\epsilon \to 0$ with the treated mass held hard ($\lambda_1 \to \infty$),
  the kernel concentrates on the diagonal, the control-side scaling satisfies
  $u(x)p_X^0(x) \to p_X^1(x)$, hence
  $$
    u(x) \rightarrow \frac{p_X^1(x)}{p_X^0(x)} = \varrho(x),
  $$
  and the induced control weights coincide with the IPW weights.
\end{enumerate}
\end{prop}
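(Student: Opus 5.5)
We treat the three parts of \Cref{prop:entropic-ipw} separately, working through the Schr\"odinger system of the entropic problem.

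\textbf{Part \ref{it:plan}.} This is immediate from feasibility. Since $\pi^\epsilon \in \Pi(P_X^0,P_X^1)$, its second marginal is $P_X^1$. Fubini then gives $\iint h_0(y)\,d\pi^\epsilon(x,y) = \int h_0\,dP_X^1 = \theta$, using $h_0\in L^1(P_X^1)$ from \Cref{ass:reg}. Existence and uniqueness of $\pi^\epsilon$ follow from strict convexity of the KL penalty and finite cost under finite second moments. The dual representation $d\pi^\epsilon = e^{(f(x)+g(y)-c(x,y))/\epsilon}\,dP_X^0\,dP_X^1$ is standard and is only recorded for later use.

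\textbf{Part \ref{it:limits}.} For $\epsilon\to 0$, we invoke $\Gamma$-convergence of the entropic objective to the Kantorovich objective. Tightness of $\Pi(P_X^0,P_X^1)$ supplies a weak limit point of $\pi^\epsilon$. A block approximation of the Brenier plan shows that this limit is optimal, and uniqueness from the Brenier theorem of \Cref{sec:brief_recapitulation_to_optimal_transport} identifies it as $(\mathrm{Id},T)_\sharp P_X^0$. To get $T^\epsilon\to T$ in $L_2(P_X^0)$, we use
$$\|T^\epsilon - T\|_{L_2(P_X^0)}^2 \le \iint \|y-T(x)\|^2\,d\pi^\epsilon,$$
which holds by Jensen. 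The right-hand side tends to $0$ by weak convergence together with uniform integrability of second moments, and continuity of $T$ on the interior of the support from \Cref{ass:reg}. We then need $\theta^\epsilon_{\mathrm{OT}}\to\theta$. Here we assume, or add as a mild requirement, that $h_0$ is continuous with quadratic growth (Lipschitz suffices), and apply dominated convergence; \Cref{prop:monge-ipw} then yields equality with $\theta$. For $\epsilon\to\infty$, the KL term dominates. Comparing with the independent coupling gives
$$\mathrm{KL}(\pi^\epsilon\|P_X^0\otimes P_X^1)\le \epsilon^{-1}\int c\,d(P_X^0\otimes P_X^1)\to 0.$$
Pinsker's inequality gives $\pi^\epsilon\to P_X^0\otimes P_X^1$, and $T^\epsilon(x)\to\E_{P_X^1}[X]$ follows in $L_1$ by conditioning.

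\textbf{Part \ref{it:recover}.} This is the main obstacle, because the statement is a formal limit and must be made precise. We would write the first-order conditions of the unbalanced problem (Chizat et al.\ style scaling). These give the factorized form with $u=(K v\,/\,\cdot)^{-\epsilon/\lambda_0\cdot}$-type fixed-point equations, where $K$ is the Gaussian kernel $e^{-\|x-y\|^2/\epsilon^2}$ acting on $P_X^1$.

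As $\lambda_1\to\infty$, the second marginal becomes hard, so $u\,(K^\top\!)\,\cdot$ reproduces $p_X^1$ exactly. As $\epsilon\to0$, the normalized kernel $K/\int K$ is an approximate identity. Then $\int u(x)e^{-\|x-y\|^2/\epsilon^2}p_X^0(x)\,dx \cdot v(y) = 1$ forces $u(y)p_X^0(y)v(y)\,Z_\epsilon\to 1$, where $Z_\epsilon$ is the Gaussian normalization.

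Because the control side is soft, the local balancing must be carried by $u$ rather than by mass transport. We would choose $v$ constant (normalizing $vZ_\epsilon=1$) and verify this choice in the limit. Then $u\,p_X^0\to p_X^1$, and hence $u\to\varrho$ on the support, by Lebesgue differentiation under the continuity and positivity in \Cref{ass:reg} and boundedness of $\varrho$ from \Cref{ass:overlap}.

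The delicate part is justifying the order of limits ($\lambda_1\to\infty$ first, then $\epsilon\to0$, with $\lambda_0$ fixed and the soft penalty scaled so that it does not force $u\equiv1$). We expect to state convergence in $L_1(P_X^0)$ and to control the approximate-identity error using uniform continuity of the densities. Given this, the induced control weights $\int d\pi(x,y)/dP_X^0(x) = u(x)\cdot(\text{kernel mass})\to\varrho(x)$ coincide with the IPW weights.
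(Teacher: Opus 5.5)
Your Parts~1 and~2 follow the paper's own (one-line) route: the marginal constraint for Part~1, then $\Gamma$-convergence and comparison with the independent coupling for Part~2. The details you supply are sound. The continuity and growth condition you add on $h_0$ is genuinely needed for $\theta^{\epsilon}_{\mathrm{OT}}\to\theta$, since $h_0\in L^1(P_X^1)$ does not by itself let $L_2$ convergence of $T^\epsilon$ pass through $h_0$.

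The gap is in Part~3. There you take the paper's route (Gibbs factorization, concentration on the diagonal, hard treated mass), but the argument breaks down at the scalings, in three places.

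(i) Even granting the approximate identity, your own relation $u\,p_X^0\,vZ_\epsilon\to1$ with the normalization $vZ_\epsilon=1$ gives $u\,p_X^0\to1$, i.e.\ $u\to1/p_X^0$. It does not give $u\,p_X^0\to p_X^1$.

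(ii) $v$ is not at your disposal. With $\lambda_0<\infty$ the first-order condition $c_\epsilon+\epsilon\log\frac{d\pi}{d(P_X^0\otimes P_X^1)}+\lambda_0\log\frac{d\pi_0}{dP_X^0}(x)+\lambda_1\log\frac{d\pi_1}{dP_X^1}(y)=0$ pins $u$, up to a multiplicative constant, as $u\propto(d\pi_0/dP_X^0)^{-\lambda_0/\epsilon}$. In the unnormalized gauge this reads $u=(\int vK_\epsilon\,dP_X^1)^{-\lambda_0/(\lambda_0+\epsilon)}$, not the exponent you wrote. Then $v$ is determined by the hard constraint, so there is no choice left to ``verify''. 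Since $d\pi_0/dP_X^0\to\varrho$ (see below), this $u$ behaves like $\varrho^{-\lambda_0/\epsilon}$ and degenerates wherever $\varrho\neq1$. No normalization yields $u\to\varrho$ unless $\varrho$ is constant. The paper's sketch ``$u\,p_X^0\to p_X^1$'' is correct only if $u$ is read as the control-marginal density.

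(iii) For the same reason, uniform continuity of $p_X^0,p_X^1$ does not control the approximate-identity step. $\log u_\epsilon$ has gradient of order $\lambda_0\|\nabla\log\varrho\|/\epsilon$, so $u_\epsilon$ changes by $O(1)$ factors across the kernel width. Hence $\int u_\epsilon K_\epsilon\,dP_X^0$ is not asymptotic to $u_\epsilon p_X^0 Z_\epsilon$.

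What is true is the gauge-invariant statement $w_\epsilon:=d\pi^\epsilon_0/dP_X^0=u\int vK_\epsilon\,dP_X^1\to\varrho$, which is what the closing clause about induced control weights asserts. It is best proved on the plan rather than on the scalings.

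Hold the treated marginal hard and take a near-diagonal competitor: draw $Y\sim P_X^1$ and $X$ uniform on $B(Y,\epsilon)\cap\supp P_X^1$, which is admissible under \Cref{ass:reg}. Its objective value is $\lambda_0\mathrm{KL}(P_X^1\|P_X^0)+o(1)$, finite by \Cref{ass:overlap}.

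Every term of the objective is nonnegative, so the optimizer satisfies $\iint\|x-y\|^2\,d\pi^\epsilon=O(\epsilon)$. Because $\pi^\epsilon_1=P_X^1$, this gives $W_2(\pi^\epsilon_0,P_X^1)\to0$. It also gives $\limsup_\epsilon\mathrm{KL}(\pi^\epsilon_0\|P_X^0)\le\mathrm{KL}(P_X^1\|P_X^0)$, which lower semicontinuity turns into convergence of the entropies.

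Weak convergence together with convergence of $\int w_\epsilon\log w_\epsilon\,dP_X^0$ upgrades to $w_\epsilon\to\varrho$ in $L_1(P_X^0)$. One way is uniform integrability plus strict convexity of $t\log t$. Alternatively, when $\log\varrho$ is continuous, apply Pinsker to $\mathrm{KL}(\pi^\epsilon_0\|P_X^1)=\mathrm{KL}(\pi^\epsilon_0\|P_X^0)-\int\log\varrho\,d\pi^\epsilon_0\to0$.

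This argument needs neither an approximate identity nor a choice of gauge, and it is the precise form of the IPW-recovery claim.
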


\begin{proof}
Part~\ref{it:plan} is the marginal constraint of the balanced entropic
problem. Part~\ref{it:limits} is the standard $\Gamma$-convergence of
entropic OT to the unregularized problem as $\epsilon \to 0$ and to the
independent coupling as $\epsilon \to \infty$, together with continuity of
the barycentric projection. For part~\ref{it:recover}, the Gibbs form of the
unbalanced optimum gives the stated factorization; under the localizing cost
the kernel $e^{-c_\epsilon/\epsilon}$ concentrates on $\{x=y\}$, so matching
the hard treated mass forces $up_X^0 \to p_X^1$.
\end{proof}

\section{Proof Details}\label{app:proofs}

\begin{proof}[Direct proof of \Cref{thm:att_consistency}]
    The estimation error can be decomposed into
    $$
    \begin{aligned}
        |\hat{\tau}_{ATT}^{(k)} - \tau_{ATT}| &\leq |E_{P_{Y,n}^{1,(k)}}[Y] - E_{P_{Y}^{1}}[Y]| +\Bigl|\Bigl( P_{X,n}^{0,(k)}\Bigl(\hat{h}_0^{(-k)}\circ \hat{T}^{(-k)}\Bigr) - P_X^0\bigl(h_0\circ T\bigr)\Bigr)\Bigr|\\
        &= (I) + (II)
    \end{aligned}
    $$
    The term $(I)$ is the sample mean estimate of the population mean. As $|\Omega_{1}^{(k)}|/|\Omega^{(k)}|\to_p r$ and $\E[A_iY_i] = r\E[Y_i|A_i=1]=rE_{P_{Y}^{1}}[Y]<\infty$, the WLLN implies $(I) = E_{P_{Y,n}^{1,(k)}}[Y] - E_{P_{Y}^{1}}[Y] = o_p(1).$
    The second term $(II)$ can be further decomposed into three terms
    $$
    \begin{aligned}
        (II) &\leq \Bigl|P_{X,n}^{0,(k)}\Bigl(\hat{h}_0^{(-k)}\circ \hat{T}^{(-k)}\Bigr)- P_{X,n}^{0,(k)}\Bigl(\hat{h}_0^{(-k)}\circ T\Bigr)\Bigr|\\
        &\quad + \Bigl|P_{X,n}^{0,(k)}\Bigl(\hat{h}_0^{(-k)}\circ T\Bigr)
        - P_{X,n}^{0,(k)}\Bigl({h}_0\circ T\Bigr)\Bigr|\\
        &\quad+ \Bigl|P_{X,n}^{0,(k)}\Bigl({h}_0\circ T\Bigr)
        - P_X^0\bigl(h_0\circ T\bigr)\Bigr|\\
        &= (II.a)+(II.b)+(II.c)
    \end{aligned}
    $$
    (II.c) is the difference between the empirical average and the expectation for the function $h_0\circ T$ in the control group. Therefore, under assumption 2.(a), the LLN applies and gives $(II.c)=o_p(1)$.
    Now we turn to the terms that involve nuisance parameters $T$ and $h_0$.
    By the Lipschitz condition on the outcome regression $\hat{h}_0^{(-k)}$, 
    $$
    \begin{aligned}
        (II.a) &\leq \frac{1}{\bigl|\Omega_0^{(k)}\bigr|}\sum_{i\in \Omega_0^{(k)}} \Bigl| 
        \hat{h}_0^{(-k)}\circ \hat{T}^{(-k)}(X_i)- \hat{h}_0^{(-k)}\circ T(X_i)
        \Bigr|\\
        &\leq \frac{L^{(-k)}}{\bigl|\Omega_0^{(k)}\bigr|}\sum_{i\in \Omega_0^{(k)}}\|\hat{T}^{(-k)}(X_i) - T(X_i)\|\\
        &\leq L^{(-k)} \Bigl(\frac{1}{\bigl|\Omega_0^{(k)}\bigr|}\sum_{i\in \Omega_0^{(k)}}\|\hat{T}^{(-k)}(X_i) - T(X_i)\|^2\Bigr)^{1/2}
    \end{aligned}
    $$
    where the third line is by the Cauchy-Schwarz inequality. The constant $L^{(-k)}$ is measurable with respect to $\Omega^{(-k)}$, so the conditional argument below is unaffected. Under moment condition 2.(b), cross fitting (independent samples when holding $\Omega^{(-k)}$ and therefore $\hat{T}^{(-k)}$ fixed) and the law of large numbers,
    $$
    \begin{aligned}
        (II.a)^2 &\lesssim (L^{(-k)})^2 \|\hat{T}^{(-k)} - T\|^2_{L_2(P_X^0)}+o_p(1)
        = o_p(1).
    \end{aligned}
    $$
    The last equality follows from the $L_2$ consistency of $\hat{T}^{(-k)}$, since $(L^{(-k)})^2 = O_p(1)$ and the product $O_p(1)\, o_p(1)$ is $o_p(1)$ by Slutsky. Therefore $(II.a)=o_p(1)$.

    Finally, we consider $(II.b)$, which is the validation error on the outcome regression estimator. 
    Notably, if $X\sim P_X^0$ and $T(X)\sim P_X^1$. Therefore, $\frac{1}{|\Omega_0^{(k)}|}\sum_{i\in\Omega_0^{(k)}}\delta_{T(X_i)}\Rightarrow P_X^1$ as sample size goes to infinity. Applying similar condition for $(II.a)$, we can show
    $$
    \begin{aligned}
        (II.b) &\leq \frac{1}{\bigl|\Omega_0^{(k)}\bigr|}\sum_{i\in \Omega_0^{(k)}} \Bigl| 
        \hat{h}_0^{(-k)}\circ {T}(X_i)- h_0\circ T(X_i)
        \Bigr|\\
        &\leq \Bigl(\frac{1}{\bigl|\Omega_0^{(k)}\bigr|}\sum_{i\in \Omega_0^{(k)}}\|\hat{h}_0^{(-k)}\circ {T}(X_i) - h_0\circ T(X_i)\|^2\Bigr)^{1/2}.\\
    \end{aligned}
    $$
    Under moment condition 2.(b), LLN, and asssumption 1, we can show $(II.b)=o_p(1)$
    $$
    (II.b)^2\lesssim\|\hat{h}_0^{(-k)} - h_0\|^2_{L_2(P_X^1)} + o_p(1)= o_p(1)
    $$
    Therefore, bring all terms to consideration, the consistency can be shown by first principle: for any $\epsilon>0$, as $n\to +\infty$
    $$
    P(|\hat{\tau}_{ATT}^{(k)} - \tau_{ATT}|>\epsilon) \leq P(Z_n>\epsilon) \to 0 \quad \text{for } Z_n=o_p(1).
    $$
    The overall estimator $\hat{\tau}_{ATT} - \tau_{ATT} = \frac{1}{K}\sum_{k=1}^K \hat{\tau}_{ATT}^{(k)} - \tau_{ATT} = o_p(1)$.
\end{proof}

\textbf{Notation.} 
Throughout, observations are i.i.d.\ tuples $(Y, A, X)$ with $Y\in\mathbb R$ the outcome, $A\in\{0,1\}$ the treatment indicator,
and $X\in\mathcal X\subseteq\mathbb R^d$ the covariate. Let $P_X^0,P_X^1$ denote the conditional laws of $X$ given $A=0,1$,
respectively, and $r := \Pr(A=1)\in(0,1)$. 

The outcome regression is $h_0(x) := \E[Y\mid X=x, A=0]$, and $T:\mathcal X\to\mathcal X$ is the optimal transport map pushing $P_X^0$ to $P_X^1$ under the chosen cost (we work with the quadratic cost in the verification of condition (A1) below).

Cross-fitting splits the sample into $K$ folds $\Omega^{(1)},\dots,\Omega^{(K)}$ of equal size $n^{(k)} = n/K$, with 
$\Omega^{(-k)} := \bigcup_{k'\neq k}\Omega^{(k')}$ of size $|\Omega^{(-k)}| = (K-1)n/K$. For each fold $k$, $\hat h_0^{(-k)}$ and $\hat T^{(-k)}$ denote regression and transport-map estimators trained on $\Omega^{(-k)}$. Empirical averages over fold $k$ are written $P_n^{(k)}[\cdot] := n^{(k),-1}\sum_{i\in\Omega^{(k)}}(\cdot)$.

For functions $f:\mathcal X\to\mathbb R^q$, $\nabla f$ denotes the Jacobian / gradient and $\|f\|_{L_2(\nu)}^2 := \int\|f\|^2 d\nu$ for any probability measure $\nu$ on $\mathcal X$. We use $a^\top b$ for the standard Euclidean inner product.

\subsection{Influence Function and Auxiliary Lemmas}
\label{app:if-lemmas}

\textbf{Influence function and oracle variance.}
With the nuisances held fixed at their population values, the influence function of the target functional is derived as follows. We first derive the influence function of 
$$
\tau_{ATT} = 
\E[Y|A=1] - \E_{P_X^0}[h_0(T(X))]
$$
from which we can define the parameter functional 
$$
\Psi(P, h, T, r) = \tau_{ATT}
= \int y P(dy|A=1) - \int h_0(T(x)) P(dx|A=0).
$$
Now, we design a class of parametric submodels in the direction of observation $p=(a,x,y)$
$$
P_{\epsilon} = (1-\epsilon)P + \epsilon\delta_p
$$
following von Mises expansion, the Gateaux derivative (influence function) is
$$
\left.\frac{d}{d \epsilon} \Psi\left(P_\epsilon, h_0, T, r\right)\right|_{\epsilon=0}=\psi_{Y,h_0,T}(p)
.$$
Now, under the perturbation, the first term has the form
$$
P_{\epsilon, Y \mid A=1}(y)=\frac{(1-\epsilon) P_{Y, A=1}(y)+\epsilon a \delta_y(y)}{(1-\epsilon) P(A=1)+\epsilon a}
$$
taking derivatives gives
$$
\begin{aligned}
&\frac{d}{d \epsilon}\left[\frac{\int y d\left((1-\epsilon) P_{Y, A=1}(y)+\epsilon a \delta_y(y)\right)}{(1-\epsilon) r+\epsilon a}\right]_{\epsilon=0} \\
=&\frac{1}{r}[a y-E[Y \mid A=1]]-\frac{E[Y \mid A=1]}{r}(a-r) \\
=&\frac{a}{r}(y-E[Y \mid A=1]).
\end{aligned}
$$
For the second term, we similarly derive the distribution under perturbation 
$$
P_{\epsilon, X \mid A=0}(x)=\frac{(1-\epsilon) P_{X, A=0}(x)+\epsilon(1-a) \delta_x(x)}{(1-\epsilon)(1-r)+\epsilon(1-a)}
$$
and the corresponding derivative
$$
\begin{aligned}
&\frac{d}{d \epsilon}\left[\frac{\int h_0\left(T(x)\right) d\left((1-\epsilon) P_{X, A=0}(x)+\epsilon(1-a) \delta_x(x)\right)}{(1-\epsilon)(1-r)+\epsilon(1-a)}\right]_{\epsilon=0} \\
=&\frac{1}{1-r}\bigl[(1-a) h_0(T(x))-\E[h_0(T(X)) \mid A=0]\bigr]\\
&\quad -\frac{\E[h_0(T(X)) \mid A=0]}{1-r}\bigl((1-a)-(1-r)\bigr) \\
=&\frac{1-a}{1-r}\bigl(h_0(T(x))-\E_{P_X^0}[h_0(T(X))]\bigr).
\end{aligned}
$$
Combining the two terms gives rise to 
$$
\psi_{Y,h_0,T}(p)=\frac{a}{r}\left(y-\E_{P_Y^1}[Y]\right)-\frac{1-a}{1-r}\left(h_0\left(T(x)\right)-\E_{P_X^0}[h_0(T(X)]\right).
$$
In addition, we can easily verify that $\psi_{Y,h_0,T}(p)$ has $0$ mean:
$$\E[\psi_{Y,h_0,T}(p)] 
= \E[Y-\E_{P_Y^1}[Y]|A=1] - \E[h_0\left(T(x)\right)-\E_{P_X^0}[h_0(T(X)]|A=0]=0.
$$
From the influence function we derive the asymptotic variance
$$
\begin{aligned}
V_{\mathrm{or}} &= \operatorname{Var}[\psi_{Y,h_0,T}(p)]\\
&= \E[\psi_{Y,h_0,T}(p)^2]\\
&= \frac{1}{r^2} \E\left[A\left(Y-\E_{P_Y^1}[Y]\right)^2\right]+\frac{1}{(1-r)^2} \E\left[(1-A)\left(h_0\left(T(X)\right)-\E_{P_X^0}[h_0(T(X)]\right)^2\right]\\
&= \frac{1}{r} \operatorname{Var}(Y \mid A=1)+\frac{1}{1-r} \operatorname{Var}\left(h_0\left(T(X)\right) \mid A=0\right)
\end{aligned}
$$
where the cross term $A(1-A)$ cancels out.

\textbf{Delta-method expansion.}
Several proofs below expand H\'ajek ratios of the form
$$
P_n^{(k)}[f_n]\big/P_n^{(k)}[g],
$$
where $g$ is an arm indicator and $f_n$ is a fixed function conditional on $\Omega^{(-k)}$. We state the expansion once for both choices of $g$.
\smallskip
\begin{lemma}[delta-method expansion]
\label{lem:delta-method}
Let $f_n$ be measurable with $\E[f_n^2 \mid \Omega^{(-k)}] < \infty$,
and let $g \in \{A, 1-A\}$ with $\gamma := \E[g] > 0$. Define
$$
m_n := \E[f_n\mid\Omega^{(-k)}],\qquad
a_n := (P_n^{(k)} - P)[f_n],\qquad
b_n := P_n^{(k)}[g] - \gamma.
$$
Then
$$
\frac{P_n^{(k)}[f_n]}{P_n^{(k)}[g]}
= \underbrace{\frac{m_n}{\gamma}}_{\text{(I) zeroth order}}
+ \underbrace{\frac{a_n}{\gamma}}_{\text{(II) emp.\ process}}
- \underbrace{\frac{m_n b_n}{\gamma^2}}_{\text{(III) denom.\ corr.}}
+ R_n,
$$
with remainder satisfying
$$
|R_n| \le \frac{|a_n||b_n|}{\gamma^2}
            + \frac{(|m_n|+|a_n|)b_n^2}{\gamma^3}\cdot \frac{1}{|1+b_n/\gamma|}.
$$
\end{lemma}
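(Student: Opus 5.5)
This is a purely algebraic identity, so I plan a direct computation with no probabilistic argument. Since $P_n^{(k)}[f_n] = m_n + a_n$ and $P_n^{(k)}[g] = \gamma + b_n$, the ratio is
\[
\frac{m_n + a_n}{\gamma + b_n} = \frac{m_n + a_n}{\gamma}\cdot\frac{1}{1+b_n/\gamma},
\]
valid whenever $P_n^{(k)}[g]\neq 0$, i.e.\ $1+b_n/\gamma\neq 0$. Implicitly the lemma is stated on this event; its complement has probability $(1-\gamma)^{n/K}$, or $\gamma^{n/K}$ for the other arm indicator, which is negligible.

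\textbf{Step 1.} Expand $1/(1+u)$ with $u = b_n/\gamma$ via the exact identity
\[
\frac{1}{1+u} = 1 - u + \frac{u^2}{1+u}.
\]
This gives
\[
\frac{m_n + a_n}{\gamma}\Bigl(1 - \frac{b_n}{\gamma} + \frac{b_n^2}{\gamma^2(1+b_n/\gamma)}\Bigr).
\]

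\textbf{Step 2.} Multiply out and collect terms. The terms $m_n/\gamma$, $a_n/\gamma$ and $-m_n b_n/\gamma^2$ are exactly (I), (II) and (III). What remains defines
\[
R_n = -\frac{a_n b_n}{\gamma^2} + \frac{(m_n + a_n)\, b_n^2}{\gamma^3\,(1+b_n/\gamma)}.
\]

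\textbf{Step 3.} Apply the triangle inequality to $R_n$, using $|m_n + a_n|\le |m_n| + |a_n|$. This yields the stated bound
\[
|R_n| \le \frac{|a_n||b_n|}{\gamma^2} + \frac{(|m_n|+|a_n|)\,b_n^2}{\gamma^3}\cdot\frac{1}{|1+b_n/\gamma|}.
\]

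The hypothesis $\E[f_n^2\mid\Omega^{(-k)}]<\infty$ is not used in the identity itself. It only ensures that $m_n$ is finite and, in later applications, that $a_n = O_p(n^{-1/2})$ conditionally. Likewise $b_n = O_p(n^{-1/2})$ by the CLT, so $R_n = O_p(n^{-1})$ once $1/|1+b_n/\gamma| = O_p(1)$.

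The only real obstacle is the well-definedness of the ratio, i.e.\ excluding empty arms. I would handle it by noting the convention that the statement holds on the event $\{P_n^{(k)}[g]>0\}$, whose probability tends to one geometrically, consistent with the paper's treatment of the empty-arm terms $r^n$ and $(1-r)^n$.
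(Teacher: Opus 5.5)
Your proof is correct and matches the paper's own argument. Both write the ratio as $\frac{m_n+a_n}{\gamma}\cdot\frac{1}{1+b_n/\gamma}$, expand with the exact identity $\frac{1}{1+u}=1-u+\frac{u^2}{1+u}$, collect terms, and bound the remainder by the triangle inequality. One small slip in your well-definedness remark: with $\gamma=\E[g]$, the empty-arm probability is $(1-\gamma)^{n/K}$ for both choices $g\in\{A,1-A\}$, so writing $\gamma^{n/K}$ for the other arm mislabels it.
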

\begin{proof} Algebraic:
$\frac{m_n + a_n}{\gamma+b_n}
= \frac{m_n+a_n}{\gamma}\cdot\frac{1}{1+b_n/\gamma}$,
expand the geometric factor as $1 - b_n/\gamma + b_n^2/(\gamma(\gamma+b_n))$
and collect.
\end{proof}
\smallskip
We note some standard inputs used repeatedly below: $\|\hat T^{(-k)} - T\|_{L_2(P_X^0)} = O_p(n^{-1/2})$, by condition (A1), asymptotic linearity at rate $n^{-1/2}$. This immediately gives $\|\hat T^{(-k)} - T\|_{L_2(P_X^0)}^2 = O_p(n^{-1})$.

\begin{lemma}[Fold aggregation]
\label{lem:fold-aggregation}
With equal fold sizes $n^{(k)} = n/K$ and $|\Omega^{(-k)}| = (K-1)n/K$, for
any function $\psi$ of the observation,
$$
\frac{1}{K}\sum_{k=1}^{K} P_n^{(k)}[\psi]
= \frac{1}{n}\sum_{i=1}^n \psi(p_i),
\qquad
\frac{1}{K}\sum_{k=1}^{K}
\frac{1}{|\Omega^{(-k)}|}\sum_{j\in\Omega^{(-k)}}\psi(p_j)
= \frac{1}{n}\sum_{j=1}^{n}\psi(p_j).
$$
\end{lemma}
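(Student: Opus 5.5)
The plan is to verify both identities by exact re-indexing of finite sums, using only that the folds $\Omega^{(1)},\dots,\Omega^{(K)}$ partition $\Omega$ and that the fold sizes are equal. No probabilistic argument is needed, since both statements are deterministic.

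\textbf{First identity.} By definition $P_n^{(k)}[\psi] = (n/K)^{-1}\sum_{i\in\Omega^{(k)}}\psi(p_i)$. Therefore
$$
\frac1K\sum_{k=1}^K P_n^{(k)}[\psi] = \frac{1}{K}\cdot\frac{K}{n}\sum_{k=1}^K\sum_{i\in\Omega^{(k)}}\psi(p_i).
$$
Because the folds are disjoint and their union is $\Omega$, the double sum is exactly $\sum_{i=1}^n\psi(p_i)$. This gives the first identity.

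\textbf{Second identity.} Use $|\Omega^{(-k)}| = (K-1)n/K$, so that
$$
\frac1K\sum_k \frac{1}{|\Omega^{(-k)}|}\sum_{j\in\Omega^{(-k)}}\psi(p_j) = \frac{1}{(K-1)n}\sum_{k=1}^K\sum_{j\in\Omega^{(-k)}}\psi(p_j).
$$
Now swap the order of summation. A given index $j$ lies in the fold $\Omega^{(k_j)}$ and in no other. Hence $j\in\Omega^{(-k)}$ if and only if $k\neq k_j$, which happens for exactly $K-1$ values of $k$. The double sum therefore equals $(K-1)\sum_{j=1}^n\psi(p_j)$, and dividing by $(K-1)n$ gives the second identity.

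\textbf{Hypotheses used.} The only step needing care is the counting argument, which requires $K\ge 2$ so that $K-1\neq 0$; this matches the standing assumption of \Cref{thm:balance-error-regimes}. The argument also needs $n$ divisible by $K$ so that the fold sizes are exactly $n/K$, as stipulated in \Cref{subsec:crossfit}. With unequal fold sizes the identities would hold only up to $O(1/n)$ weight discrepancies, which is why the equal-size convention is invoked.
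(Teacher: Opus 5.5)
Your proof is correct and follows the paper's argument. The first identity holds because the equal-size folds partition $\Omega$. The second holds because each observation lies in exactly $K-1$ of the sets $\Omega^{(-k)}$, so its total weight is $(K-1)/(K\,|\Omega^{(-k)}|) = 1/n$.
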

\begin{proof}
The first identity is the definition of the average over equal folds. For
the second, each observation belongs to exactly $K-1$ of the sets
$\Omega^{(-k)}$, so its total coefficient equals
$(K-1)/(K\,|\Omega^{(-k)}|)$, which is $1/n$.
\end{proof}

\subsection{Proofs for Estimated-Map Bias and Variance}
\label{app:estimated-map-variance}

Throughout this appendix we extend the population functional notation to joint arguments. For a measurable $f$ of the pair $(X,Y)$, write $P^a f := \E[f(X,Y)\mid A=a]$. Its arm empirical counterpart is $P_n^{a} f := n_a^{-1}\sum_{i\in\Omega_a} f(X_i,Y_i)$. Its arm-fold counterpart is $P_n^{a,(k)} f := (n_a/K)^{-1}\sum_{i\in\Omega_a^{(k)}} f(X_i,Y_i)$, assuming equal arm-fold sizes.

\begin{lemma}[Consistency bridge]
\label{lem:consistency-bridge}
Conditions 1--3 of \Cref{thm:att_consistency} imply that
$$\max_k\|g_k - h_0\circ T\|_{L_2(P_X^0)} = o_p(1).$$
\end{lemma}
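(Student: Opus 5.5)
The plan is a triangle inequality in $L_2(P_X^0)$ that separates the map error from the regression error:
$$
\|g_k - h_0\circ T\|_{L_2(P_X^0)}
\le \bigl\|\hat h_0^{(-k)}\circ\hat T^{(-k)} - \hat h_0^{(-k)}\circ T\bigr\|_{L_2(P_X^0)}
+ \bigl\|\hat h_0^{(-k)}\circ T - h_0\circ T\bigr\|_{L_2(P_X^0)}.
$$
Since $K$ is fixed, a maximum over finitely many $o_p(1)$ sequences is again $o_p(1)$. It is therefore enough to show that each term is $o_p(1)$ for each fixed $k$.

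\emph{First term.} Condition 3 makes $\hat h_0^{(-k)}$ Lipschitz with constant $L^{(-k)}$. This gives the pointwise bound $|\hat h_0^{(-k)}(\hat T^{(-k)}(x)) - \hat h_0^{(-k)}(T(x))| \le L^{(-k)}\|\hat T^{(-k)}(x)-T(x)\|$. Squaring and integrating against $P_X^0$ yields
$$
\text{first term} \le L^{(-k)}\,\|\hat T^{(-k)}-T\|_{L_2(P_X^0)}.
$$
By condition 1(b) the norm on the right is $o_p(1)$, and $L^{(-k)} = O_p(1)$. Hence the product is $o_p(1)$ by Slutsky.

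\emph{Second term.} This step uses the change of variables $T_\sharp P_X^0 = P_X^1$. For any measurable $f$, $\int f(T(x))^2\,dP_X^0(x) = \int f(y)^2\,dP_X^1(y)$, so
$$
\|\hat h_0^{(-k)}\circ T - h_0\circ T\|_{L_2(P_X^0)} = \|\hat h_0^{(-k)} - h_0\|_{L_2(P_X^1)}.
$$
This is $o_p(1)$ by condition 1(a). Two measurability points need to be recorded. First, $\hat h_0^{(-k)}$ is a fixed function given $\Omega^{(-k)}$, so the identity holds conditionally on the training folds. Second, $T$ is defined $P_X^0$-a.e., which is all that the $L_2(P_X^0)$ norm needs.

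\emph{Where care is needed.} I do not expect a genuine obstacle; the step most in need of care is making the $L_2$ norms well defined. Condition 2(a) and the Lipschitz property together ensure $h_0\circ T \in L_2(P_X^0)$ and $g_k \in L_2(P_X^0)$. In particular, $\|h_0\circ T\|_{L_2(P_X^0)} = \|h_0\|_{L_2(P_X^1)} < \infty$, so the target $g_\star = h_0\circ T$ lies in $L_2(P_X^0)$, as required by hypothesis 1 of \Cref{thm:balance-error-regimes}. The expectation bounds in condition 2(b) are not needed for this in-probability statement; they only ensure finiteness.

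Combining the two bounds and taking the maximum over $k$ gives the claim.
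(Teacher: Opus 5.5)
Your proposal is correct and matches the paper's proof: the same triangle inequality in $L_2(P_X^0)$, the Lipschitz bound $L^{(-k)}\|\hat T^{(-k)}-T\|_{L_2(P_X^0)}$ for the map term, the pushforward identity that turns the regression term into $\|\hat h_0^{(-k)}-h_0\|_{L_2(P_X^1)}$, and the finite maximum over folds. Your remark that condition 2(b) is not needed also agrees with the paper, whose proof does not use it.
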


\begin{proof}
By the Lipschitz condition 3 of \Cref{thm:att_consistency} applied pointwise and the pushforward
identity $T_{\sharp}P_X^0 = P_X^1$,
$$
\begin{aligned}
\|g_k - h_0\circ T\|_{L_2(P_X^0)}
&\le \|\hat h_0^{(-k)}\circ\hat T^{(-k)} - \hat h_0^{(-k)}\circ T\|_{L_2(P_X^0)}
 + \|(\hat h_0^{(-k)} - h_0)\circ T\|_{L_2(P_X^0)}\\
&\le L^{(-k)}\|\hat T^{(-k)} - T\|_{L_2(P_X^0)}
 + \|\hat h_0^{(-k)} - h_0\|_{L_2(P_X^1)},
\end{aligned}
$$
and both terms are $o_p(1)$ by condition 1 together with
$L^{(-k)} = O_p(1)$. The maximum over finitely many folds is $o_p(1)$.
\end{proof}

\begin{lemma}[Cross term]
\label{lem:cross-term}
Under condition (A1) and condition (A4), with $L_2(P_X^1)$ replaced by
$L_2(Q^{\dagger})$,
$$
\bigl|(Q_k - Q^{\dagger})\bigl(\hat h_0^{(-k)} - h_0\bigr)\bigr|
= \bigl|\E_{P_X^0}\bigl[(\hat h_0^{(-k)} - h_0)(\hat T^{(-k)}(X))
 - (\hat h_0^{(-k)} - h_0)(T^{\dagger}(X))\bigr]\bigr|
= o_p(n^{-1/2}).
$$
\end{lemma}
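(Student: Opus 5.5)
Write $\Delta_k := \hat h_0^{(-k)} - h_0$, so that by the definition of the pushforward $Q_k = \hat T^{(-k)}_{\sharp}P_X^0$, the quantity to bound is
$$
D_k := \E_{P_X^0}\bigl[\Delta_k(\hat T^{(-k)}(X)) - \Delta_k(T^{\dagger}(X))\bigr].
$$
This already gives the first equality in the statement. The plan is a first-order Taylor expansion of $\Delta_k$ along the segment from $T^{\dagger}(X)$ to $\hat T^{(-k)}(X)$. Both $h_0$ and $\hat h_0^{(-k)}$ are $C^{1,1}$ by (A4), so $\Delta_k$ is $C^{1,1}$ with gradient Lipschitz constant at most $M^{(-k)} + L^{\ast}_{\nabla h_0} = O_p(1)$. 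Write $\delta_k(x) := \hat T^{(-k)}(x) - T^{\dagger}(x)$. Pointwise we then have
$$
\Delta_k(\hat T^{(-k)}(x)) - \Delta_k(T^{\dagger}(x))
= \nabla\Delta_k(T^{\dagger}(x))^{\top}\delta_k(x) + \rho_k(x),
\qquad
|\rho_k(x)| \le \tfrac12\bigl(M^{(-k)} + L^{\ast}_{\nabla h_0}\bigr)\|\delta_k(x)\|^2 .
$$

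\textbf{Remainder term.} Integrating the bound on $\rho_k$ against $P_X^0$ gives
$$
|P_X^0\rho_k| \le O_p(1)\,\|\hat T^{(-k)} - T^{\dagger}\|^2_{L_2(P_X^0)}.
$$
By (A1) this norm is $O_p(n^{-1/2})$. Conditionally on the training folds, the linear term in (A1) has second moment of order $n^{-1}\int\E\|\phi_T\|^2\,dP_X^0 < \infty$, so Markov's inequality gives the $O_p(n^{-1/2})$ rate. This is the standard input noted after \Cref{lem:delta-method}. Hence $P_X^0\rho_k = O_p(n^{-1}) = o_p(n^{-1/2})$.

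\textbf{Linear term.} By Cauchy--Schwarz,
$$
\bigl|P_X^0\bigl[\nabla\Delta_k(T^{\dagger})^{\top}\delta_k\bigr]\bigr|
\le \|\nabla\Delta_k \circ T^{\dagger}\|_{L_2(P_X^0)}\,\|\delta_k\|_{L_2(P_X^0)}
= \|\nabla\hat h_0^{(-k)} - \nabla h_0\|_{L_2(Q^{\dagger})}\cdot O_p(n^{-1/2}),
$$
using $T^{\dagger}_{\sharp}P_X^0 = Q^{\dagger}$. The gradient-consistency clause of (A4), read with $L_2(P_X^1)$ replaced by $L_2(Q^{\dagger})$ as the lemma stipulates, makes the first factor $o_p(1)$. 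The product is therefore $o_p(n^{-1/2})$, which combined with the remainder bound proves the lemma for each $k$. Since $K$ is fixed, the result also holds uniformly over the folds.

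\textbf{Main obstacle.} The delicate point is that $\nabla\Delta_k$ is random, and $\delta_k$ is built from the same training sample $\Omega^{(-k)}$, so we cannot exploit the mean-zero kernel in (A1) to get cancellation. The plan sidesteps this by using only the crude Cauchy--Schwarz product. This is why the lemma needs the $o_p(1)$ gradient rate rather than mere boundedness. It also requires care that the $O_p(1)$ Lipschitz constants $M^{(-k)}$ and the $L_2$ bounds combine by Slutsky. If the integrated form in (A1) is not enough to control the squared norm of the linear part, one can fall back on conditioning on $\Omega^{(-k)}$ and applying Fubini together with the integrated second-moment condition in (A1).
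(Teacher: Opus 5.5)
Your proposal is correct and follows essentially the same route as the paper. You expand $\hat h_0^{(-k)} - h_0$ along the segment from $T^{\dagger}(x)$ to $\hat T^{(-k)}(x)$, apply Cauchy--Schwarz with the pushforward identity $T^{\dagger}_{\sharp}P_X^0 = Q^{\dagger}$ to the linear term, and bound the remainder by $(M^{(-k)} + L^{\ast}_{\nabla h_0})\,\|\hat T^{(-k)} - T^{\dagger}\|^2_{L_2(P_X^0)} = O_p(n^{-1})$. The differences are cosmetic: you use the integral-form Taylor bound with the factor $\tfrac12$ where the paper uses the mean value theorem at an intermediate point, and the $O_p(n^{-1/2})$ rate for the linear part of (A1) comes from an unconditional second-moment computation over the training fold (Fubini plus the mean-zero kernel), not from conditioning on it.
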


\begin{proof}
Write $D(x) := (\hat h_0^{(-k)} - h_0)(\hat T^{(-k)}(x))
 - (\hat h_0^{(-k)} - h_0)(T^{\dagger}(x))$. By the mean value theorem there
is $s(x)\in[0,1]$ with
$D(x) = \nabla(\hat h_0^{(-k)} - h_0)(\xi_{s}(x))^{\top}
(\hat T^{(-k)}(x) - T^{\dagger}(x))$, where
$\xi_{s}(x) = T^{\dagger}(x) + s(x)(\hat T^{(-k)}(x) -
T^{\dagger}(x))$. By the gradient Lipschitz constants in (A4),
$$
\|\nabla(\hat h_0^{(-k)} - h_0)(\xi_{s}(x))\|
\le \|\nabla(\hat h_0^{(-k)} - h_0)(T^{\dagger}(x))\|
 + \bigl(M^{(-k)} + L^{\ast}_{\nabla h_0}\bigr)
   \|\hat T^{(-k)}(x) - T^{\dagger}(x)\|.
$$
Taking $\E_{P_X^0}$ and applying Cauchy--Schwarz to the first product, together with the pushforward identity
$$
\|\nabla(\hat h_0^{(-k)} - h_0)\circ T^{\dagger}\|_{L_2(P_X^0)}
= \|\nabla(\hat h_0^{(-k)} - h_0)\|_{L_2(Q^{\dagger})},
$$
$$
|\E_{P_X^0}[D(X)]|
\le \|\nabla(\hat h_0^{(-k)} - h_0)\|_{L_2(Q^{\dagger})}\,
\|\hat T^{(-k)} - T^{\dagger}\|_{L_2(P_X^0)}
 + \bigl(M^{(-k)} + L^{\ast}_{\nabla h_0}\bigr)
   \|\hat T^{(-k)} - T^{\dagger}\|_{L_2(P_X^0)}^2.
$$
The gradient consistency in (A4) gives $o_p(1)$ for the first factor.
Condition (A1) gives
$\|\hat T^{(-k)} - T^{\dagger}\|_{L_2(P_X^0)} = O_p(n^{-1/2})$, as noted
after the delta-method lemma,
and hence the squared norm is $O_p(n^{-1})$, with $M^{(-k)} = O_p(1)$ by
(A4). The total is $o_p(1)\,O_p(n^{-1/2}) + O_p(n^{-1}) = o_p(n^{-1/2})$.
\end{proof}

\begin{proof}[Proof of \Cref{prop:honest-split-map-variance}]
Conditional on $\calg_n$ and $\{A_i\}$, $g_n$ is a fixed function and the two
arm-wise sample means are independent, with $m_1, m_0$ nonrandom. Hence
$\E(\hat\tau_{ATT}\mid\calg_n,\{A_i\}) = \E_{P_Y^1}[Y] - P_X^0 g_n
= \tau_{ATT} - B_n$, giving \eqref{eq:conditional-bias}, and independence
gives \eqref{eq:conditional-variance}; the second form follows from
$\hat T(X)\sim Q_n$ when $X\sim P_X^0$. The law of total variance (given
$\{A_i\}$) yields \eqref{eq:unconditional-variance}, and the conditional
MSE decomposition
$\E[(\hat\tau - \tau)^2\mid\{A_i\}]
= \E[\operatorname{Var}(\hat\tau\mid\calg_n,\{A_i\})] + \E[B_n^2]$
yields \eqref{eq:unconditional-mse}. The decompositions
\eqref{eq:fold-drift} are algebraic:
$Q_n\hat h_0 - P_X^1 h_0 = (Q_n - P_X^1)h_0 + Q_n(\hat h_0 - h_0)
= (Q_n - P_X^1)\hat h_0 + P_X^1(\hat h_0 - h_0)$.
\end{proof}

\begin{proof}[Proof of \Cref{thm:balance-error-regimes}]
Fix a fold $k$ and recall
$$
\hat\tau_{ATT}^{(k)}
= \frac{P_n^{(k)}[A\,Y]}{P_n^{(k)}[A]}
 - \frac{P_n^{(k)}[(1-A)\,g_k]}{P_n^{(k)}[1-A]}.
$$
For the treated term, the expansion
$$
\frac{P_n^{(k)}[A Y]}{P_n^{(k)}[A]} - \E_{P_Y^1}[Y]
= \frac{1}{r}\,P_n^{(k)}\bigl[A\bigl(Y - \E_{P_Y^1}[Y]\bigr)\bigr]
 + o_p(n^{-1/2})
$$
follows from the delta-method expansion lemma of this appendix applied with
$f_n = AY$ and $g = A$.
For the control term, apply the delta-method expansion lemma of this appendix
with $f_n = (1-A)g_k$, conditional on $\Omega^{(-k)}$
(so $m_n = (1-r)P_X^0 g_k$):
$$
\frac{P_n^{(k)}[(1-A)g_k]}{P_n^{(k)}[1-A]}
= P_X^0 g_k
 + \frac{1}{1-r}\,(P_n^{(k)} - P)\bigl[(1-A)\bigl(g_k - P_X^0 g_k\bigr)\bigr]
 + o_p(n^{-1/2}),
$$
where the remainder control uses $\E[g_k^2\mid\Omega^{(-k)}] = O_p(1)$, which
follows from hypothesis 1 and $g_{\star}\in L_2(P_X^0)$. Next, replace $g_k$
by $g_{\star}$ inside the empirical process: conditional on $\Omega^{(-k)}$
the summands of
$(P_n^{(k)} - P)[(1-A)\{(g_k - P_X^0 g_k) - (g_{\star} - P_X^0 g_{\star})\}]$
are i.i.d.\ and centred with conditional variance at most
$(1-r)\|g_k - g_{\star}\|^2_{L_2(P_X^0)}/1 = o_p(1)$, so by the conditional
Chebyshev argument used in the proof of \Cref{thm:att_consistency}
this term is $o_p(n^{-1/2})$. Collecting terms and subtracting
$\tau_{ATT} = \E_{P_Y^1}[Y] - P_X^1 h_0$,
$$
\hat\tau_{ATT}^{(k)} - \tau_{ATT}
= P_n^{(k)}\bigl[\psi_{g_{\star}}\bigr] - B_k + o_p(n^{-1/2}),
$$
with $\psi_{g_{\star}}$ as in \eqref{eq:frozen-nuisance-if}. Averaging over
folds with \Cref{lem:fold-aggregation}
gives \eqref{eq:master-expansion}. Since $\psi_{g_{\star}}(p_i)$ are i.i.d.,
mean zero, with finite variance (hypothesis 2 and
$g_{\star}\in L_2(P_X^0)$), the i.i.d.\ CLT applies to the first term.
(R1) and (R2) follow by Slutsky; the tightness claim in (R3) from
$O_p(1) + O_p(1)$; the AL case of (R3) by absorbing
\eqref{eq:drift-al} into the i.i.d.\ sum, yielding influence function
$\psi_{g_{\star}} - \psi_B$ and \eqref{eq:total-variance}; and (R4) by
multiplying \eqref{eq:master-expansion} through by $n^{\alpha - 1/2}$.
When $g_{\star} = h_0\circ T$ with $T$ exact,
$P_X^0\{(h_0\circ T)^j\} = P_X^1\{h_0^j\}$ for $j = 1, 2$, which gives
\eqref{eq:oracle-variance} together with its invariance to the selected exact transport map.
\end{proof}

\begin{proof}[Proof of \Cref{cor:al-transport}]
Decompose
$B_{T,k} = \bigl[P_X^0\{h_0(\hat T^{(-k)}(X))\}
 - P_X^0\{h_0(T^{\dagger}(X))\}\bigr] + b^{\dagger}$.
By Taylor's theorem with the Lipschitz gradient of $h_0$ from (A4),
$$
\begin{aligned}
&\Bigl|P_X^0\{h_0(\hat T^{(-k)}(X)) - h_0(T^{\dagger}(X))\}
 - P_X^0\bigl[\nabla h_0(T^{\dagger}(X))^{\top}
   (\hat T^{(-k)}(X) - T^{\dagger}(X))\bigr]\Bigr|\\
&\qquad\le \frac{L^{\ast}_{\nabla h_0}}{2}
 \|\hat T^{(-k)} - T^{\dagger}\|^2_{L_2(P_X^0)}
= O_p(n^{-1}),
\end{aligned}
$$
using $\|\hat T^{(-k)} - T^{\dagger}\|_{L_2(P_X^0)} = O_p(n^{-1/2})$ from
(A1).
Substituting \eqref{eq:map-al} into the linear term and exchanging
integration and summation by Fubini's theorem, justified by the integrated
second moment in (A1), gives
$$
B_{T,k}
= b^{\dagger}
 + \frac{1}{|\Omega^{(-k)}|}\sum_{j\in\Omega^{(-k)}}\psi_T(X_j, A_j)
 + o_p(n^{-1/2}),
$$
where the remainder collects the Taylor error and
$|P_X^0[\nabla h_0(T^{\dagger})^{\top} r_n^{T,(k)}]|
\le L_h\|r_n^{T,(k)}\|_{L_2(P_X^0)} = o_p(n^{-1/2})$. The kernel
$\psi_T$ is mean zero by the kernel mean-zero property in (A1) and Fubini, with
$\E[\psi_T^2] \le L_h^2\int_{\calx}\E\|\phi_T(x;X,A)\|^2\,dP_X^0(x) < \infty$
by Jensen's inequality applied to the inner $P_X^0$-average
($\psi_T^2 \le L_h^2\,\E_{X'\sim P_X^0}\|\phi_T(X';x,a)\|^2$)
followed by Tonelli and the integrated second moment; the same two steps give the integrability
needed for the Fubini exchange and, applied to the leading term of
\eqref{eq:map-al}, the input
$\|\hat T^{(-k)} - T^{\dagger}\|_{L_2(P_X^0)}^2 = O_p(n^{-1})$.
Averaging over folds with \Cref{lem:fold-aggregation}
yields \eqref{eq:map-drift-al}.
\end{proof}

\begin{proof}[Proof of \Cref{thm:att_normality}]
The hypotheses of \Cref{thm:balance-error-regimes} hold with
$g_{\star} = h_0\circ T$: the $L_2$ consistency of both nuisances follows
from (A1) and (A3), and \Cref{lem:consistency-bridge} applies under
the Lipschitz condition in (A4), so hypothesis 1 holds; hypothesis 2 is part
of (A5). For $B_{T,k}$, \Cref{cor:al-transport} gives
$K^{-1}\sum_{k} B_{T,k} = b^{\dagger} + n^{-1}\sum_{i} \psi_T(X_i, A_i)
+ o_p(n^{-1/2})$, and (A2) gives $b^{\dagger} = 0$. For $B_{h,k}$, decompose
$B_{h,k} = P_X^1\bigl(\hat h_0^{(-k)} - h_0\bigr)
+ (Q_k - P_X^1)\bigl(\hat h_0^{(-k)} - h_0\bigr)$.
The first term is bounded in absolute value by
$\|\hat h_0^{(-k)} - h_0\|_{L_2(P_X^1)}$, which is $o_p(n^{-1/2})$ by
(A3). The second term is $o_p(n^{-1/2})$ by \Cref{lem:cross-term} with
$T^{\dagger} = T$ and $Q^{\dagger} = P_X^1$. Hence \eqref{eq:drift-al} holds
with $\psi_B = \psi_T$, regime (R3) of \Cref{thm:balance-error-regimes}
applies with $\psi_{g_{\star}} = \psi_{Y,h_0,T}$, and
\eqref{eq:total-variance} evaluates to \eqref{eq:combined-variance} using
$\E[\psi_{Y,h_0,T}^2] = V_{\mathrm{or}}$ from the influence function
derivation in this appendix.
\end{proof}

\begin{proof}[Proof of \Cref{cor:interpolated-ste}]
Throughout write $\eta = \eta_n$, $T_{\eta} = \eta\,\mathrm{Id} + (1-\eta)T$,
$P_X^{\eta} = (T_{\eta})_{\sharp}P_X^0$,
$Q_{k,\eta} = (\hat T^{(-k)}_{\eta})_{\sharp}P_X^0$, and
$\rho_n := \max_k\|\nabla\hat h_0^{(-k)} - \nabla h_0\|_{L_2(P_X^1)}$, which
is $o_p(1)$ by (A4).

\emph{Step 0: the expansion \eqref{eq:master-expansion} applies.} By the triangle inequality, the
Lipschitz condition of (A4), and
$\hat T^{(-k)}_{\eta} - T = \eta(\mathrm{Id} - T)
+ (1-\eta)(\hat T^{(-k)} - T)$,
$$
\|g_{k,\eta} - h_0\circ T\|_{L_2(P_X^0)}
\le L^{(-k)}\bigl[\eta\,W_2
 + (1-\eta)\|\hat T^{(-k)} - T\|_{L_2(P_X^0)}\bigr]
 + \|\hat h_0^{(-k)} - h_0\|_{L_2(P_X^1)} = o_p(1),
$$
so \Cref{thm:balance-error-regimes} yields
\eqref{eq:master-expansion} with $g_{\star} = h_0\circ T$ and bias term
$\overline B_{n,\eta} = K^{-1}\sum_k B_{k,\eta}$.

\emph{Step 1: deterministic bias.} By Taylor's theorem with the Lipschitz
gradient of $h_0$ and $T_{\eta}(x) - T(x) = \eta(x - T(x))$,
$$
b_{\eta} = P_X^0\bigl[h_0(T_{\eta}(X)) - h_0(T(X))\bigr]
= \eta\,c_1 + R_{\eta},
$$
with
$$
|R_{\eta}| \le \frac{L^{\ast}_{\nabla h_0}}{2}\,\eta^2\,
\E_{P_X^0}\|X - T(X)\|^2
= \frac{L^{\ast}_{\nabla h_0}}{2}\,\eta^2 W_2^2.
$$

\emph{Step 2: map noise.} With
$\hat T^{(-k)}_{\eta} - T_{\eta} = (1-\eta)(\hat T^{(-k)} - T)$, the Taylor
argument of the proof of \Cref{cor:al-transport}, expanded around
$T_{\eta}$, gives
$$
(Q_{k,\eta} - P_X^{\eta})h_0
= (1-\eta)\,P_X^0\bigl[\nabla h_0(T_{\eta}(X))^{\top}
(\hat T^{(-k)} - T)(X)\bigr]
+ O_p\bigl((1-\eta)^2\|\hat T^{(-k)} - T\|_{L_2(P_X^0)}^2\bigr).
$$
Replacing $\nabla h_0(T_{\eta}(X))$ by $\nabla h_0(T(X))$ costs at most
$L^{\ast}_{\nabla h_0}\eta\,W_2\cdot
\|\hat T^{(-k)} - T\|_{L_2(P_X^0)}$, which is $O_p(\eta\, n^{-1/2})$ by
Cauchy--Schwarz, and replacing the factor $(1-\eta)$ by $1$ costs
$\eta\cdot O_p(n^{-1/2})$ after substituting the linear expansion. Hence,
substituting \eqref{eq:map-al} and aggregating folds with
\Cref{lem:fold-aggregation} as in the proof of \Cref{cor:al-transport},
$$
\frac{1}{K}\sum_k (Q_{k,\eta} - P_X^{\eta})h_0
= \frac{1}{n}\sum_{i=1}^n\psi_T(X_i, A_i)
+ O_p(\eta\,n^{-1/2}) + o_p(n^{-1/2}).
$$

\emph{Step 3: the regression term.} Decompose
$Q_{k,\eta}(\hat h_0^{(-k)} - h_0)
= P_X^1(\hat h_0^{(-k)} - h_0)
+ (P_X^{\eta} - P_X^1)(\hat h_0^{(-k)} - h_0)
+ (Q_{k,\eta} - P_X^{\eta})(\hat h_0^{(-k)} - h_0)$.
The first term is bounded in absolute value by
$\|\hat h_0^{(-k)} - h_0\|_{L_2(P_X^1)} = o_p(n^{-1/2})$ by (A3).
For the other two, the gradient bound of \Cref{lem:cross-term}
applied along the segments from $T(x)$ to $T_{\eta}(x)$ and from
$T_{\eta}(x)$ to $\hat T^{(-k)}_{\eta}(x)$ gives, for any point $z$ on
either segment,
$$
\begin{aligned}
\|\nabla(\hat h_0^{(-k)} - h_0)(z)\|
&\le \|\nabla(\hat h_0^{(-k)} - h_0)(T(x))\|\\
&\quad + \bigl(M^{(-k)} + L^{\ast}_{\nabla h_0}\bigr)
\bigl[\eta\|x - T(x)\| + \|\hat T^{(-k)}(x) - T(x)\|\bigr],
\end{aligned}
$$
and therefore, by the mean value theorem and Cauchy--Schwarz,
\begin{align*}
\bigl|(P_X^{\eta} - P_X^1)(\hat h_0^{(-k)} - h_0)\bigr|
&\le \bigl[\rho_n + O_p(\eta)\bigr]\cdot \eta\,W_2
= O_p(\eta\,\rho_n) + O_p(\eta^2),\\
\bigl|(Q_{k,\eta} - P_X^{\eta})(\hat h_0^{(-k)} - h_0)\bigr|
&\le (1-\eta)\bigl[\rho_n + O_p(\eta)
+ O_p(n^{-1/2})\bigr]\cdot
\|\hat T^{(-k)} - T\|_{L_2(P_X^0)}\\
&= o_p(n^{-1/2}).
\end{align*}

\emph{Step 4: collect.} Combining Steps 1 to 3,
\begin{equation}
\overline B_{n,\eta}
= \eta\,c_1
+ \frac{1}{n}\sum_{i=1}^n\psi_T(X_i,A_i)
+ O_p(\eta^2) + O_p(\eta\,\rho_n) + O_p(\eta\,n^{-1/2}) + o_p(n^{-1/2}).
\label{eq:interp-drift-expansion}
\end{equation}
\emph{Part (i).} If $\eta = o(n^{-1/2})$, every $\eta$ term above is
$o_p(n^{-1/2})$, using $\rho_n = o_p(1)$, so
\eqref{eq:drift-al} holds with $\psi_B = \psi_T$ and regime (R3)
of \Cref{thm:balance-error-regimes} gives the conclusion with $V$ as in
\eqref{eq:combined-variance}.
\emph{Part (ii).} With $\sqrt n\,\eta \to \kappa$, the term
$\sqrt n\,\eta\,c_1 \to \kappa c_1$, while the remainders satisfy
$\sqrt n\,[O_p(\eta^2) + O_p(\eta\rho_n) + O_p(\eta n^{-1/2})] = o_p(1)$;
the expansion \eqref{eq:master-expansion} with the asymptotically linear
part of $\overline B_{n,\eta}$ absorbed as in regime (R3) then gives the $N(-\kappa c_1, V)$ limit. For
the consistency of $\hat c_1$, write $\hat c_1 - c_1$ as the empirical
fluctuation of the fixed function
$\nabla h_0(T)^{\top}(\mathrm{Id} - T)$ plus nuisance replacement errors,
and bound the latter as in Step 3:
$$
\hat c_1 - c_1
= O_p(n^{-1/2})
+ O_p\bigl(\rho_n\bigr)
+ O_p\bigl(\textstyle\max_k\|\hat T^{(-k)} - T\|_{L_2(P_X^0)}\bigr)
= o_p(1),
$$
where the empirical part uses
$\E_{P_X^0}\|X - T(X)\|^2 = W_2^2 < \infty$ and the conditional Chebyshev
argument. Slutsky's theorem then yields the recentered statement.
\emph{Part (iii).} The debiased estimator has bias term
$\overline B_{n,\eta} - \eta\,\hat c_1$, and by
\eqref{eq:interp-drift-expansion} and the $\hat c_1$ expansion,
$$
\sqrt n\bigl(\overline B_{n,\eta} - \eta\hat c_1\bigr)
= \frac{1}{\sqrt n}\sum_i\psi_T(X_i,A_i)
+ \sqrt n\,O_p(\eta^2)
+ \sqrt n\,O_p(\eta\,\rho_n)
+ \sqrt n\,O_p(\eta\, n^{-1/2}) + o_p(1).
$$
The three explicit remainders are $o_p(1)$ by, respectively,
$\eta = o(n^{-1/4})$, the product condition
\eqref{eq:interp-product-condition}, and $\eta \to 0$; regime (R3) applies
as in part (i).
\end{proof}

\begin{proof}[Proof of \Cref{cor:att_consistency}]
    We adapt the proof from Theorem \ref{thm:att_consistency}, as the estimator changed from using $\hat{T}$ to $\hat{T}_{\eta}=\eta\,\mathrm{Id}+ (1-\eta)\hat{T}$ with $\eta\in[0,1]$.
    $$
    \begin{aligned}
        |\hat{\tau}_{ATT}^{\eta,(k)} - \tau_{ATT}| &\leq |E_{P_{Y,n}^{1,(k)}}[Y] - E_{P_{Y}^{1}}[Y]| +\Bigl|\Bigl( P_{X,n}^{0,(k)}\Bigl(\hat{h}_0^{(-k)}\circ \hat{T}_{\eta}^{(-k)}\Bigr) - P_X^0\bigl(h_0\circ T\bigr)\Bigr)\Bigr|\\
        &= (I) + (II)
    \end{aligned}
    $$
    where
    $$
    \begin{aligned}
        (II) &= \left|
        P_{X,n}^{0,(k)}\bigl(\hat{h}_0^{(-k)}\circ \hat{T}_{\eta}^{(-k)}\bigr) - P_{X,n}^{0,(k)}\bigl(\hat{h}_0^{(-k)}\circ T\bigr)
        \right|\\ 
        &+ 
        \bigl|P_{X,n}^{0,(k)}\bigl(\hat{h}_0^{(-k)}\circ T\bigr) - P_{X,n}^{0,(k)}\bigl({h}_0\circ T\bigr)\bigr|
        +
        \bigl|P_{X,n}^{0,(k)}\bigl({h}_0\circ T\bigr) - P_X^0\bigl({h}_0\circ T\bigr)\bigr|\\
        &:= (II.a) + (II.b) + (II.c)
    \end{aligned}
    $$
    As $(II.b), (II.c)$ are the same as in Thm \ref{thm:att_consistency}, we only focus on $(II.a)$. Under Lipchitz condition and Cauchy-Schwartz
    $$
    \begin{aligned}
        (II.a) &\leq 
        \frac{L^{(-k)}}{|\Omega_0^{(k)}|}\sum_{i\in \Omega_0^k}\left\|
         \hat{T}_{\eta}^{(-k)}(X_i) -  T(X_i)
        \right\|\\
        &= L^{(-k)}\Bigr(\frac{1}{|\Omega_0^{(k)}|}\sum_{i\in \Omega_0^k}\left\|
         \hat{T}_{\eta}^{(-k)}(X_i) -  T(X_i)
        \right\|^2\Bigl)^{1/2}
    \end{aligned}
    $$
    then with independent samples of $\Omega_0^{(k)}$ and $\Omega_0^{(-k)}$, apply LLN to $\Omega_0^{(k)}$ we obtain that $$\frac{1}{|\Omega_0^{(k)}|}\sum_{i\in \Omega_0^k}\left\|
         \hat{T}_{\eta}^{(-k)}(X_i) -  T(X_i)
        \right\|^2 = \|\hat{T}_{\eta}^{(-k)} -  T\|_{L^2(P_X^0)}^2 + o_p(1).$$ 
        Consequently, under Continuous mapping theorem,
        $$
        \left(\frac{1}{|\Omega_0^{(k)}|}\sum_{i\in \Omega_0^k}\left\|
         \hat{T}_{\eta}^{(-k)}(X_i) -  T(X_i)
        \right\|^2\right)^{1/2} = \|\hat{T}_{\eta}^{(-k)} -  T\|_{L^2(P_X^0)} + o_p(1)
        $$
    Note that $W_2^2(P_X^0, P_X^1) = \int \|x-T(x)\|^2 dP_X^0<\infty$ under Brenier's theorem.
    With triangle inequality,
    $$
    \begin{aligned}
        \|\hat{T}_{\eta}^{(-k)} -  T\|_{L^2(P_X^0)} 
    &\leq \eta\|I - T\|_{L^2(P_X^0)} + (1-\eta)\|\hat{T}^{(-k)} - T\|_{L^2(P_X^0)}\\
    &= \eta W_2(P_X^0, P_X^1) + (1-\eta)\cdot o_p(1)
    \end{aligned}
    $$
    This reveals the trade-off for the interpolated estimator. When $W_2(P_X^0,P_X^1)=0$, the two covariate distributions coincide and $\eta$ may take any value in the unit interval. Otherwise the interpolation leaves a persistent bias unless the shrinkage toward the identity vanishes, that is $\eta:=\eta_n\to 0$ as required by condition 4. This matches the population bias bound $L_h\,\eta\, W_2(P_X^0, P_X^1)$ established in the interpolation section. Under condition 4 the estimator is therefore consistent.
    
\end{proof}

The following corollary characterizes the map influence term when the map is the plug-in of an exact-map functional. \Cref{sec:role-of-map} summarizes its consequences.

\begin{corollary}[Cancellation for an exact-map functional]
\label{cor:exact-map-cancellation}
Hold $h_0$ fixed and known, and let $\theta := P_X^1 h_0$. Let $\calm$ be a
local statistical model of pairs of covariate laws containing
$(P_X^0, P_X^1)$. Suppose:
\begin{enumerate}
\item[(i)] $\hat T^{(-k)}$ is the plug-in of a map-valued functional
$\calt(Q_0, Q_1)$ that is exact on the model,
$\calt(Q_0, Q_1)_{\sharp}Q_0 = Q_1$ for every pair $(Q_0, Q_1) \in \calm$,
with $\calt(P_X^0, P_X^1) = T$;
\item[(ii)] $\calt$ is pathwise differentiable along parametric submodels of
$\calm$, and the kernel $\phi_T$ of condition (A1) represents its pathwise
derivative;
\item[(iii)] either $\calm$ is locally nonparametric, or $\psi_T$ lies in
the tangent space of $\calm$ at $(P_X^0, P_X^1)$.
\end{enumerate}
Then the pushforward constraint determines the map influence term,
\begin{equation}
\psi_T(x, a)
= \frac{a}{r}\bigl(h_0(x) - \theta\bigr)
 - \frac{1-a}{1-r}\bigl(h_0(T(x)) - \theta\bigr),
\label{eq:exact-map-compatibility}
\end{equation}
the influence function collapses to
\begin{equation}
\psi_{Y,h_0,T}(p) - \psi_T(x,a)
= \frac{a}{r}\bigl(y - h_0(x) - \tau_{ATT}\bigr),
\label{eq:oracle-exact-map-if}
\end{equation}
and the asymptotic variance satisfies the exact identity
\begin{equation}
\begin{aligned}
V^{\mathrm{or}}_{\mathrm{exact}}
&= \frac{1}{r}\operatorname{Var}\bigl(Y - h_0(X)\,\big|\,A=1\bigr)\\
&= V_{\mathrm{or}}
 - \frac{\operatorname{Var}\{h_0(X)\mid A=1\}}{r(1-r)}
 - \frac{2}{r}\operatorname{Cov}\bigl(h_1(X) - h_0(X),\, h_0(X)\,\big|\,A=1\bigr).
\end{aligned}
\label{eq:oracle-exact-map-variance}
\end{equation}
All three displays are independent of which exact-map functional is used.
\end{corollary}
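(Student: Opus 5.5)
The plan is to differentiate the pushforward constraint along parametric submodels and read off $\psi_T$ as a difference of two known pathwise derivatives. Fix a regular parametric submodel $t\mapsto(Q_{0,t},Q_{1,t})\in\calm$ through $(P_X^0,P_X^1)$ at $t=0$, with score $s(x,a)$ on the observation $(X,A)$; the marginal of $A$ is held at $r$. Write $T_t:=\calt(Q_{0,t},Q_{1,t})$. By exactness (i), $(T_t)_{\sharp}Q_{0,t}=Q_{1,t}$ for every $t$, and therefore
$$
Q_{0,t}(h_0\circ T_t) = Q_{1,t}h_0 \qquad \text{for all } t .
$$

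\textbf{Step 1: differentiate both sides.} The right side $t\mapsto Q_{1,t}h_0$ is a conditional mean on the treated arm. Its pathwise derivative is represented by $\frac{a}{r}(h_0(x)-\theta)$, by the same Gateaux computation used for the influence function in \Cref{app:if-lemmas}. On the left, I apply a product rule:
$$
\tfrac{d}{dt}Q_{0,t}(h_0\circ T_t)\big|_{0}
= \tfrac{d}{dt}P_X^0(h_0\circ T_t)\big|_{0}
+ \tfrac{d}{dt}Q_{0,t}(h_0\circ T)\big|_{0}.
$$
The second piece is represented by $\frac{1-a}{1-r}(h_0(T(x))-\theta)$, using $P_X^0(h_0\circ T)=\theta$. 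For the first piece I use (ii) together with the chain rule through the $C^{1,1}$ function $h_0$ from (A4). This gives $\E_{P_X^0}[\nabla h_0(T(X'))^{\top}\dot T(X')]$, where $\dot T$ denotes the derivative of $T_t$ at $t=0$. Since $\phi_T$ represents $\dot T$, Fubini gives $\E[\psi_T\,s]$, with the exchange justified by the integrated second moment in (A1) exactly as in the proof of \Cref{cor:al-transport}. Hence for every score $s$ in the tangent space,
$$
\E\bigl[\psi_T\, s\bigr]
= \E\Bigl[\Bigl\{\tfrac{a}{r}(h_0(x)-\theta)-\tfrac{1-a}{1-r}(h_0(T(x))-\theta)\Bigr\}s\Bigr].
$$

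\textbf{Step 2: identify the representer.} Both $\psi_T$ and the bracketed function $\chi$ are mean zero and square integrable. If $\calm$ is locally nonparametric, the tangent space is all of $L_2^0$. Taking $s=\psi_T-\chi$ then forces $\psi_T=\chi$ a.s., which is \eqref{eq:exact-map-compatibility}. Otherwise (iii) puts $\psi_T$ in the tangent space, so $\psi_T$ equals the projection of $\chi$ onto the tangent space. I expect to argue that $\chi$ itself lies in that space, since it is the efficient influence function of the smooth functional $Q_{1}h_0$ restricted to $\calm$. This uniqueness step, and the rigorous justification of the product rule (differentiating under a map that also depends on $t$), is where I expect the main obstacle. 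I would handle it by dominated convergence, using the gradient Lipschitz bound of $h_0$ and the $L_2(P_X^0)$ differentiability of $t\mapsto T_t$.

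\textbf{Step 3: algebra.} Subtract \eqref{eq:exact-map-compatibility} from the influence function $\psi_{Y,h_0,T}$ derived in \Cref{app:if-lemmas}. The control terms cancel exactly, and the treated terms combine to $\frac{a}{r}(y-\E_{P_Y^1}[Y]-h_0(x)+\theta)=\frac{a}{r}(y-h_0(x)-\tau_{ATT})$, which is \eqref{eq:oracle-exact-map-if}. Squaring and taking expectations gives $\frac1r\E[(Y-h_0(X)-\tau_{ATT})^2\mid A=1]$. This equals $\frac1r\operatorname{Var}(Y-h_0(X)\mid A=1)$ because $\E[Y-h_0(X)\mid A=1]=\tau_{ATT}$.

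For the second line of \eqref{eq:oracle-exact-map-variance}, expand $\operatorname{Var}(Y-h_0\mid A=1)=\operatorname{Var}(Y\mid A=1)+\operatorname{Var}(h_0\mid A=1)-2\operatorname{Cov}(Y,h_0\mid A=1)$. Conditioning on $X$ gives $\operatorname{Cov}(Y,h_0\mid A=1)=\operatorname{Cov}(h_1,h_0\mid A=1)=\operatorname{Cov}(h_1-h_0,h_0\mid A=1)+\operatorname{Var}(h_0\mid A=1)$. Substituting, the variance becomes $\frac1r\operatorname{Var}(Y\mid A=1)-\frac1r\operatorname{Var}(h_0\mid A=1)-\frac2r\operatorname{Cov}(h_1-h_0,h_0\mid A=1)$. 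Finally, compare with \eqref{eq:oracle-variance} using $\frac1r+\frac1{1-r}=\frac1{r(1-r)}$, which yields the stated identity.

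None of the three displays involves $\calt$ beyond $T=\calt(P_X^0,P_X^1)$. Moreover, $h_0\circ T$ appears in them only through integrals that equal integrals under $P_X^1$. This gives independence from the choice of exact-map functional.
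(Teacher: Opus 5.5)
Your route is the paper's. You differentiate the exactness identity along submodels of $\calm$ and split the left side into a frozen-map piece and a through-the-map piece represented by $\psi_T$. This gives $\E[(\psi_T-\chi)s]=0$ for every score $s$ in the tangent space $\dot{\calm}$, where $\chi:=\frac{a}{r}\{h_0(x)-\theta\}-\frac{1-a}{1-r}\{h_0(T(x))-\theta\}$.

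Your Step 3 is correct. Expanding $\operatorname{Var}(Y-h_0(X)\mid A=1)$ directly is a slightly shorter path to the second line of \eqref{eq:oracle-exact-map-variance} than the paper's decomposition of $V_{\mathrm{or}}$.

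The locally nonparametric branch is also fine, with one small correction. With $r$ held fixed, $\dot{\calm}$ consists of functions centred within each arm, not all of $L_2^0$. You therefore also need $\E[\psi_T\mid A=a]=0$. This holds for plug-ins of functionals of $(Q_0,Q_1)$, or follows by letting $r$ vary, since $\calt$ does not depend on it.

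The step that fails is your proposed closure of the other alternative in (iii). The function $\chi$ is not the efficient influence function of $Q_1h_0$ on $\calm$. It carries a control-arm term. More to the point, an efficient influence function on a submodel is by construction the projection of a nonparametric gradient onto $\dot{\calm}$. By contrast, $\chi$ is the unprojected nonparametric gradient of the frozen-map functional $Q_1h_0-Q_0(h_0\circ T)$. Knowing only that $\psi_T\in\dot{\calm}$, the orthogonality relation gives $\psi_T=\Pi(\chi\mid\dot{\calm})$ and nothing more.

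No better argument can close this branch. Take the Gaussian model of \Cref{app:prop:gaussian-a1} with $d=1$ and $h_0(x)=\sqrt{1+x^2}$, which satisfies (A4). Hypotheses (i)--(iii) hold, and the paper names exactly this model as an instance.

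Yet the affine kernel \eqref{app:gaussian-kernel} makes $\psi_T(\cdot,1)$ a polynomial of degree at most two in $x$, because $m_1$ and $S_1$ are. One can check by Stein's identity that it is precisely $\Pi(\chi\mid\dot{\calm})(\cdot,1)$. Meanwhile $\chi(\cdot,1)=r^{-1}\{h_0-\theta\}$ is not a polynomial, so \eqref{eq:exact-map-compatibility} fails.

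Under a constant treatment effect, the variance of $\psi_{Y,h_0,T}-\psi_T$ then exceeds the claimed $V^{\mathrm{or}}_{\mathrm{exact}}$ by $\E[\{\chi-\Pi(\chi\mid\dot{\calm})\}^2]>0$. So the variance identity fails as well. The parametric branch needs the additional hypothesis $\chi\in\dot{\calm}$; for the Gaussian family this means $h_0$ is quadratic.

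The paper's own proof has the same hole: it asserts without argument that the bracketed function lies in the tangent space together with $\psi_T$. You located the weak point correctly, but the justification you propose for it is false.
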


The corollary separates two effects. Estimating the map through an
exact-map functional removes the control-arm imputation variance
$\operatorname{Var}\{h_0(X)\mid A=1\}/(1-r)$ from the oracle variance
entirely. At the same time, it replaces the treated-arm outcome variance by
the residual variance $\operatorname{Var}(Y - h_0(X)\mid A=1)$. Whether the
net effect is a reduction depends on the treatment effect function.
The identity shows that
$V^{\mathrm{or}}_{\mathrm{exact}} \le V_{\mathrm{or}}$ holds if and only if
$$
\operatorname{Cov}\bigl(h_1(X) - h_0(X),\, h_0(X)\,\big|\,A=1\bigr)
\ge -\frac{\operatorname{Var}\{h_0(X)\mid A=1\}}{2(1-r)}.
$$
Under a constant conditional treatment effect the covariance vanishes and
the reduction is strict, with size
$\operatorname{Var}\{h_0(X)\mid A=1\}/(r(1-r))$.

Hypothesis (iii) is where efficiency enters. It concerns the estimation of
the nuisance parameters of the model $\calm$, not efficiency for
$\tau_{ATT}$. In a parametric model, the plug-in of asymptotically
efficient parameter estimates has an influence function in the tangent
space, so hypothesis (iii) holds. The Gaussian map built from
$(\hat\mu_a, \hat\Sigma_a)$ is an instance, since these estimates are
efficient under Gaussianity. Without
hypothesis (iii), the argument determines only the projection of $\psi_T$
onto the tangent space of $\calm$.

None of this changes the semiparametric efficiency bound for the ATT. In
the unrestricted observed-data model, the efficient influence function of
\citet{hahn1998role} does not depend on the choice of map.
\Cref{cor:exact-map-cancellation} holds $h_0$ fixed. When $h_0$ is
estimated, its contribution must be accounted for as in
\Cref{rem:root-n-regression}.

\begin{proof}[Proof of \Cref{cor:exact-map-cancellation}]
For every regular parametric submodel $t\mapsto(P^0_{X,t}, P^1_{X,t})$ in
$\calm$, the exactness hypothesis (i) gives the identity
\begin{equation}
P^0_{X,t}\bigl[h_0\bigl(\calt(P^0_{X,t}, P^1_{X,t})(X)\bigr)\bigr]
= P^1_{X,t}\,h_0.
\label{eq:submodel-pushforward-identity}
\end{equation}
Differentiate at $t = 0$. The derivative of the left side with the map held
fixed is represented by the influence function
$\frac{1-a}{1-r}\{h_0(T(x)) - \theta\}$; the derivative through the map is
represented by $\psi_T$, which is hypothesis (ii); and the right side is
represented by $\frac{a}{r}\{h_0(x) - \theta\}$. Hence for every submodel
score $s$ in the tangent space of $\calm$,
$$
\E\Bigl[\Bigl(
\tfrac{1-a}{1-r}\{h_0(T(x)) - \theta\} + \psi_T(x,a)
 - \tfrac{a}{r}\{h_0(x) - \theta\}\Bigr)\, s(p)\Bigr] = 0.
$$
The bracketed function is mean zero. Under hypothesis (iii) it is orthogonal
to a tangent space that contains it, in the locally nonparametric case, or
it lies in the tangent space together with $\psi_T$, in the efficient
plug-in case. In either case it vanishes identically, which proves
\eqref{eq:exact-map-compatibility}. Substituting
\eqref{eq:exact-map-compatibility} into $\psi_{Y,h_0,T} - \psi_T$ cancels
the control-arm terms and yields \eqref{eq:oracle-exact-map-if}; its
variance is the first expression in \eqref{eq:oracle-exact-map-variance},
using $\E[Y - h_0(X)\mid A=1] = \tau_{ATT}$.

For the identity, write $\Delta := \psi_{Y,h_0,T} - \psi_T$, so that
$V_{\mathrm{or}} = \operatorname{Var}(\psi_{Y,h_0,T})
= \operatorname{Var}(\Delta) + \operatorname{Var}(\psi_T)
+ 2\operatorname{Cov}(\Delta, \psi_T)$. The cross products between treated
and control terms vanish since $A(1-A) = 0$, and the pushforward identity
gives $\operatorname{Var}_{P_X^0}\{h_0(T(X))\}
= \operatorname{Var}\{h_0(X)\mid A=1\}$. Hence, from the identity
\eqref{eq:exact-map-compatibility},
$$
\operatorname{Var}(\psi_T)
= \frac{\operatorname{Var}\{h_0(X)\mid A=1\}}{r}
+ \frac{\operatorname{Var}\{h_0(X)\mid A=1\}}{1-r}
= \frac{\operatorname{Var}\{h_0(X)\mid A=1\}}{r(1-r)},
$$
and, from \eqref{eq:oracle-exact-map-if},
$$
\begin{aligned}
\operatorname{Cov}(\Delta, \psi_T)
&= \frac{1}{r^2}\,\E\bigl[A\,(Y - h_0(X) - \tau_{ATT})(h_0(X) - \theta)\bigr]\\
&= \frac{1}{r}\operatorname{Cov}\bigl(h_1(X) - h_0(X),\, h_0(X)\mid A=1\bigr),
\end{aligned}
$$
where the last step conditions on $X$, uses $\E[Y\mid X, A=1] = h_1(X)$,
$\E[h_1(X) - h_0(X)\mid A=1] = \tau_{ATT}$ and
$\E[h_0(X) - \theta\mid A=1] = 0$. Rearranging
$\operatorname{Var}(\Delta) = V_{\mathrm{or}} - \operatorname{Var}(\psi_T)
- 2\operatorname{Cov}(\Delta, \psi_T)$ gives
\eqref{eq:oracle-exact-map-variance}.
\end{proof}

\begin{prop}[Exact empirical-plan cancellation]
\label{prop:empirical-plan-cancellation}
Let
$$
\hat P_X^0 = \sum_{i\in\Omega_0} a_i\delta_{X_i},
\qquad
\hat P_X^1 = \sum_{j\in\Omega_1} b_j\delta_{X_j}
$$
be weighted empirical measures with positive weights, and let
$\hat\pi = (\hat\pi_{ij}) \in \Pi(\hat P_X^0, \hat P_X^1)$ be an arbitrary
feasible plan. For every measurable $g$ with finite sums,
\begin{equation}
\sum_{i,j}\hat\pi_{ij}\, g(X_j) = \sum_{j\in\Omega_1} b_j\, g(X_j)
= \hat P_X^1 g.
\label{eq:plan-cancellation}
\end{equation}
Consequently, the same-sample, plan-integrated ATT estimator satisfies the
pathwise identity
\begin{equation}
\hat\tau_{ATT,\hat\pi}
= \sum_{j\in\Omega_1} b_j Y_j
 - \sum_{i,j}\hat\pi_{ij}\,\hat h_0(X_j)
= \sum_{j\in\Omega_1} b_j\bigl(Y_j - \hat h_0(X_j)\bigr):
\label{eq:same-sample-plan-att}
\end{equation}
neither the selected coupling nor its cost affects this number. Consider
instead the barycentric projection
$\overline T_{\hat\pi}(X_i) = \sum_j w_{ij}X_j$ with weights
$w_{ij} = \hat\pi_{ij}/a_i$. The discrepancy is
\begin{equation}
\hat P_X^0\bigl(g\circ\overline T_{\hat\pi}\bigr) - \hat P_X^1 g
= \sum_{i\in\Omega_0} a_i\Bigl[
g\Bigl(\sum_j w_{ij}X_j\Bigr) - \sum_j w_{ij}\,g(X_j)\Bigr],
\label{eq:barycentric-gap}
\end{equation}
which is zero for affine $g$, nonpositive for convex $g$, and, if
$g\in C^2$ with $\sup_x\|\nabla^2 g(x)\|_{\mathrm{op}}\le M$, bounded by
\begin{equation}
\bigl|\hat P_X^0(g\circ\overline T_{\hat\pi}) - \hat P_X^1 g\bigr|
\le \frac{M}{2}\sum_{i,j}\hat\pi_{ij}
\bigl\|X_j - \overline T_{\hat\pi}(X_i)\bigr\|^2.
\label{eq:barycentric-gap-bound}
\end{equation}
\end{prop}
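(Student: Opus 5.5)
The plan is to prove the three displays in the order they are stated, since each rests on the previous one and all of them are finite-sum algebra. For \eqref{eq:plan-cancellation}, the only fact I need about $\hat\pi\in\Pi(\hat P_X^0,\hat P_X^1)$ is its second-marginal constraint, $\sum_{i\in\Omega_0}\hat\pi_{ij}=b_j$ for every $j\in\Omega_1$. I will write $\sum_{i,j}\hat\pi_{ij}\,g(X_j)=\sum_j g(X_j)\sum_i\hat\pi_{ij}$. Interchanging the order of summation is legitimate because the sums are finite. Substituting the column sums then gives $\sum_j b_j g(X_j)=\hat P_X^1 g$. Optimality of $\hat\pi$ and the cost never enter the argument, which is why the conclusion holds for an arbitrary feasible plan.

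Next I specialize to $g=\hat h_0$ in the plan-integrated estimator. Its treated term is $\sum_j b_jY_j$. By the first step, its control term equals $\sum_j b_j\hat h_0(X_j)$, and combining the two gives \eqref{eq:same-sample-plan-att}. The right-hand side depends only on $(b_j)$, $(Y_j)$ and $\hat h_0$, so the coupling and its cost drop out pathwise.

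For the barycentric gap \eqref{eq:barycentric-gap}, I start from the first marginal, $\sum_j\hat\pi_{ij}=a_i$. This makes the rows $w_{i\cdot}=\hat\pi_{i\cdot}/a_i$ probability vectors, which is well defined because $a_i>0$. I then rewrite $\hat P_X^1 g$ using \eqref{eq:plan-cancellation} as $\sum_{i,j}\hat\pi_{ij}g(X_j)=\sum_i a_i\sum_j w_{ij}g(X_j)$. Since $\hat P_X^0(g\circ\overline T_{\hat\pi})=\sum_i a_i\,g(\sum_j w_{ij}X_j)$, subtracting the two expressions gives \eqref{eq:barycentric-gap}. Each bracket in that display is a Jensen gap for the discrete law $w_{i\cdot}$. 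It vanishes for affine $g$ because the mean passes through an affine function. It is nonpositive for convex $g$ by finite Jensen.

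For the $C^2$ bound \eqref{eq:barycentric-gap-bound}, I Taylor-expand $g(X_j)$ around the row mean $m_i:=\overline T_{\hat\pi}(X_i)$. This gives $g(X_j)=g(m_i)+\nabla g(m_i)^\top(X_j-m_i)+\rho_{ij}$ with $|\rho_{ij}|\le\tfrac M2\|X_j-m_i\|^2$. Averaging over $j$ with weights $w_{ij}$ makes the linear term vanish, because $\sum_j w_{ij}(X_j-m_i)=0$. Each bracket is therefore bounded in absolute value by $\tfrac M2\sum_j w_{ij}\|X_j-m_i\|^2$. Multiplying by $a_i$ turns $a_iw_{ij}$ back into $\hat\pi_{ij}$, and summing over $i$ with the triangle inequality yields \eqref{eq:barycentric-gap-bound}.

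I do not expect a real obstacle here. The only points needing care are getting the marginal conventions right (rows are control, columns are treated) and the bookkeeping $a_iw_{ij}=\hat\pi_{ij}$. The Taylor remainder bound requires the segment from $m_i$ to $X_j$ to lie in the domain of $g$. I will assume $g$ is defined on the convex hull of the data, and I will state this explicitly.
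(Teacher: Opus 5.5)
Your proposal is correct and follows the paper's proof essentially step for step. The second-marginal constraint together with an interchange of finite sums gives \eqref{eq:plan-cancellation}, and taking $g=\hat h_0$ then gives \eqref{eq:same-sample-plan-att}; row-wise Jensen for the probability vectors $w_{i\cdot}$ and a Taylor expansion around $\overline T_{\hat\pi}(X_i)$, whose linear term cancels, give the barycentric claims, and your caveat that the segments from $\overline T_{\hat\pi}(X_i)$ to $X_j$ must lie in the domain of $g$ (e.g.\ a convex $\calx$) is a point the paper leaves implicit.
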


\begin{proof}[Proof of \Cref{prop:empirical-plan-cancellation}]
The second marginal constraint gives
$\sum_{i,j}\hat\pi_{ij}g(X_j) = \sum_j g(X_j)\sum_i\hat\pi_{ij}
= \sum_j b_j g(X_j)$, which is \eqref{eq:plan-cancellation} and, applied to
$g = \hat h_0$, \eqref{eq:same-sample-plan-att}. For the barycentric
statement, Jensen's inequality applied row-wise to the weights $w_{ij}$ gives
the sign for convex $g$ and equality for affine $g$. If $g\in C^2$ with
bounded Hessian, Taylor-expand $g(X_j)$ around
$\overline T_{\hat\pi}(X_i)$: the first-order term vanishes because
$\sum_j w_{ij}\{X_j - \overline T_{\hat\pi}(X_i)\} = 0$, and the remainder
gives \eqref{eq:barycentric-gap-bound}.
\end{proof}

\begin{prop}[Training-balance identity and cross-fit stability]
\label{prop:training-balance-identity}
Let $P_{X,n}^{a,(k)}$ and $P_{X,n}^{a,(-k)}$ denote the arm-$a$ empirical
measures on the evaluation fold and its complement, with equal arm-fold
sizes. Suppose the fitted rule satisfies exact training balance,
$P_{X,n}^{0,(-k)} g_k = P_{X,n}^{1,(-k)}\hat h_0^{(-k)}$, for every $k$ (as
holds for plan-integrated rules and empirical exact maps trained on
$\Omega^{(-k)}$). Define
$$
H_n := \frac{1}{K}\sum_{k=1}^K
P_{X,n}^{1,(-k)}\bigl(\hat h_0^{(-k)} - h_0\bigr),
\qquad
D_n := \frac{1}{K}\sum_{k=1}^K
\bigl(P_{X,n}^{0,(k)} - P_{X,n}^{0,(-k)}\bigr) g_k.
$$
Then the exact identity
\begin{equation}
\hat\tau_{ATT} - \tau_{ATT}
= \bigl(P_{n}^{1} - P^1\bigr)\bigl(Y - h_0(X)\bigr) - H_n - D_n
\label{eq:crossfit-plan-identity}
\end{equation}
holds. For any fixed function $g$,
$K^{-1}\sum_k(P_{X,n}^{0,(k)} - P_{X,n}^{0,(-k)})g = 0$; hence
$D_n = o_p(n^{-1/2})$ whenever, for some common $g_{\star}$,
\begin{equation}
\frac{1}{K}\sum_{k=1}^K
\bigl(P_{X,n}^{0,(k)} - P_{X,n}^{0,(-k)}\bigr)(g_k - g_{\star})
= o_p(n^{-1/2}).
\label{eq:crossfit-stability}
\end{equation}
Exact training balance alone does not imply \eqref{eq:crossfit-stability}.
Hypothesis 1 of \Cref{thm:balance-error-regimes} implies its out-of-fold
half by the conditional Chebyshev argument. The in-fold half additionally
requires an in-sample empirical process bound, for example a complexity
condition on the class containing $g_k - g_{\star}$, and it holds trivially
under sample splitting.
\end{prop}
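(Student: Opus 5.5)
The identity \eqref{eq:crossfit-plan-identity} should follow from pure algebra, fold by fold, after inserting exact training balance. Write the fold-$k$ estimator as $\hat\tau^{(k)} = P_n^{1,(k)}Y - P_{X,n}^{0,(k)}g_k$. I would add and subtract $P_{X,n}^{0,(-k)}g_k$, and then use training balance to replace $P_{X,n}^{0,(-k)}g_k$ by $P_{X,n}^{1,(-k)}\hat h_0^{(-k)}$. This gives
\[
\hat\tau^{(k)} = P_n^{1,(k)}Y - P_{X,n}^{1,(-k)}\hat h_0^{(-k)} - \bigl(P_{X,n}^{0,(k)} - P_{X,n}^{0,(-k)}\bigr)g_k .
\]
Next I would split $P_{X,n}^{1,(-k)}\hat h_0^{(-k)} = P_{X,n}^{1,(-k)}h_0 + P_{X,n}^{1,(-k)}(\hat h_0^{(-k)}-h_0)$. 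Averaging over $k$, the last two pieces become $H_n$ and $D_n$ by definition.

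For the remaining terms, \Cref{lem:fold-aggregation} applied armwise shows that both $K^{-1}\sum_k P_n^{1,(k)}Y$ and $K^{-1}\sum_k P_{X,n}^{1,(-k)}h_0$ equal the full treated-arm averages $P_n^1 Y$ and $P_n^1 h_0$. This uses equal arm-fold sizes, so that each treated unit has weight $1/n_1$. Subtracting $\tau_{ATT} = P^1Y - P^1 h_0$ then yields $(P_n^1-P^1)(Y-h_0(X)) - H_n - D_n$ exactly.

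For the fixed-function claim, the same aggregation lemma applied to the control arm gives
\[
K^{-1}\sum_k P_{X,n}^{0,(k)}g = P_n^0 g = K^{-1}\sum_k P_{X,n}^{0,(-k)}g,
\]
so the difference vanishes. Hence $D_n$ equals the left-hand side of \eqref{eq:crossfit-stability} after subtracting the zero quantity with $g_\star$, and $D_n=o_p(n^{-1/2})$ under that condition.

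For the remarks, I would split \eqref{eq:crossfit-stability} into its out-of-fold half, $P_{X,n}^{0,(k)}(g_k-g_\star)$ centered at $P_X^0(g_k-g_\star)$, and its in-fold half, $P_{X,n}^{0,(-k)}(g_k-g_\star)$ likewise centered. The two centering terms cancel. For the out-of-fold half, $g_k$ is fixed given $\Omega^{(-k)}$, so the conditional variance is at most $\|g_k-g_\star\|^2_{L_2(P_X^0)}/n_0^{(k)}$. Conditional Chebyshev together with hypothesis 1 of \Cref{thm:balance-error-regimes} then gives $o_p(n^{-1/2})$.

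The in-fold half is the main obstacle, since $g_k$ depends on the same points. Here I would not prove a bound. Instead I would note that a Donsker or complexity condition on the class containing $g_k-g_\star$, with an $L_2$ diameter tending to zero, yields asymptotic equicontinuity and hence the bound. Under sample splitting this term does not arise. To show that training balance alone is insufficient, I would point to the plan-integrated rule. There $g_k$ can interpolate arbitrarily on training points, for example through the same-sample cancellation of \Cref{prop:empirical-plan-cancellation}, so the in-fold empirical deviation need not be small. I would illustrate this with a counterexample sketch rather than a full construction.
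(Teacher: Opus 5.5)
Your proposal is correct and follows essentially the same route as the paper: insert exact training balance fold by fold, split $\hat h_0^{(-k)} = h_0 + (\hat h_0^{(-k)} - h_0)$, and aggregate with the equal arm-fold identity on the treated arm. You then obtain the fixed-function cancellation on the control arm and the sufficiency of \eqref{eq:crossfit-stability} by adding and subtracting $g_{\star}$. Your handling of the closing remarks is, if anything, more careful than the paper's one-line sketch, since you apply conditional Chebyshev only to the out-of-fold half and leave the in-fold half to a complexity condition, as the statement itself requires.
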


\begin{proof}
Training balance gives
$P_{X,n}^{0,(k)}g_k
= P_{X,n}^{1,(-k)}\hat h_0^{(-k)}
+ (P_{X,n}^{0,(k)} - P_{X,n}^{0,(-k)})g_k$. Average over $k$, use the
equal-fold identity $K^{-1}\sum_k P_{X,n}^{1,(-k)}f = P_{X,n}^{1}f$ for
fixed $f$ applied to $h_0$, and subtract
$\tau_{ATT} = P^1(Y - h_0(X))$, which is valid under the identification assumptions. The fixed-function
cancellation is the same equal-fold identity applied on the control arm; the
sufficiency of \eqref{eq:crossfit-stability} follows by adding and
subtracting $g_{\star}$, and the final claim by applying conditional
Chebyshev to $(P_{X,n}^{0,(k)} - P_X^0)(g_k - g_{\star})$ and
$(P_{X,n}^{0,(-k)} - P_X^0)(g_k - g_{\star})$ separately.
\end{proof}

\begin{prop}[Exact cross-fitted variance identity]
\label{prop:crossfit-variance-identity}
Define
$\varepsilon_k := (P_{n}^{1,(k)} - P^1)Y
 - (P_{X,n}^{0,(k)} - P_X^0)g_k$ and
$\overline\varepsilon := K^{-1}\sum_k\varepsilon_k$, with equal arm-fold sizes
$n_a/K$. Then
\begin{align}
\operatorname{Var}(\hat\tau_{ATT})
&= \frac{\operatorname{Var}(Y\mid A=1)}{n_1}
 + \frac{K^{-1}\sum_k\E\bigl[\operatorname{Var}_{P_X^0}(g_k)\bigr]}{n_0}
 + \operatorname{Var}(\overline B_n) + R_{\mathrm{cf}},
\label{eq:full-crossfit-variance}\\
R_{\mathrm{cf}}
&= \frac{2}{K^2}\sum_{k<\ell}\E(\varepsilon_k\varepsilon_\ell)
 - 2\operatorname{Cov}(\overline\varepsilon, \overline B_n).
\label{eq:crossfit-remainder}
\end{align}
In ordinary $K$-fold cross-fitting the training samples overlap, so the fold
contributions are not conditionally independent and $R_{\mathrm{cf}}$ need
not vanish; it is zero under sample splitting
(\Cref{prop:honest-split-map-variance}) and is absorbed into the $o_p(1)$
of \eqref{eq:master-expansion} under the hypotheses of
\Cref{thm:balance-error-regimes}.
\end{prop}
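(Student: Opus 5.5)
The plan is to reduce everything to an exact per-fold error identity and then expand the variance of an average. We work conditionally on the treatment indicators throughout, so that the arm-fold sizes $n_a/K$ are nonrandom, as in \Cref{prop:honest-split-map-variance}. Write the fold estimator as $\hat\tau_{ATT}^{(k)} = P_n^{1,(k)}Y - P_{X,n}^{0,(k)}g_k$. Under the identification assumptions, $\tau_{ATT} = P^1 Y - P_X^1 h_0$. Adding and subtracting $P_X^0 g_k$ then gives the pathwise identity
$$
\hat\tau_{ATT}^{(k)} - \tau_{ATT} = \varepsilon_k - B_k,
$$
with $B_k$ as in \eqref{eq:fold-drift}. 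Averaging over folds yields $\hat\tau_{ATT} - \tau_{ATT} = \overline\varepsilon - \overline B_n$. Hence
$$
\operatorname{Var}(\hat\tau_{ATT}) = \operatorname{Var}(\overline\varepsilon) + \operatorname{Var}(\overline B_n) - 2\operatorname{Cov}(\overline\varepsilon,\overline B_n),
$$
where we use that $\tau_{ATT}$ is deterministic.

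Next I would compute the diagonal part of $\operatorname{Var}(\overline\varepsilon)$ by conditioning on $\calg_k := \sigma(\Omega^{(-k)})$. The evaluation fold $\Omega^{(k)}$ is independent of $\Omega^{(-k)}$, and $g_k$ is $\calg_k$-measurable. The argument of \Cref{prop:honest-split-map-variance} therefore applies verbatim, with $m_a = n_a/K$. It gives $\E(\varepsilon_k\mid\calg_k) = 0$, and hence $\E\varepsilon_k = 0$. Because the two arm-wise means are independent given $\calg_k$, it also gives
$$
\E(\varepsilon_k^2\mid\calg_k) = \frac{K\operatorname{Var}(Y\mid A=1)}{n_1} + \frac{K\operatorname{Var}_{P_X^0}(g_k)}{n_0}.
$$
Taking expectations and expanding $\operatorname{Var}(\overline\varepsilon) = K^{-2}\sum_k\E\varepsilon_k^2 + 2K^{-2}\sum_{k<\ell}\E(\varepsilon_k\varepsilon_\ell)$, the diagonal sum produces exactly the first two terms of \eqref{eq:full-crossfit-variance}. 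The off-diagonal sum, combined with $-2\operatorname{Cov}(\overline\varepsilon,\overline B_n)$, is by definition $R_{\mathrm{cf}}$ in \eqref{eq:crossfit-remainder}. This proves the identity.

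For the qualitative claims, I would argue as follows. In $K$-fold cross-fitting the fold $\Omega^{(\ell)}$ is part of the training set for $g_k$. Consequently $\varepsilon_\ell$ and $g_k$ are dependent, and neither $\E(\varepsilon_k\varepsilon_\ell)$ nor $\operatorname{Cov}(\overline\varepsilon,\overline B_n)$ factorizes. Under sample splitting there is a single evaluation sample, so no cross terms appear. Moreover $\operatorname{Cov}(\varepsilon,B_n) = \E[B_n\,\E(\varepsilon\mid\calg_n)] = 0$, so $R_{\mathrm{cf}}=0$ and we recover \eqref{eq:unconditional-variance}. The absorption statement is the step I expect to be the most delicate, since it is a statement about asymptotic order rather than an identity. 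I would first try to deduce it from the expansion \eqref{eq:master-expansion}. Under hypothesis 1 of \Cref{thm:balance-error-regimes}, the conditional Chebyshev argument shows that each $\sqrt n\,\varepsilon_k$ equals its version with $g_k$ replaced by $g_\star$, up to $o_p(1)$. The replaced versions are sums over disjoint folds of a fixed function, hence independent across $k$. Their cross-fold covariances therefore vanish, and the dependence enters only through $o_p(1)$ terms. This shows that $R_{\mathrm{cf}}$ plays no role at first order in the distributional expansion.
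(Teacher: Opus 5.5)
Your proof of the identity \eqref{eq:full-crossfit-variance}--\eqref{eq:crossfit-remainder} is the paper's argument. You use the pathwise decomposition $\hat\tau_{ATT}-\tau_{ATT}=\overline\varepsilon-\overline B_n$, take the conditional variance of $\varepsilon_k$ given $\Omega^{(-k)}$ from \Cref{prop:honest-split-map-variance}, and collect the diagonal terms. Your sample-splitting check is also correct: there are no off-diagonal pairs, and $\E[B_n\,\E(\varepsilon\mid\calg_n)]=0$.

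The problem is your closing asymptotic claim, that $R_{\mathrm{cf}}$ ``plays no role at first order.''

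\begin{itemize}
\item The conditional Chebyshev replacement $g_k\to g_\star$ only treats the $\E(\varepsilon_k\varepsilon_\ell)$ part of $R_{\mathrm{cf}}$. It says nothing about $\operatorname{Cov}(\overline\varepsilon,\overline B_n)$.
\item Since $\operatorname{Cov}(\varepsilon_k,B_k)=0$ exactly, that covariance comes entirely from pairs $\ell\neq k$. There $\Omega^{(\ell)}\subset\Omega^{(-k)}$ feeds both $\varepsilon_\ell$ and $B_k$.
\item In regime (R3), insert the linear parts $\overline\varepsilon\approx n^{-1}\sum_i\psi_{g_\star}(p_i)$ and $\overline B_n\approx n^{-1}\sum_i\psi_B(p_i)$. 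To first order this gives $\operatorname{Cov}(\overline\varepsilon,\overline B_n)\approx n^{-1}\operatorname{Cov}(\psi_{g_\star},\psi_B)$. That is precisely the cross term of \eqref{eq:total-variance}.
\item This term is generally nonzero. For example, take an exact-map plug-in with known $h_0$ and a constant treatment effect. Then \Cref{cor:exact-map-cancellation} gives $\operatorname{Cov}(\psi_{Y,h_0,T},\psi_T)=\operatorname{Var}(\psi_T)>0$. So $nR_{\mathrm{cf}}$ tends to $-2\operatorname{Var}(\psi_T)$, not zero, which is what produces the strict variance reduction in that corollary.
\end{itemize}

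So the remark should be read more narrowly. Only the $\varepsilon$--$\varepsilon$ part of $R_{\mathrm{cf}}$ is absorbed into the $o_p(1)$ of \eqref{eq:master-expansion}. The $\varepsilon$--$B$ part is carried by the covariance between the two leading terms of that expansion.

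Even for the $\varepsilon$--$\varepsilon$ part, you pass from an in-probability expansion to the moment statement $n\,\E(\varepsilon_k\varepsilon_\ell)\to0$. That step needs a uniform-integrability argument, which hypothesis 1 of \Cref{thm:balance-error-regimes} does not supply.

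The paper's proof establishes only the identity and does not argue this remark. Your proof of the identity is complete; you should drop or weaken the overstated asymptotic conclusion.
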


\begin{proof}
Each $\varepsilon_k$ is mean zero, and conditional on the data outside fold
$k$, $\operatorname{Var}(\varepsilon_k\mid\Omega^{(-k)})
= K\operatorname{Var}(Y\mid A=1)/n_1
+ K\operatorname{Var}_{P_X^0}(g_k)/n_0$. Expanding
$\operatorname{Var}(\overline\varepsilon - \overline B_n)$ and collecting
the diagonal terms gives
\eqref{eq:full-crossfit-variance}--\eqref{eq:crossfit-remainder}.
\end{proof}

\begin{remark}
The plug-in estimator $\hat\tau_{ATT}$ is not Neyman-orthogonal in $T$, which is why $\psi_T$ enters the asymptotic variance. A 
debiased variant
$$
\tilde\tau_{ATT}^{(k)}
:= \hat\tau_{ATT}^{(k)}
   + \frac{1}{|\Omega^{(-k)}|}\sum_{j\in\Omega^{(-k)}}
            \hat\psi_T^{(-k)}(X_j,A_j),
$$
with $\hat\psi_T^{(-k)}$ a plug-in estimator of $\psi_T$ obtained by substituting $\hat h_0^{(-k)}$ for $h_0$ and $\hat\phi_T^{(-k)}$ for $\phi_T$ in the auxiliary fold, removes the $\psi_T$ term from the asymptotic linear representation. Under standard regularity for $\hat\psi_T^{(-k)}$, the corrected estimator satisfies
$$\sqrt n(\tilde\tau_{ATT}-\tau_{ATT})\Rightarrow\mathcal N(0,V_{\mathrm{or}}),$$
recovering the original variance $V_{\mathrm{or}}=\operatorname{Var}(Y\mid A=1)/r+\operatorname{Var}(h_0(X)\mid A=1)/(1-r)$.

We do not pursue the one-step correction here. The plug-in version is what is computed, and the augmented variance $V$ is the variance of the estimator actually computed. The construction of $\hat\phi_T^{(-k)}$ is non-trivial and depends on the specific OT-map estimator, such as the Brenier plug-in, entropic OT, or semi-discrete map. We give an example for the entropic OT map in \Cref{app:1b-prime}.
\end{remark}

\subsection{Asymptotic Linearity of the Outcome Regression}
\label{app:regression-al}

This appendix states the regression analogue of condition (A1), used by
\Cref{rem:root-n-regression}. When the outcome regression converges at
exactly the root-$n$ rate, its error contributes an influence term $\psi_h$
symmetric to $\psi_T$.

\begin{corollary}[Asymptotically linear outcome-regression estimator]
\label{cor:al-regression}
Let $Q^{\dagger} := T^{\dagger}_{\sharp}P_X^0$; when $T^{\dagger} = T$ is
exact, $Q^{\dagger} = P_X^1$. Suppose conditions (A1) and the $h_0$
smoothness of (A4) hold, together with:
\begin{enumerate}
\item[(A6)]
$\hat h_0^{(-k)}$ admits an asymptotically linear expansion: there exists a
measurable $\phi_h:\calx\times(\{0,1\}\times\calx\times\R)\to\R$ such that
for each fold $k$ and each $x\in\calx$,
$$
\hat h_0^{(-k)}(x) - h_0(x)
= \frac{1}{|\Omega^{(-k)}|}\sum_{j\in\Omega^{(-k)}}\phi_h(x; p_j)
 + r_n^{h,(k)}(x),
\qquad \|r_n^{h,(k)}\|_{L_2(Q^{\dagger})} = o_p(n^{-1/2}),
$$
with $\E[\phi_h(x; p)] = 0$ for $Q^{\dagger}$-a.e.\ $x$ and
$\int_{\calx}\E[\phi_h(x; p)^2]\, dQ^{\dagger}(x) < \infty$;
\item[(A7)] the estimator smoothness of (A4) holds with $L_2(P_X^1)$
replaced by $L_2(Q^{\dagger})$: $\hat h_0^{(-k)}$ is
$L^{(-k)}$-Lipschitz with $L^{(-k)} = O_p(1)$,
$\hat h_0^{(-k)}\in C^{1,1}(\calx)$ a.s.\ with
gradient Lipschitz constant $M^{(-k)} = O_p(1)$, and
$\|\nabla\hat h_0^{(-k)} - \nabla h_0\|_{L_2(Q^{\dagger})} = o_p(1)$.
\end{enumerate}
Define $\psi_h(p) := \E_{Q^{\dagger}}[\phi_h(X'; p)]$, where $X' \sim Q^{\dagger}$ is the inner integration variable. Then
\begin{equation}
\frac{1}{K}\sum_{k=1}^K B_{h,k}
= \frac{1}{n}\sum_{i=1}^n \psi_h(p_i) + o_p(n^{-1/2}),
\label{eq:reg-drift-al}
\end{equation}
with $\psi_h$ mean zero and $\E[\psi_h^2] < \infty$.
\end{corollary}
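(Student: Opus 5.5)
The argument runs parallel to the proof of \Cref{cor:al-transport}, with the roles of map and regression exchanged. I will first put the regression term into a form integrated against a fixed measure. For each fold, split
\[
B_{h,k} = Q^{\dagger}\bigl(\hat h_0^{(-k)} - h_0\bigr) + (Q_k - Q^{\dagger})\bigl(\hat h_0^{(-k)} - h_0\bigr).
\]
The second piece is exactly the quantity bounded in \Cref{lem:cross-term}. Its hypotheses are (A1), which gives $\|\hat T^{(-k)} - T^{\dagger}\|_{L_2(P_X^0)} = O_p(n^{-1/2})$, and the estimator smoothness (A7), which is (A4) with $L_2(P_X^1)$ replaced by $L_2(Q^{\dagger})$. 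Hence that piece is $o_p(n^{-1/2})$, and the whole problem reduces to the linear functional $Q^{\dagger}(\hat h_0^{(-k)} - h_0)$.

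Next I substitute the expansion (A6) into this functional:
\[
Q^{\dagger}\bigl(\hat h_0^{(-k)} - h_0\bigr)
= \int \frac{1}{|\Omega^{(-k)}|}\sum_{j\in\Omega^{(-k)}}\phi_h(x;p_j)\,dQ^{\dagger}(x) + Q^{\dagger} r_n^{h,(k)}.
\]
The remainder satisfies $|Q^{\dagger} r_n^{h,(k)}| \le \|r_n^{h,(k)}\|_{L_2(Q^{\dagger})} = o_p(n^{-1/2})$ by Jensen's inequality. For the main term I exchange the integral with the finite sum. Each summand is integrable because, by Tonelli and the integrated second moment in (A6), $\int|\phi_h(x;p_j)|\,dQ^{\dagger}(x)<\infty$ almost surely. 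After the exchange, the main term equals $|\Omega^{(-k)}|^{-1}\sum_{j\in\Omega^{(-k)}}\psi_h(p_j)$.

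I then check the properties of $\psi_h$. It has mean zero by Fubini together with $\E[\phi_h(x;p)]=0$ for $Q^{\dagger}$-a.e.\ $x$. Jensen on the inner average gives $\psi_h(p)^2 \le \E_{X'\sim Q^{\dagger}}[\phi_h(X';p)^2]$, and Tonelli then gives $\E[\psi_h^2] \le \int \E[\phi_h(x;p)^2]\,dQ^{\dagger}(x)<\infty$. Finally, averaging over folds with the second identity of \Cref{lem:fold-aggregation} turns $K^{-1}\sum_k|\Omega^{(-k)}|^{-1}\sum_{j\in\Omega^{(-k)}}\psi_h(p_j)$ into $n^{-1}\sum_i\psi_h(p_i)$. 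This yields \eqref{eq:reg-drift-al}, since the finitely many $o_p(n^{-1/2})$ remainders stay $o_p(n^{-1/2})$ after averaging.

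The main obstacle is the cross term, and specifically that \Cref{lem:cross-term} needs a gradient-consistency input. That input would normally come from (A3), but (A3) fails here because the regression converges at exactly rate $n^{-1/2}$. I therefore need $\|\nabla\hat h_0^{(-k)} - \nabla h_0\|_{L_2(Q^{\dagger})}=o_p(1)$, which (A7) supplies directly. Given that, the cross term is bounded by this $o_p(1)$ quantity times $O_p(n^{-1/2})$, plus $(M^{(-k)}+L^{\ast}_{\nabla h_0})\,O_p(n^{-1})$. I will also check that the mean-value step in that lemma stays valid when the point of evaluation is $T^{\dagger}$ rather than $T$. This only requires the pushforward identity $\|f\circ T^{\dagger}\|_{L_2(P_X^0)}=\|f\|_{L_2(Q^{\dagger})}$, which holds by the definition $Q^{\dagger}=T^{\dagger}_{\sharp}P_X^0$.
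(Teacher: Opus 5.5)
Your proposal is correct and follows the paper's proof step for step: split $B_{h,k}$ into $Q^{\dagger}(\hat h_0^{(-k)} - h_0)$ plus the cross term, handle the cross term with \Cref{lem:cross-term} under (A1) and (A7), substitute (A6) with the $L_2(Q^{\dagger})$ remainder bound, bound the moments of $\psi_h$ by Jensen and Tonelli, and aggregate the folds with \Cref{lem:fold-aggregation}. One small correction to your closing commentary: in the proof of \Cref{thm:att_normality}, the gradient-consistency input to \Cref{lem:cross-term} comes from (A4), not (A3). There, (A3) disposed of the leading piece $P_X^1(\hat h_0^{(-k)} - h_0)$, which (A6) now replaces with a linear expansion.
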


\begin{example}[OLS verifies (A6)]
\label{ex:ols-al}
In the linear outcome model of the Gaussian example,
$h_0(x) = \tilde x^{\top}\beta$ with
$Y_i = h_0(X_i) + \epsilon_i$, $\epsilon_i \sim N(0,\sigma^2)$ independent,
the OLS estimator satisfies
$\hat h_0^{(-k)}(x) - h_0(x) = \tilde x^{\top}(\hat\beta^{(-k)} - \beta)$ with
$$
\phi_h(x; p)
= \frac{1-a}{1-r}\,\tilde x^{\top}\tilde\Sigma_0^{-1}\tilde x_{p}\,
\bigl(y - h_0(x_p)\bigr),
\qquad
\tilde\Sigma_0 := \E[\tilde X\tilde X^{\top}\mid A=0],
$$
where $(x_p, y)$ are the covariate and outcome of the auxiliary observation
$p$. The conditions of (A6) are verified as follows.
\emph{Mean zero:} $\E[\phi_h(x;p)] = 0$ since
$\E[\epsilon \mid X, A=0] = 0$ by the definition of $h_0$.
Gaussianity of the noise is not needed.
\emph{Integrated second moment:} under homoskedastic noise,
$\E[\phi_h(x;p)^2]
= \tfrac{\sigma^2}{1-r}\,\tilde x^{\top}\tilde\Sigma_0^{-1}\tilde x$,
which is unbounded in $x$ on $\R^d$, so the supremum form of the condition
fails, but
$\int\E[\phi_h(x;p)^2]\,dP_X^1(x)
= \tfrac{\sigma^2}{1-r}\,\E_{P_X^1}\bigl[\tilde X^{\top}
\tilde\Sigma_0^{-1}\tilde X\bigr] < \infty$
since $P_X^1$ is Gaussian; under heteroskedastic noise the same holds with
$\sigma^2$ replaced by $\E[\sigma^2(X)\|\tilde\Sigma_0^{-1/2}\tilde X\|^2
\mid A=0]$-type moments.
\emph{Remainder:} $r_n^{h,(k)}(x) = \tilde x^{\top}v_n$ collects the
inverse-Gram replacement
$(\hat{\tilde\Sigma}_0^{-1} - \tilde\Sigma_0^{-1})$ acting on the
$O_p(n^{-1/2})$ score average and the arm-count ratio correction
$\bigl(\tfrac{|\Omega^{(-k)}|}{n_0^{(-k)}} - \tfrac{1}{1-r}\bigr)$ times the
same centred average; both give $\|v_n\| = O_p(n^{-1})$, hence
$\|r_n^{h,(k)}\|_{L_2(P_X^1)}
\le \|v_n\|\,(\E_{P_X^1}\|\tilde X\|^2)^{1/2} = O_p(n^{-1})
= o_p(n^{-1/2})$; this uses $\E\|\tilde X\|^4 < \infty$ (Gaussian).
The resulting
$\psi_h(p) = \frac{1-a}{1-r}\,\tilde\mu_1^{\top}\tilde\Sigma_0^{-1}
\tilde x_p(y - h_0(x_p))$ with $\tilde\mu_1 = \E[\tilde X\mid A=1]$ is a
non-degenerate root-$n$ contribution,
$\operatorname{Var}(\psi_h)
= \tfrac{\sigma^2}{1-r}\,\tilde\mu_1^{\top}\tilde\Sigma_0^{-1}\tilde\mu_1
> 0$: condition (A3) fails here, while (A6) holds.

Two caveats apply to this example. First, (A6) is an expansion around the
true $h_0$, so it requires the linear model to be correctly specified;
misspecified OLS is asymptotically linear around the best linear predictor
$h_0^{\dagger} \neq h_0$, which the present statement does not cover (a
regression pseudo-limit slot symmetric to $T^{\dagger}$ would, and would
add a bias term to $b^{\dagger}$; we do not pursue it here).
Second, the Lipschitz constant of $\hat h_0^{(-k)}$ is
$\|\hat\beta_1^{(-k)}\| = O_p(1)$, not a deterministic $L$, which is why
(A7) states the Lipschitz condition with $L^{(-k)} = O_p(1)$.
\end{example}

Under (A1), (A2), (A6), (A7) and (A5), the argument in the proof of
\Cref{thm:att_normality} applies with $\psi_B = \psi_T + \psi_h$ in regime
(R3) of \Cref{thm:balance-error-regimes}, giving influence function
$\psi_{Y,h_0,T} - \psi_T - \psi_h$ and variance
$V = V_{\mathrm{or}} + \operatorname{Var}(\psi_T + \psi_h)
- 2\operatorname{Cov}(\psi_{Y,h_0,T},\, \psi_T + \psi_h)$.

\begin{proof}[Proof of \Cref{cor:al-regression}]
Decompose
$B_{h,k} = Q^{\dagger}\bigl(\hat h_0^{(-k)} - h_0\bigr)
 + (Q_k - Q^{\dagger})\bigl(\hat h_0^{(-k)} - h_0\bigr)$.
The second term is $o_p(n^{-1/2})$ by \Cref{lem:cross-term}. For the first,
substitute (A6) and integrate against $Q^{\dagger}$: Fubini is justified by
the second-moment bound on $\phi_h$, the remainder satisfies
$|Q^{\dagger} r_n^{h,(k)}| \le \|r_n^{h,(k)}\|_{L_2(Q^{\dagger})}
= o_p(n^{-1/2})$, and the result is
$$
B_{h,k}
= \frac{1}{|\Omega^{(-k)}|}\sum_{j\in\Omega^{(-k)}}\psi_h(p_j)
 + o_p(n^{-1/2}),
$$
with $\psi_h$ mean zero by the kernel mean-zero condition and Fubini. The
second moment satisfies
$\E[\psi_h^2] \le \int_{\calx}\E[\phi_h(x;p)^2]\,dQ^{\dagger}(x) < \infty$,
by Jensen's inequality
$\psi_h(p)^2 \le \E_{X'\sim Q^{\dagger}}[\phi_h(X';p)^2]$ followed by
Tonelli. Fold aggregation as in
the proof of \Cref{cor:al-transport} gives \eqref{eq:reg-drift-al}.
\end{proof}

\section{Verification of the Conditions on the Transport-Map Estimators}
\label{app:map-conditions}

This appendix is the verification companion to \Cref{sec:role-of-map}. For each class of transport-map estimators, it identifies the condition the estimator satisfies and describes how the estimated map then enters the bias and the variance of the causal estimator. Two classes require no verification. First, any map or plan estimated on the sample on which the functional is evaluated needs no condition at all: by \Cref{prop:empirical-plan-cancellation}, the coupling never enters the estimator, and the map contributes nothing to the bias or the variance. Second, the out-of-sample nonparametric plug-in maps, namely the barycentric projection of an optimal empirical plan and its one-nearest-neighbour extension, are $L_2(P_X^0)$ consistent at rates of order $n^{-1/4}$ up to logarithmic factors for $d \le 4$ and of order $n^{-1/d}$ for $d \ge 5$ \citep[Proposition 14]{manole2021plugin}; see also \citet{deb2021rates}. This verifies the consistency condition 1(b) of \Cref{thm:att_consistency} by citation, and the rate is the only property used. No expansion of the form (A1) is known for these maps, so their contribution to the variance is not identified, and the estimator falls in regime (R1) or (R4) of \Cref{thm:balance-error-regimes}, according to the rate of the bias term $\overline B_n$. The two classes that do admit the expansion (A1) are verified in the following subsections.

\subsection{Condition (A1) for the Gaussian plug-in map}
\label{app:gaussian-a1}

Any map estimator that is a smooth function of finitely many asymptotically linear parameter estimates satisfies condition (A1) by the delta method, so every parametric plug-in map falls under this subsection. We verify the leading case, the Gaussian plug-in map of \Cref{sec:monge_OR}. Write $\theta := (\mu_0, \Sigma_0, \mu_1, \Sigma_1)$, $q_1 := r$ and $q_0 := 1-r$, and recall
$$
T_{\theta}(x) = \mu_1 + A(x-\mu_0),
\qquad
A = A(\Sigma_0, \Sigma_1)
:= \Sigma_0^{-1/2}\bigl(\Sigma_0^{1/2}\Sigma_1\Sigma_0^{1/2}\bigr)^{1/2}\Sigma_0^{-1/2}.
$$
Define the moment influence functions, for $a \in \{0,1\}$,
$$
\begin{aligned}
m_a(X_*, A_*) &:= \frac{\mathbf 1\{A_*=a\}}{q_a}\,(X_* - \mu_a),\\
S_a(X_*, A_*) &:= \frac{\mathbf 1\{A_*=a\}}{q_a}
\bigl[(X_*-\mu_a)(X_*-\mu_a)^{\top} - \Sigma_a\bigr].
\end{aligned}
$$

\begin{prop}[The Gaussian plug-in map satisfies (A1) and (A2)]
\label{app:prop:gaussian-a1}
Let $P_X^a = N(\mu_a, \Sigma_a)$ with $\Sigma_a \succ 0$ for $a \in \{0,1\}$, and let $\hat T^{(-k)} = T_{\hat\theta^{(-k)}}$ be the plug-in map built from the arm means and sample covariances on $\Omega^{(-k)}$. Then condition (A1) holds with limit map $T^{\dagger} = T_{\theta}$, which is the exact Gaussian transport map, so condition (A2) holds as well. The kernel is
\begin{equation}
\phi_T(x; X_*, A_*)
= m_1(X_*, A_*)
+ \bigl[\partial A(S_0, S_1)(X_*, A_*)\bigr](x - \mu_0)
- A\, m_0(X_*, A_*),
\label{app:gaussian-kernel}
\end{equation}
where $\partial A$ is the Fr\'echet derivative of $(\Sigma_0, \Sigma_1) \mapsto A$ at the truth, applied to the covariance influences. The kernel is affine in $x$; consequently
$$
\sup_{x \in \calx} \E\|\phi_T(x; X, A)\|^2 = \infty,
\qquad
\int_{\calx} \E\|\phi_T(x; X, A)\|^2\, dP_X^0(x) < \infty.
$$
\end{prop}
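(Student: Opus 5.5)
The plan is a delta-method argument for the finite-dimensional parameter $\hat\theta^{(-k)} = (\hat\mu_0,\hat\Sigma_0,\hat\mu_1,\hat\Sigma_1)$. The map is then written as a smooth function of $\hat\theta^{(-k)}$, affine in $x$.

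\textbf{Step 1: parameter expansions.} On $\Omega^{(-k)}$ the arm means and sample covariances are H\'ajek-type ratios. Apply \Cref{lem:delta-method} with $g=\mathbf 1\{A=a\}$, and use the fact that replacing $\hat\mu_a$ by $\mu_a$ inside the covariance costs $(\hat\mu_a-\mu_a)(\hat\mu_a-\mu_a)^\top=O_p(n^{-1})$. The $n_a-1$ versus $n_a$ normalization is likewise $O_p(n^{-1})$. This gives
\[
\hat\mu_a-\mu_a=\frac{1}{|\Omega^{(-k)}|}\sum_{j\in\Omega^{(-k)}} m_a(X_j,A_j)+O_p(n^{-1}),
\]
and the same expansion for $\hat\Sigma_a-\Sigma_a$ with $S_a$ in place of $m_a$. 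Gaussian fourth moments make these kernels square integrable and mean zero.

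\textbf{Step 2: differentiability of $A$.} On the open cone of positive definite matrices, the maps $\Sigma\mapsto\Sigma^{1/2}$ and $\Sigma\mapsto\Sigma^{-1/2}$ are real-analytic. The derivative of the square root solves a Sylvester equation with a unique solution, since the spectrum is strictly positive. Hence $(\Sigma_0,\Sigma_1)\mapsto A$ is $C^2$ in a neighbourhood of the truth. A second-order Taylor expansion then gives
\[
\hat A-A=\partial A(\hat\Sigma_0-\Sigma_0,\hat\Sigma_1-\Sigma_1)+O_p(\|\hat\Sigma-\Sigma\|^2),
\]
which equals $\partial A$ applied to the averaged $S_a$ plus $O_p(n^{-1})$. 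Because $\partial A$ is linear, it commutes with the empirical average.

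\textbf{Step 3: expansion of the map and the remainder.} Write
\[
\hat T(x)-T(x)=(\hat\mu_1-\mu_1)+(\hat A-A)(x-\mu_0)-A(\hat\mu_0-\mu_0)-(\hat A-A)(\hat\mu_0-\mu_0).
\]
Substituting Steps 1 and 2 yields the kernel \eqref{app:gaussian-kernel}. The remainder has the form $r_n(x)=u_n+U_n(x-\mu_0)$ with $\|u_n\|+\|U_n\|=O_p(n^{-1})$, including the product term. Therefore
\[
\|r_n\|_{L_2(P_X^0)}\le\|u_n\|+\|U_n\|\,\mathrm{tr}(\Sigma_0)^{1/2}=O_p(n^{-1})=o_p(n^{-1/2}).
\]
The kernel is mean zero for each fixed $x$ because $m_a$ and $S_a$ are mean zero and $\partial A$ is linear.

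\textbf{Step 4: moment claims and (A2).} Write $\phi_T(x;\cdot)=\alpha+\mathcal B(x-\mu_0)$, where $\alpha$ is random and $\mathcal B=\partial A(S_0,S_1)$ is a random matrix. Then
\[
\E\|\phi_T\|^2=\E\|\alpha\|^2+2\,\E[\alpha^\top\mathcal B](x-\mu_0)+(x-\mu_0)^\top\E[\mathcal B^\top\mathcal B](x-\mu_0),
\]
a quadratic in $x$. It is integrable under $P_X^0$ since Gaussians have second moments. Its supremum over $\R^d$ is infinite provided $\E[\mathcal B^\top\mathcal B]\neq0$. This holds because $\partial A$ is injective: differentiate the identity $A\Sigma_0A=\Sigma_1$ in the $\Sigma_1$ direction. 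Since $S_1$ is nondegenerate, $\mathcal B$ is not almost surely zero. Finally, (A2) is immediate, because $T_\theta$ is the classical exact $W_2$ map between Gaussians, so $T^\dagger=T$ with $T_\sharp P_X^0=P_X^1$.

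I expect Step 2 to be the main obstacle: it requires justifying smoothness of the matrix square root composition and the non-degeneracy of $\partial A$ needed for the unbounded-supremum claim. I would handle both through the Sylvester/Lyapunov characterization of the derivatives. The rest is bookkeeping of $O_p(n^{-1})$ terms.
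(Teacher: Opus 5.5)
Your proposal is correct and follows the paper's proof sketch step for step: asymptotic linearity of the fold means and covariances via \Cref{lem:delta-method}, with the mean correction and the $n_a-1$ normalization as $O_p(n^{-1})$ terms; analyticity of $A$ on the positive definite cone; a second-order Taylor remainder that is affine in $x$ with $O_p(n^{-1})$ coefficients; and a second moment that is quadratic in $x$. Your injectivity argument for $\partial_{\Sigma_1}A$, obtained by differentiating $A\Sigma_0A=\Sigma_1$, supplies the nondegeneracy that the paper's sketch only asserts by saying the upper bound is ``not uniform''. Just state explicitly that $S_0$ vanishes on $\{A_*=1\}$, so that there $\mathcal B=\partial_{\Sigma_1}A(S_1)$ and no cancellation with the $\Sigma_0$-direction can occur.
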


\begin{proof}[Proof sketch]
The arm means and sample covariances on the fold are asymptotically linear with kernels $m_a$ and $S_a$: the H\'ajek denominators are handled by \Cref{lem:delta-method}, and the estimated-mean correction in $\hat\Sigma_a$ is of second order. The map $\theta \mapsto T_{\theta}(x)$ is Fr\'echet differentiable at every $\theta$ with $\Sigma_0, \Sigma_1 \succ 0$, because matrix inversion and the matrix square root are analytic on the positive definite cone; its derivative in the direction $h = (h_{\mu_0}, h_{\Sigma_0}, h_{\mu_1}, h_{\Sigma_1})$ is $h_{\mu_1} + [\partial A(h_{\Sigma_0}, h_{\Sigma_1})](x - \mu_0) - A h_{\mu_0}$, which is affine in $x$. The pointwise delta method then gives \eqref{eq:map-al} with kernel \eqref{app:gaussian-kernel}. On the event that $\hat\theta^{(-k)}$ lies in a fixed neighborhood of $\theta$, whose probability tends to one, the second-order Taylor remainder is bounded by $C\,\|\hat\theta^{(-k)} - \theta\|^2(1 + \|x\|)$ for a constant $C$ depending on $\theta$, so its $L_2(P_X^0)$ norm is $O_p(n^{-1})$, which verifies the remainder requirement of (A1). The kernel is mean zero because $m_a$ and $S_a$ are. For the second moment, affinity in $x$ gives $\E\|\phi_T(x; X, A)\|^2 \le C'(1 + \|x\|^2)$, with $C'$ finite because the Gaussian law has finite fourth moments. The bound is not uniform in $x$, so the supremum form fails, while integrating against $P_X^0$, which has finite second moments, yields the integrated form required by condition (A1).
\end{proof}

In consequence, \Cref{thm:att_normality} applies to the Gaussian plug-in map with the $\psi_T$ term retained. The sample means and covariances are asymptotically efficient under Gaussianity, so hypothesis (iii) of \Cref{cor:exact-map-cancellation} holds, and the exact variance identity \eqref{eq:oracle-exact-map-variance} characterizes the map contribution. Under misspecification, the same argument yields condition (A1) around the Gaussian pseudo-map, which is no longer exact; condition (A2) then fails, and the situation is that of the entropic map in \Cref{app:1b-prime}.

\subsection{Condition (A1) for the empirical entropic map}
\label{app:1b-prime}

This subsection verifies condition (A1) of \Cref{thm:att_normality} for the empirical entropic optimal transport (EOT) map at a fixed regularization $\epsilon > 0$. The verification composes three external results: an empirical-process CLT on a H\"older-ball Donsker class \citep[Theorem 2.7.1]{van1996weak}, the Hadamard differentiability of the EOT potentials with respect to the marginal distributions \citep[Section 4]{goldfeld2024limit}, and the functional delta method \citep[Theorem 3.9.4]{van1996weak}. The composition is not stated as a named result in the cited works, so we give it here and verify the kernel qualifiers of (A1). The functional delta method used here acts on maps between function spaces and is unrelated to the scalar delta-method expansion of \Cref{lem:delta-method}, which handles random denominators.

We work under three primitive hypotheses. (P1): the marginals $P_X^0, P_X^1$ are supported in a common bounded closed ball $\mathcal X \subset \mathbb R^d$. (P2): the transport cost is quadratic, $c(x, y) = \frac{1}{2}\|x - y\|^2$. (P3): the regularization $\epsilon > 0$ is fixed, and $\hat T^{(-k)}$ is the empirical EOT (Sinkhorn) map at level $\epsilon$ on the auxiliary fold. Observations are i.i.d.\ tuples $(X_j, A_j) \sim \nu$ on $\mathcal X \times \{0, 1\}$ with $r = \nu(A=1) \in (0,1)$, the auxiliary fold has size $N := (K-1)n/K$, and $\hat\nu^{(-k)}$ denotes the joint empirical measure on the fold. Fix an integer $s > d/2$ and let $B_s$ be the closed unit ball of $C^s(\mathcal X)$ in the H\"older norm. Under (P2) the EOT map satisfies $T_\epsilon(x) = x - \nabla\varphi_1^{\epsilon}(x)$, where $\varphi_1^{\epsilon}$ is the first EOT potential \citep[Proposition 2]{pooladian2021entropic}. Writing $\Lambda$ for the map from the joint law to the pair of conditional marginals and $\Phi^\epsilon$ for the map from the marginal pair to the EOT potentials, normalized as in \citet[Section 2]{goldfeld2024limit}, the EOT map is the composite
\begin{equation}
\Psi^\epsilon : \nu \longmapsto \mathrm{id}_{\mathcal X} - \nabla\,\Pi_1\,\Phi^\epsilon(\Lambda(\nu)) \in C^{s-1}(\mathcal X; \mathbb R^d),
\label{app:def-Psi}
\end{equation}
where $\Pi_1$ selects the first potential, so that $\hat T^{(-k)} = \Psi^\epsilon(\hat\nu^{(-k)})$ and $T_\epsilon = \Psi^\epsilon(\nu)$.

\begin{prop}[Verification of condition (A1) under (P1)-(P3)]
\label{app:prop:1b}
Under \textnormal{(P1)-(P3)} and $r \in (0, 1)$, the empirical EOT map
$\hat T^{(-k)} = \Psi^\epsilon(\hat\nu^{(-k)})$ satisfies condition (A1) with limit map $T^{\dagger} = T_\epsilon$ and kernel
\begin{equation}
\phi_T(x; X_*, A_*) := [\Psi^{\nu, \epsilon}]'\bigl(\delta_{(X_*, A_*)} - \nu\bigr)(x),
\quad x \in \mathcal X,\ (X_*, A_*) \in \mathcal X \times \{0, 1\},
\label{app:phi-T-def}
\end{equation}
where $[\Psi^{\nu, \epsilon}]'$ is the Hadamard derivative of $\Psi^\epsilon$ at $\nu$. The kernel is mean zero and satisfies the supremum bound $\sup_x \E\|\phi_T(x; X, A)\|^2 < \infty$, which implies the integrated second moment required by condition (A1).
\end{prop}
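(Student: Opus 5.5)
The plan is to write the empirical map as $\Psi^\epsilon(\hat\nu^{(-k)})$ and chain three ingredients: an empirical-process CLT for $\sqrt N(\hat\nu^{(-k)}-\nu)$, Hadamard differentiability of $\Psi^\epsilon$ at $\nu$, and the functional delta method. From these we read off the linear expansion \eqref{eq:map-al}. The mean-zero and moment properties of the kernel then follow from linearity and continuity of the derivative.

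\textbf{Step 1 (empirical-process CLT).} View measures on $\mathcal X\times\{0,1\}$ as elements of $\ell^\infty(\mathcal F)$, where $\mathcal F:=\{(x,a)\mapsto f(x)\mathbf 1\{a=b\}: f\in B_s,\ b\in\{0,1\}\}$. Under (P1) and $s>d/2$, the H\"older ball $B_s$ on a bounded convex domain has bracketing entropy of order $\varepsilon^{-d/s}$ with $d/s<2$. So $B_s$ is $P$-Donsker by \citet[Theorem 2.7.1]{van1996weak}, and so is $\mathcal F$ as a finite union of products with bounded indicators. Hence $\sqrt N(\hat\nu^{(-k)}-\nu)\rightsquigarrow\mathbb G_\nu$ in $\ell^\infty(\mathcal F)$.

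\textbf{Step 2 (differentiability and delta method).} Consider the map $\Lambda$ sending $\nu$ to $(\nu(\cdot\,,A=0)/\nu(A=0),\ \nu(\cdot\,,A=1)/\nu(A=1))$. It is Hadamard differentiable at $\nu$ because $r\in(0,1)$ and the ratio map is smooth away from zero denominators; this is the functional analogue of \Cref{lem:delta-method}. Its derivative takes values in $\ell^\infty(B_s)^2$. Next, \citet[Section 4]{goldfeld2024limit} give Hadamard differentiability of $\Phi^\epsilon$ from $\ell^\infty(B_s)^2$ into $C^{s}(\mathcal X)^2$, or at least into a H\"older space of order $\geq 1$, under (P1)--(P3). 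Then $\Pi_1$ is linear and $\nabla:C^s\to C^{s-1}$ is bounded linear. The chain rule therefore makes $\Psi^\epsilon$ Hadamard differentiable at $\nu$, tangentially to the support of $\mathbb G_\nu$. The functional delta method \citep[Theorem 3.9.4]{van1996weak} then gives
\[
\sqrt N\bigl(\hat T^{(-k)}-T_\epsilon\bigr)-[\Psi^{\nu,\epsilon}]'\bigl(\sqrt N(\hat\nu^{(-k)}-\nu)\bigr)=o_p(1)
\]
in $C^{s-1}(\mathcal X;\R^d)$, hence in sup norm. By linearity of the derivative,
\[
[\Psi^{\nu,\epsilon}]'\bigl(\hat\nu^{(-k)}-\nu\bigr)=N^{-1}\sum_{j}\phi_T(\cdot;X_j,A_j).
\]
The remainder $r_n^{T,(k)}$ is therefore $o_p(N^{-1/2})=o_p(n^{-1/2})$ uniformly in $x$, and so also in $L_2(P_X^0)$.

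\textbf{Step 3 (kernel properties) and the main obstacle.} Two points need care. The first is that $\delta_{(X_*,A_*)}-\nu$ must lie in the domain where the derivative is defined and continuous. I would represent the derivative as a bounded linear functional on $\ell^\infty(\mathcal F)$, so that evaluating it at $\delta_z-\nu$ is legitimate. The second is to get
\[
\sup_z\|\phi_T(\cdot;z)\|_\infty\le \|[\Psi^{\nu,\epsilon}]'\|\cdot\sup_z\|\delta_z-\nu\|_{\mathcal F}\le 2\|[\Psi^{\nu,\epsilon}]'\|,
\]
which yields the supremum second-moment bound and hence the integrated one. Mean zero needs Pettis or Bochner integrability of $z\mapsto\delta_z-\nu$. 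With that, linearity and continuity let us exchange expectation and derivative: $\E\,\phi_T(x;X,A)=[\Psi^{\nu,\epsilon}]'(\E\delta_{(X,A)}-\nu)(x)=0$.

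I expect this extension from the tangent set to the full, possibly non-separable $\ell^\infty(\mathcal F)$ to be the main obstacle. It is also where the exchange of expectation and derivative must be justified. I would first try to use the explicit form of the derivative in \citet{goldfeld2024limit}, which is an integral operator built from the inverse of the linearized Schr\"odinger system applied to $(\mu-\nu)$. That form shows the derivative extends continuously to finite signed measures with the required bounds.
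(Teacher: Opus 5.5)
Your proposal is correct and follows the paper's proof essentially step for step: the Donsker CLT on the H\"older class, Hadamard differentiability of $\Lambda$, $\Phi^\epsilon$ and $\nabla$ combined by the chain rule, the functional delta method plus linearity to read off $\phi_T$, the transfer of the $C^{s-1}$ remainder to $L_2(P_X^0)$, and the bound $\|\delta_z-\nu\|_{\mathcal F}\le 2$ for the kernel's second moment. The paper asserts the expectation--derivative exchange you flag without comment, and it is easier than you expect, so the explicit form of the derivative from Goldfeld et al.\ is not needed: no extension beyond the closed span $\mathbb D_0$ of the centred evaluations $\delta_z-\nu$ is required, since these lie in $\mathbb D_0$ by construction; and because elements of $B_s$ ($s\ge 1$) are uniformly Lipschitz, $z\mapsto\delta_z$ is continuous from the compact set $\mathcal X\times\{0,1\}$ into $\ell^\infty(\mathcal F)$, hence Bochner integrable with integral $\nu$, so the bounded linear derivative commutes with the expectation.
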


\begin{proof}
\emph{Step 1: empirical-process CLT.}
Under (P1) and $s > d/2$, the class $B_s \cup \{f \mathbf 1_{A=0}, f\mathbf 1_{A=1} : f \in B_s\}$ is $\nu$-Donsker \citep[Theorem 2.7.1]{van1996weak}. Hence $\sqrt N(\hat\nu^{(-k)} - \nu) \Rightarrow \mathbb G^\nu$ in $\ell^\infty(B_s)$, and the limit is concentrated on the closed span $\mathbb D_0$ of the centred evaluations $\delta_{(x,a)} - \nu$.

\emph{Step 2: Hadamard differentiability of the composite.}
The conditioning map $\Lambda$ is Hadamard differentiable at $\nu$, tangentially to $\ell^\infty(B_s)$ and for $r \in (0,1)$, by the quotient rule \citep[Lemma 3.10.20]{van1996weak}. The potential map $\Phi^\epsilon$ is Hadamard differentiable at the marginal pair, with a linear continuous derivative into $C^s(\mathcal X) \times C^s(\mathcal X)$; this is the main technical input, established in \citet[Section 4]{goldfeld2024limit} through the Schr\"odinger system and a Banach-space implicit function theorem. The remaining components of \eqref{app:def-Psi} are linear and continuous. By the chain rule \citep[Theorem 3.9.6]{van1996weak}, $\Psi^\epsilon$ is Hadamard differentiable at $\nu$ tangentially to $\mathbb D_0$, as a map into $C^{s-1}(\mathcal X; \mathbb R^d)$.

\emph{Step 3: functional delta method.}
Apply the functional delta method \citep[Theorem 3.9.4]{van1996weak} with $\hat\nu^{(-k)}$, rate $\sqrt N$, and limit $\mathbb G^\nu$. The linearization step inside its proof, which is the statement condition (A1) requires, gives
\begin{equation}
\hat T^{(-k)} - T_\epsilon = [\Psi^{\nu, \epsilon}]'\bigl(\hat\nu^{(-k)} - \nu\bigr) + R_n^{T, (k)},
\quad \|R_n^{T, (k)}\|_{C^{s-1}} = o_p(N^{-1/2}),
\label{app:linearization-Cs1}
\end{equation}
in $C^{s-1}(\mathcal X; \mathbb R^d)$.

\emph{Step 4: kernel form.}
Decompose
$$
\hat\nu^{(-k)} - \nu = \frac{1}{N}\sum_{j \in \Omega^{(-k)}}\bigl(\delta_{(X_j, A_j)} - \nu\bigr).
$$
By linearity of $[\Psi^{\nu, \epsilon}]'$ on $\mathbb D_0$,
$$
[\Psi^{\nu, \epsilon}]'\bigl(\hat\nu^{(-k)} - \nu\bigr) = \frac{1}{N}\sum_{j \in \Omega^{(-k)}}\phi_T(\cdot; X_j, A_j),
$$
with $\phi_T$ as in \eqref{app:phi-T-def}. Substituting into \eqref{app:linearization-Cs1},
\begin{equation}
\hat T^{(-k)}(x) - T_\epsilon(x) = \frac{1}{N}\sum_{j \in \Omega^{(-k)}}\phi_T(x; X_j, A_j) + R_n^{T, (k)}(x),
\quad \|R_n^{T, (k)}\|_{C^{s-1}} = o_p(N^{-1/2}).
\label{app:linearization-Cs1-kernel}
\end{equation}

\emph{Step 5: embedding and rate translation.}
Under (P1), $\mathcal X$ is bounded, so the inclusion
$$
C^{s-1}(\mathcal X; \mathbb R^d) \hookrightarrow L_2(P_X^0; \mathbb R^d)
$$
is continuous with operator norm at most one. Hence
$\|R_n^{T, (k)}\|_{L_2(P_X^0)} = o_p(N^{-1/2})$, and since
$N = (K-1)n/K$ with $K$ fixed, this rate is $o_p(n^{-1/2})$. The expansion
\eqref{app:linearization-Cs1-kernel} is then the statement of condition
(A1) with $r_n^{T,(k)} := R_n^{T,(k)}$.

\emph{Step 6: kernel qualifiers.}
Mean zero follows from linearity and continuity of the derivative, since $\E[\delta_{(X,A)} - \nu] = 0$ in $\ell^\infty(B_s)$, so $\E[\phi_T(x; X, A)] = [\Psi^{\nu, \epsilon}]'(0)(x) = 0$. For the second moment, let $L$ be the operator norm of $[\Psi^{\nu, \epsilon}]'$, finite by continuity. Every $f \in B_s$ satisfies $\|f\|_\infty \le 1$, so $\|\delta_{(X_*, A_*)} - \nu\|_{\ell^\infty(B_s)} \le 2$ and $\|\phi_T(\cdot; X_*, A_*)\|_{C^{s-1}} \le 2L$. The embedding of Step 5 then gives $\sup_x \E\|\phi_T(x; X, A)\|^2 \le 4L^2 < \infty$.
\end{proof}

Two caveats apply. First, \citet{goldfeld2024limit} state central limit theorems for the empirical EOT potentials and map, but not the asymptotically linear expansion itself; the expansion is the linearization step inside the proof of the functional delta method, which condition (A1) requires and the CLT alone does not supply. Second, the embedding in Step 5 uses the bounded common support of (P1); without it, the H\"older-norm rate does not transfer to $L_2(P_X^0)$. The construction applies at any fixed $\epsilon > 0$, with the operator norm $L$ and the kernel $\phi_T$ depending on $\epsilon$; the case of vanishing $\epsilon$ is outside its scope. In the terminology of \Cref{cor:exact-map-cancellation}, the kernel \eqref{app:phi-T-def} represents the pathwise derivative of the map-valued functional $\Psi^\epsilon$; the corollary itself does not apply to the entropic map, because $T_\epsilon$ is not exact. In consequence, the central limit theorem applies with the limit centered at the pseudo-true parameter $\tau_{ATT} - b^{\dagger}$, as discussed in \Cref{sec:role-of-map}.

\end{document}